\newcommand{\update}[1]{\textcolor{black}{#1}}
\pgfplotsset{compat=1.18}
\newtheorem{definition}{Definition}
\newtheorem{proposition}{Proposition}
\newtheorem{lemma}[proposition]{Lemma}
\newtheorem{theorem}[proposition]{Theorem}
\newtheorem{corollary}[proposition]{Corollary}
\def\squareforqed{\hbox{\rlap{$\sqcap$}$\sqcup$}}
\def\qed{\ifmmode\squareforqed\else{\unskip\nobreak\hfil
\penalty50\hskip1em\null\nobreak\hfil\squareforqed
\parfillskip=0pt\finalhyphendemerits=0\endgraf}\fi}
\def\endenv{\ifmmode\;\else{\unskip\nobreak\hfil
\penalty50\hskip1em\null\nobreak\hfil\;
\parfillskip=0pt\finalhyphendemerits=0\endgraf}\fi}
\newenvironment{proof}{\noindent \textbf{{Proof~} }}{\hfill $\blacksquare$}
\newcounter{remark}
\newcounter{example}
\mathchardef\ordinarycolon\mathcode`\:
\def\vcentcolon{\mathrel{\mathop\ordinarycolon}}
\newmdenv[skipabove=7pt,
skipbelow=7pt,
backgroundcolor=darkblue!15,
innerleftmargin=5pt,
innerrightmargin=5pt,
innertopmargin=5pt,
leftmargin=0cm,
rightmargin=0cm,
innerbottommargin=5pt,
linewidth=1pt]{tBox}
\newmdenv[skipabove=7pt,
skipbelow=7pt,
backgroundcolor=blue2!25,
innerleftmargin=5pt,
innerrightmargin=5pt,
innertopmargin=5pt,
leftmargin=0cm,
rightmargin=0cm,
innerbottommargin=5pt,
linewidth=1pt]{dBox}
\newmdenv[skipabove=7pt,
skipbelow=7pt,
backgroundcolor=darkkblue!15,
innerleftmargin=5pt,
innerrightmargin=5pt,
innertopmargin=5pt,
leftmargin=0cm,
rightmargin=0cm,
innerbottommargin=5pt,
linewidth=1pt]{sBox}
\definecolor{darkblue}{RGB}{0,76,156}
\definecolor{darkkblue}{RGB}{0,0,153}
\definecolor{blue2}{RGB}{102,178,255}
\definecolor{darkred}{RGB}{195,0,0}
\newcommand{\nc}{\newcommand}
\nc{\rnc}{\renewcommand}
\nc{\lbar}[1]{\overline{#1}}
\nc{\bra}[1]{\langle#1|}
\nc{\ket}[1]{|#1\rangle}
\nc{\ketbra}[2]{|#1\rangle\!\langle#2|}
\nc{\braket}[2]{\langle#1|#2\rangle}
\nc{\proj}[1]{| #1\rangle\!\langle #1 |}
\nc{\avg}[1]{\langle#1\rangle}
\nc{\rank}{\operatorname{Rank}}
\nc{\smfrac}[2]{\mbox{$\frac{#1}{#2}$}}
\nc{\tr}{\operatorname{Tr}}
\nc{\ox}{\otimes}
\nc{\dg}{\dagger}
\nc{\dn}{\downarrow}
\nc{\cA}{{\cal A}}
\nc{\cB}{{\cal B}}
\nc{\cC}{{\cal C}}
\nc{\cD}{{\cal D}}
\nc{\cE}{{\cal E}}
\nc{\cF}{{\cal F}}
\nc{\cG}{{\cal G}}
\nc{\cH}{{\cal H}}
\nc{\cI}{{\cal I}}
\nc{\cJ}{{\cal J}}
\nc{\cK}{{\cal K}}
\nc{\cL}{{\cal L}}
\nc{\cM}{{\cal M}}
\nc{\cN}{{\cal N}}
\nc{\cO}{{\cal O}}
\nc{\cP}{{\cal P}}
\nc{\cQ}{{\cal Q}}
\nc{\cR}{{\cal R}}
\nc{\cS}{{\cal S}}
\nc{\cT}{{\cal T}}
\nc{\cU}{{\cal U}}
\nc{\cV}{{\cal V}}
\nc{\cX}{{\cal X}}
\nc{\cY}{{\cal Y}}
\nc{\cZ}{{\cal Z}}
\nc{\cW}{{\cal W}}
\nc{\csupp}{{\operatorname{csupp}}}
\nc{\qsupp}{{\operatorname{qsupp}}}
\nc{\var}{{\operatorname{var}}}
\nc{\rar}{\rightarrow}
\nc{\lrar}{\longrightarrow}
\nc{\polylog}{{\operatorname{polylog}}}
\nc{\wt}{{\operatorname{wt}}}
\nc{\av}[1]{{\left\langle {#1} \right\rangle}}
\nc{\supp}{{\operatorname{supp}}}
\nc{\argmin}{{\operatorname{argmin}}}
\def\x{\xi}
\nc{\RR}{{{\mathbb R}}}
\nc{\CC}{{{\mathbb C}}}
\nc{\FF}{{{\mathbb F}}}
\nc{\NN}{{{\mathbb N}}}
\nc{\ZZ}{{{\mathbb Z}}}
\nc{\PP}{{{\mathbb P}}}
\nc{\QQ}{{{\mathbb Q}}}
\nc{\UU}{{{\mathbb U}}}
\nc{\EE}{{{\mathbb E}}}
\nc{\id}{{\operatorname{id}}}
\nc{\CHSH}{{\operatorname{CHSH}}}
\nc{\be}{\begin{equation}}
\nc{\ee}{{\end{equation}}}
\nc{\bea}{\begin{eqnarray}}
\nc{\eea}{\end{eqnarray}}
\nc{\rU}{\mbox{U}}
\nc{\ob}[1]{#1}
\nc{\SEP}{{\text{\rm SEP}}}
\nc{\NS}{{\text{\rm NS}}}
\nc{\LOCC}{{\text{\rm LOCC}}}
\nc{\PPT}{{\text{\rm PPT}}}
\nc{\EXT}{{\text{\rm EXT}}}
\nc{\Sym}{{\operatorname{Sym}}}
\nc{\ERLO}{{E_{\text{r,LO}}}}
\nc{\ERLOCC}{{E_{\text{r,LOCC}}}}
\nc{\ERPPT}{{E_{\text{r,PPT}}}}
\nc{\ERLOCCinfty}{{E^{\infty}_{\text{r,LOCC}}}}
\nc{\Aram}{{\operatorname{\sf A}}}
\newcommand{\Choi}{Choi-Jamio\l{}kowski }
\def\grd@save@target#1{%
  \def\grd@target{#1}}
\def\grd@save@start#1{%
  \def\grd@start{#1}}
\tikzset{
  grid with coordinates/.style={
    to path={%
      \pgfextra{%
        \edef\grd@@target{(\tikztotarget)}%
        \tikz@scan@one@point\grd@save@target\grd@@target\relax
        \edef\grd@@start{(\tikztostart)}%
        \tikz@scan@one@point\grd@save@start\grd@@start\relax
        \draw[minor help lines,magenta] (\tikztostart) grid (\tikztotarget);
        \draw[major help lines] (\tikztostart) grid (\tikztotarget);
        \grd@start
        \pgfmathsetmacro{\grd@xa}{\the\pgf@x/1cm}
        \pgfmathsetmacro{\grd@ya}{\the\pgf@y/1cm}
        \grd@target
        \pgfmathsetmacro{\grd@xb}{\the\pgf@x/1cm}
        \pgfmathsetmacro{\grd@yb}{\the\pgf@y/1cm}
        \pgfmathsetmacro{\grd@xc}{\grd@xa + \pgfkeysvalueof{/tikz/grid with coordinates/major step}}
        \pgfmathsetmacro{\grd@yc}{\grd@ya + \pgfkeysvalueof{/tikz/grid with coordinates/major step}}
        \foreach \x in {\grd@xa,\grd@xc,...,\grd@xb}
        \node[anchor=north] at (\x,\grd@ya) {\pgfmathprintnumber{\x}};
        \foreach \y in {\grd@ya,\grd@yc,...,\grd@yb}
        \node[anchor=east] at (\grd@xa,\y) {\pgfmathprintnumber{\y}};
      }
    }
  },
  minor help lines/.style={
    help lines,
    step=\pgfkeysvalueof{/tikz/grid with coordinates/minor step}
  },
  major help lines/.style={
    help lines,
    line width=\pgfkeysvalueof{/tikz/grid with coordinates/major line width},
    step=\pgfkeysvalueof{/tikz/grid with coordinates/major step}
  },
  grid with coordinates/.cd,
  minor step/.initial=.2,
  major step/.initial=1,
  major line width/.initial=2pt,
}
\def\problem@s{}
\newcounter{problems@cnt}
\newcommand{\allproblems}{\problem@s}
\definecolor{beamer}{rgb}{0.2,0.2,0.7}
\definecolor{colorone}{rgb}{1,0.36,0.03}
\definecolor{colortwo}{rgb}{0.4,0.77,0.17}
\definecolor{colorthree}{rgb}{0.01,0.51,0.93}
\definecolor{colorfour}{rgb}{0.47,0.26,0.58}
\definecolor{colorfive}{rgb}{0.12,0.55,0.16}
\nc{\st}{\text{subject to} \ }
\nc{\supre}{\text{supremum} \ }
\nc{\sdp}{\text{sdp}}
\nc{\ith}[1]{{#1}^\mathrm{th}}
\begin{document}
\title{Power and limitations of distributed quantum state purification}
 
\author{Benchi Zhao}
\affiliation{QICI Quantum Information and Computation Initiative, School of Computing and Data Science, The University of Hong Kong, Pokfulam Road, Hong Kong}

\author{Yu-Ao Chen}
\affiliation{Thrust of Artificial Intelligence, Information Hub, The Hong Kong University of Science and Technology (Guangzhou), Guangdong 511453, China}

\author{Xuanqiang Zhao}
\affiliation{QICI Quantum Information and Computation Initiative, School of Computing and Data Science, The University of Hong Kong, Pokfulam Road, Hong Kong}

\author{Chengkai Zhu}
\affiliation{Thrust of Artificial Intelligence, Information Hub, The Hong Kong University of Science and Technology (Guangzhou), Guangdong 511453, China}

\author{Giulio Chiribella}
\email{giulio@cs.hku.hk}
\affiliation{QICI Quantum Information and Computation Initiative, School of Computing and Data Science, The University of Hong Kong, Pokfulam Road, Hong Kong}
\affiliation{Quantum Group, Department of Computer Science, University of Oxford, Wolfson Building, Parks Road, Oxford, OX1 3QD, United Kingdom}
\affiliation{Perimeter Institute for Theoretical Physics, 31 Caroline Street North, Waterloo, Ontario, Canada}

\author{Xin Wang}
\email{felixxinwang@hkust-gz.edu.cn}
\affiliation{Thrust of Artificial Intelligence, Information Hub, The Hong Kong University of Science and Technology (Guangzhou), Guangdong 511453, China}

\begin{abstract}
\update{Quantum state purification protocols, which mitigate noise by converting multiple copies of noisy quantum states into fewer copies with a lower noise level,   have applications in quantum communication and computation with imperfect devices. Here, we systematically study the task of state purification in distributed quantum systems, demanding that purification be achieved by local operations and classical communication (LOCC). 
We prove that, in the presence of depolarizing noise, no LOCC purification protocol starting from two copies can work blindly for all the states in three important sets: the set of all pure two-qubit states, the set of all two-qubit maximally entangled states, and the Bell basis. 
In stark contrast, we show that a targeted, single-state purification is always achievable in the presence of depolarizing noise, and we provide an explicit analytical LOCC protocol for every given two-qubit state.  For arbitrary finite sets of pure states and arbitrary noise profiles, we develop an optimization-based algorithm that systematically designs LOCC purification protocols, and we demonstrate it through concrete examples. Overall, our results identify both fundamental limitations and practical noise reduction strategies for distributed quantum information processing.}
\end{abstract}

\maketitle

\textit{Introduction.}---
Quantum computers promise major speedups in certain computational problems, and are expected to impact a broad range of areas, including cryptography~\cite{shor1999polynomial, rivest1978method}, machine learning~\cite{biamonte2017quantum, lloyd2014quantum}, and quantum chemistry~\cite{feynman2018simulating, cao2019quantum}.  However, the number of qubits in existing hardware implementations is still limited, thereby preventing the solution of practically relevant problems. For example, the largest superconducting processors have several hundreds of physical qubits~\cite{google2025quantum}, while solving real-world problems such as factoring RSA numbers requires millions of qubits~\cite{gidney2025factor}. 

Distributed quantum computing (DQC)~\cite{caleffi2024distributed,peng2020simulating,mitarai2021constructing,piveteau2023circuit} offers a promising route to scale up the size of quantum computations. The idea is to combine the power of multiple quantum processors, by letting them perform joint computations via local operations and classical communication (LOCC)~\cite{chitambar2014everything}, which is essential for entanglement distillation~\cite{bennett1996purification,Devetak2003a,Fang2019c,zhao2021practical} and distribution~\cite{Briegel1998,Dur1999,Azuma2015a,Chen2024d} in quantum networks. While individual processors have limited capacity, the total power of the overall network can facilitate the implementation of advanced quantum algorithms~\cite{fujii2022deep,zhou2023qaoa}.

A remaining challenge, however, is the presence of noise and decoherence, which limits the performance of all realistic quantum computing architectures, including DQC. In practice, quantum systems inevitably interact with their environment, leading to deviations from the ideal scenario of pure states and unitary dynamics. Such deviations impair the computation and limit the accuracy.
To address this problem, several approaches have been developed, including quantum error correction~\cite{knill1997theory,calderbank1996good,knill2000theory,chao2018quantum} and error mitigation~\cite{temme2017error,cai2023quantum,endo2018practical,zhao2024retrieving,zhao2023information}.

These approaches are designed to work in scenarios where a single copy of the noisy quantum state is available. When multiple copies are available, instead, it is sometimes possible to remove part of the noise through quantum state purification~\cite{fiuravsek2004optimal, yao2025protocols, cirac1999optimal, barenco1997stabilization, childs2025streaming}. Formally, a quantum state purification scenario is modelled as follows: a noisy process,  described by a completely positive trace preserving (CPTP) map $\mathcal{N}$ (also known as a {\em quantum channel}),  degrades a pure state $\psi = \proj{\psi}$ into a mixed state $\mathcal{N}(\psi)$~\cite{nielsen2010quantum}.
Then, the purification protocol is applied to approximately restore the initial pure state; mathematically, the protocol is described by a completely positive trace non-increasing (CPTN) map $\mathcal{E}$ (also known as a {\em quantum operation}) that probabilistically yields a higher-fidelity state $\sigma_\psi = \mathcal{E}(\mathcal{N}(\psi)^{\otimes n})/\tr[\mathcal{E}(\mathcal{N}(\psi)^{\otimes n})]$ from $n$ noisy copies of the noisy state $\mathcal{N}(\psi)$.  The condition for a purification protocol to be useful is that the final state $\sigma_\psi$ has a higher fidelity to the initial state $\psi$ than the noisy state $
\mathcal{N} (\psi)$; in formula,  $F(\sigma_\psi, \psi) \geq F(\mathcal{N}(\psi), \psi)$, where $F(\rho,\sigma): =  \|\sqrt \rho  \sqrt \sigma \|_1^2$ denotes quantum fidelity. \update{If the target state $\psi$ is known, the purification is called \textit{targeted state purification}, e.g., entanglement distillation~\cite{bennett1996purification,Devetak2003a,Fang2019c,zhao2021practical} and magic state distillation~\cite{itogawa2025efficient,bravyi2012magic,sales2025experimental}. For a more general setting, if the target state $\psi$ is unknown and is drawn from a state set $\cS$, the purification is called \textit{blind state purification}.}

Most existing purification protocols have considered a global setting, in which arbitrary joint operations are allowed on the $n$-copy state~\cite{fiuravsek2004optimal, yao2025protocols, cirac1999optimal, barenco1997stabilization, childs2025streaming} (see the Supplemental Material~\ref{app:global_purification_protocols} for more discussion).  
In DQC, it is important to consider a scenario where the state $\psi$ and its noisy version $\mathcal{N}(\psi)$ are distributed to multiple processors. As illustrated in FIG.~\ref{fig:setting}, these processors have access only to the assigned subsystems and are restricted to performing LOCC.
Until now, however, this scenario has remained largely unexplored. Is it possible to find distributed purification protocols that purify quantum states using only LOCC? Is quantum entanglement an essential resource for purifying noisy quantum states?

\begin{figure}[h]
\centering
\includegraphics[width=\linewidth]{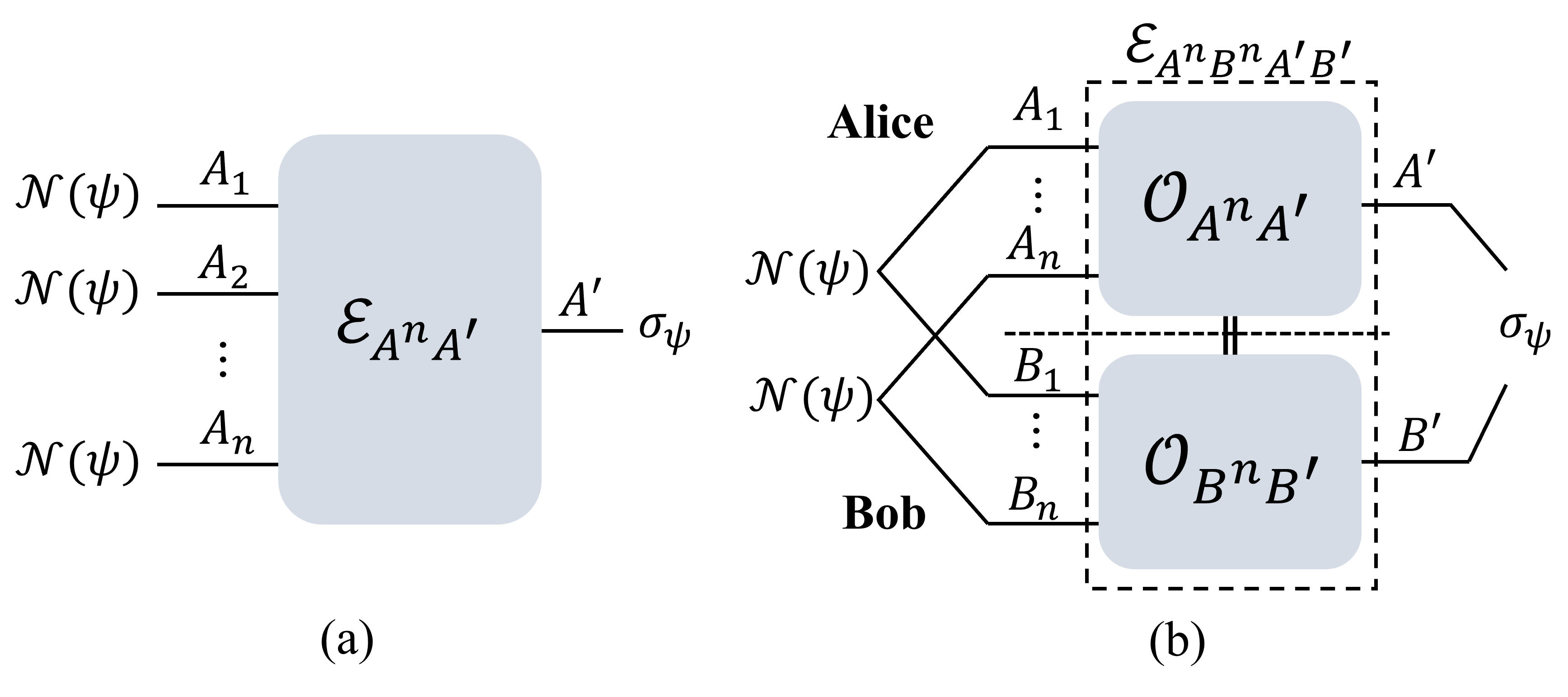}
    \caption{Diagram of $n\rightarrow1$ purification protocol. (a) Global purification protocol with $\cE_{A^nA'}\in {\rm CPTN}$. (b) Distributed purification protocol with $\cE_{A^nB^nA'B'}\in {\rm LOCC}\cap {\rm CPTN}$.}
    \label{fig:setting}
\end{figure}

In this work, we investigate the feasibility of distributed \update{blind state purification} under depolarizing noise, which is a canonical quantum noise process defined as
$\cN^{\gamma}(\psi) = (1-\gamma)\psi + \gamma\frac{I}{d}$,
where $\gamma$ denotes the noise level, $I$ is the identity matrix, and $d$ is the Hilbert space dimension.
\update{Specifically, we consider the scenario where an unknown state $\psi$ is drawn from the set $\cS_P$ of all pure states or from the set $\cS_B$ of four Bell states, and the noisy state $\mathcal{N}^\gamma(\psi)$ is shared between remote parties. We show that no nontrivial $2 \rightarrow 1$ LOCC purification protocol exists in this setting for either set.}
Based on the no-go theorem for the Bell states set, similarly, we arrive at the no-go theorem for the maximally entangled states set $\cS_{MES}$.
Importantly, these $2 \rightarrow 1$ no-go results do not extend to the general $n \rightarrow 1$ setting \update{since, given many copies of the noisy state $\cN^\gamma(\psi)$, the remote parties could implement tomography to determine the state}.
Then, the problem reduces to purifying a known quantum state with LOCC, \update{i.e., targeted state purification. Notably, when the target state is the Bell state, the task reduces to entanglement distillation~\cite{deutsch1996quantum, bennett1996purification, zang2025no, zhao2021practical}. For an arbitrary given state, we construct an analytical $2\rightarrow 1$ LOCC purification protocol}. For the state sets with finite size, we further introduce an optimization-based algorithm for designing LOCC purification protocols. These results establish fundamental limitations and constructive methods for distributed state purification, offering a new avenue for reducing noise in distributed quantum architectures.

\textit{No-go theorem for distributed purification protocol.}--- 
We start with some notations. Denote a pure state as $\ket{\psi}$, and the corresponding density matrix is denoted as $\psi=\proj{\psi}$. The Bell states are defined as
$\ket{\Phi}^\pm=\frac{1}{\sqrt{2}}(\ket{00}\pm\ket{11})$, and $\ket{\Psi}^\pm=\frac{1}{\sqrt{2}}(\ket{01}\pm\ket{10})$, with corresponding Bell density matrices $\Phi^\pm, \Psi^\pm$, respectively. The distributed $n\rightarrow 1$ purification protocol is denoted as $\cE_{A^nB^nA'B'}$, where $A^nB^n=A_1B_1A_2B_2\cdots A_nB_n$ is the input system, and $A'B'$ is the output system. In this work, we only consider the case of  2-qubit systems. 

Suppose we have a noisy state $\cN(\psi)$, where $\psi$ is some pure state and $\cN$ is a given noise channel. Let $\cE$ be an $n\rightarrow 1$ CPTN-LOCC map. Then the output state is
\begin{equation}
    \sigma_\psi\coloneqq\frac{\Tilde{\sigma}_\psi}{p_\psi} = \frac{\cE(\cN(\psi)^{\ox n})}{\tr[\cE(\cN(\psi)^{\ox n})]},
\end{equation}
where $\Tilde{\sigma}_\psi=\cE(\cN(\psi)^{\ox n})$ stands for the unnormalized output state, and $p_\psi=\tr[\Tilde{\sigma}_\psi]$ is the success probability of getting the state $\sigma_\psi$. If the output state $\sigma_\psi$ is closer to the target state $\psi$, i.e., $F(\sigma_\psi, \psi)\ge F(\cN(\psi), \psi)$, then the CPTN-LOCC map $\cE$ is a \textit{distributed purification protocol} for the pure state $\psi$ over noise $\cN$. If the state $\psi$ is sampled from a set of bipartite pure states $\cS$, and the map $\cE$ can increase the fidelity for every state $\psi$ in the set $\cS$, then we say the CPTN-LOCC map $\cE$ is a \textit{distributed purification protocol} for the set $\cS$ over noise $\cN$. The formal definition is given in the following. 

\begin{definition}{\rm \textbf{(Distributed purification protocol)}} \label{def:universal_purification_protocol}
    An $n\rightarrow 1$ CPTN map $\cE$ is a distributed purification protocol for the state set $\cS$ over a noisy channel $\cN$ if the protocol $\cE$ can be realized by LOCC and increases the fidelity for every state in the set $\cS$, i.e.,
    \begin{align}\label{eq:my_purification_definition}
        F(\psi, \sigma_\psi) \ge F(\psi, \cN(\psi)),\; \forall \,\psi\in \cS.
    \end{align}
\end{definition}
There is a special case called \textit{trivial distributed purification protocol}, which refers to the protocol that preserves the fidelity for every state in the set, i.e., $F(\sigma_\psi, \psi) = F(\cN(\psi), \psi),\; \forall \,\psi\in \cS$. For example, the partial trace channel is a trivial purification protocol. On the contrary, the \textit{nontrivial purification protocol} is defined as a purification protocol that can strictly increase the fidelity over some states, which is $F(\sigma_\psi, \psi) > F(\cN(\psi), \psi),\; \exists \,\psi\in \cS$. Obviously, for a given state set $\cS$ and a noisy channel $\cN$, the $n\rightarrow 1$ distributed purification protocol $\cE$ is not unique, and different protocols have different performances. To fairly compare the purification protocols, we define the \textit{average purification fidelity} to quantify the performance of different purification protocols, which only takes the successful cases into account. Specifically, sample quantum states from the state set uniformly, apply the purification protocol, rule out all the failure cases, and the average fidelity of the successful cases can be computed by $\bar{F}(n;\cS,\cN,\cE) = \frac{1}{|\cS|}\sum_{\psi\in\cS} F(\psi, \frac{\Tilde{\sigma}_\psi}{\bar{P}})$, where $\Tilde{\sigma}_\psi$ is the unnormalized purified state, and $\bar{P}$ is the average successful probability. Clearly, the performance of a purification protocol depends on the average successful probability $\bar P$. Thus, with a given average success probability $\bar P$, we define the \textit{optimal distributed purification protocol} $\cE^*$ as a protocol that achieves the \textit{optimal average purification fidelity}, which is
\begin{align}
    \bar{F}_{\rm LOCC}^*(n,\bar{P};\cS,\cN) &= \bar{F}(n;\cS,\cN, \cE^*) \nonumber\\
    = \max_{\cE}   
    \big\{\bar{F}(n;\cS,&\cN,\cE)\,\big| \,\cE\in{\rm LOCC} \cap {\rm CPTN}, \nonumber\\
    &\frac
    {1}{|S|} \sum_{\psi\in\cS} \tr[\cE(\cN(\psi)^{\ox n})]=\bar{P}\big\}.
\end{align}
The definition of average purification fidelity is also valid for the continuous state set by replacing the summation with the integral.
Details can be found in Supplemental Material~\ref{app:formal_definition}.

From the definition, the performance of distributed purification protocols depends on the state set. \update{We firstly consider the universal set, which contains all 2-qubit pure states $\cS_P$}. Surprisingly, our first main result reveals that there does not exist a $2\rightarrow 1$ nontrivial distributed purification protocol.

\begin{theorem}\label{theo:no-go_universal}{\rm \textbf{(No-go theorem for the pure state set)}}
    For the depolarizing noise channel $\cN_{AB}^\gamma$ with noise level $\gamma\in(0,1)$, there is no nontrivial $2\rightarrow 1$ LOCC purification protocol for the set $\cS_P$ that contains all 2-qubit pure states.
\end{theorem}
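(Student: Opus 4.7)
The plan is to argue by contradiction and symmetrization. Suppose a nontrivial LOCC CPTN protocol $\cE$ purifies $\cS_P$ against $\cN^\gamma$. Since the depolarizing channel is covariant, satisfying $\cN^\gamma\circ\mathrm{Ad}(U\otimes V)=\mathrm{Ad}(U\otimes V)\circ\cN^\gamma$, and $\cS_P$ is $U(2)\times U(2)$-invariant, the first step is to twirl $\cE$ over local unitaries:
\begin{equation*}
\bar\cE(\rho):=\int dU\,dV\,(U^\dagger\otimes V^\dagger)_{A'B'}\,\cE\bigl((U\otimes V)^{\otimes 2}_{AB}\,\rho\,(U^\dagger\otimes V^\dagger)^{\otimes 2}_{AB}\bigr)(U\otimes V)_{A'B'}.
\end{equation*}
LOCC and CPTN are both preserved by local-unitary pre/post-processing and convex combinations, so $\bar\cE$ is again LOCC and CPTN. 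Writing $\psi'=(U\otimes V)\psi(U^\dagger\otimes V^\dagger)$, a short computation using covariance gives
\begin{equation*}
F(\bar\sigma_\psi,\psi)=\frac{\int F(\sigma_{\psi'},\psi')\,p_{\psi'}\,dU\,dV}{\int p_{\psi'}\,dU\,dV}\ge 1-\tfrac{3\gamma}{4},
\end{equation*}
so $\bar\cE$ inherits the purification property, and by continuity of $\psi'\mapsto F(\sigma_{\psi'},\psi')$ on the orbit of any $\psi_0$ at which $\cE$ is strictly better than trivial, that strict improvement transfers to $\bar\cE$ throughout the whole orbit of $\psi_0$. Hence one may assume $\cE$ itself is $U\otimes V$-covariant.

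Next I would exploit the orbit structure of $U(2)\times U(2)$ on pure states of $\CC^2\otimes\CC^2$, which are parametrized by the larger Schmidt coefficient $\lambda\in[1/2,1]$, with $\lambda=1$ the product orbit and $\lambda=1/2$ the maximally entangled orbit. Covariance forces $F(\bar\sigma_\psi,\psi)$ to depend only on $\lambda$, so nontriviality requires strict improvement at some $\lambda_0$. I would then relax $\mathrm{LOCC}\subseteq\mathrm{SEP}$, since a no-go for SEP entails one for LOCC. Via Schur-Weyl duality, Alice's and Bob's two-copy input spaces each decompose as $(\CC^2)^{\otimes 2}=V_1\oplus V_0$ (triplet plus singlet), and the $U\otimes V$-covariant SEP maps from two copies to one copy form a finite-dimensional convex set whose extreme rays are enumerable intertwiners. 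Evaluating the purification inequality at two representative states, one from each endpoint orbit --- $\psi=\ket{00}$ for $\lambda=1$ and $\psi=\ket{\Phi^+}$ for $\lambda=1/2$ --- the constraint $F\ge 1-\tfrac{3\gamma}{4}$ becomes, for each, an explicit linear inequality in the protocol parameters. Combining the two endpoint constraints with the CPTN normalization $\sum_k A_k^\dagger A_k\otimes B_k^\dagger B_k\le I$ (and using $\gamma\in(0,1)$), I expect the only solution to be the convex hull of the ``discard a noisy copy'' protocols, all of which yield $F=1-\tfrac{3\gamma}{4}$ on every orbit, contradicting strict improvement at $\lambda_0$.

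The hard part will be the structural reduction. LOCC lacks a clean SDP or algebraic characterization, so the SEP relaxation is indispensable, and even there one must carefully enumerate the covariant Kraus families and verify that the two endpoint inequalities really force equality together rather than leaving a positive-dimensional slack. A possible failure mode is that some covariant SEP protocol with no LOCC realization satisfies both endpoint constraints strictly; in that case I would fall back on writing the distributed purification task as a Choi-matrix semidefinite program with LOCC replaced by PPT (another tractable superset of LOCC) and use the $U\otimes V$ symmetry to collapse it to a handful of scalar unknowns depending only on $\gamma$, closing the gap by direct computation.
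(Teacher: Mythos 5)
Your opening move (twirling the protocol over local unitaries, using covariance of the depolarizing channel, and passing to a symmetry-reduced description via Schur--Weyl) is the same symmetrization that the paper performs at the level of the Choi matrix, and that part of your argument is sound, including the observation that the post-selected fidelity of the twirled map is a probability-weighted average so the per-state constraints and the strict improvement (on a set of positive Haar measure) survive averaging. The gap is in how you propose to finish. First, your primary relaxation is SEP, but SEP has no tractable characterization (membership is NP-hard), and the claim that the $U\otimes V$-covariant SEP instruments from two copies to one copy have ``enumerable extreme rays'' that you can list and test is itself a substantial unproven structural statement; this is precisely why the paper relaxes to PPT instead, where everything becomes a genuine SDP. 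Second, and more seriously, your plan checks the purification inequality only at the two endpoint orbits $\lambda=1$ and $\lambda=1/2$ and then ``expects'' that these two linear constraints plus CPTN normalization pin the protocol to the convex hull of discard maps. That is the entire content of the theorem and it is asserted, not proved: the feasible set of covariant CPTN maps satisfying two linear inequalities is generically a high-dimensional convex body, and the nontrivial improvement you must exclude may occur at an interior Schmidt parameter $\lambda_0\in(1/2,1)$, so endpoint constraints alone do not obviously rule it out unless you actually establish the strong rigidity claim you are hoping for.

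By contrast, the paper's proof carries out exactly what you relegate to a fallback: it relaxes LOCC to PPT (constraint $J_{\cE}^{T_{A^nA'}}\ge 0$ on the Choi matrix), twirls, invokes mixed Schur--Weyl duality to reduce the twirled Choi matrix to a block form with variables $H_0\in\CC^{4\times 4}$, $H_1,H_2\in\CC^{2\times 2}$ and a scalar $h_3$, and then produces an explicit dual-feasible certificate showing that the Haar-averaged fidelity gain of \emph{every} covariant PPT CPTN map is at most zero (in fact it first argues the $H_1,H_2,h_3$ blocks can be set to zero because their contributions are non-positive, then solves a small SDP in $H_0$ by exhibiting dual variables with objective value $0$). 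Combined with the per-state constraints $\tr[J_{\cE}M_\psi]\ge 0$, this contradicts the strict positivity of the averaged gain implied by nontriviality. Until you either execute that PPT/SDP computation (or an equivalent certificate) or rigorously prove your endpoint-rigidity claim for covariant SEP maps, your argument is an outline rather than a proof of the no-go theorem.
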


To prove Theorem~\ref{theo:no-go_universal}, we first relax LOCC to positive partial transpose (PPT) operations.
Then we assume there exists a non-trivial purification protocol $\cE$, \update{which must satisfy the linear constraints shown in Eq.~(C3) in the Supplemental Material~\ref{app:proof_universal_no-go}.}
By utilizing the symmetry property of the state set $\cS_P$ and Schur's lemma, we reform and simplify the linear constraints, and the problem becomes an SDP problem. We then prove that the optimal solution to the SDP corresponds to the trivial purification protocol, implying there is no nontrivial distributed purification protocol. The proof is shown in the Supplemental Material~\ref{app:proof_universal_no-go}.

Here, we would like to emphasize that the $2\rightarrow 1$ no-go theorem for the pure state set $\cS_P$ does not imply the $n\rightarrow 1$ no-go theorem. For a straightforward strategy, if two parties share infinitely many copies of the noisy state $\cN^\gamma(\psi)$, where $\psi$ is an arbitrary pure state, they could implement tomography to obtain the density matrix of the noisy state. Since the noisy channel $\cN^\gamma$ is global depolarizing noise, $\cN^\gamma(\psi) = (1-\gamma)\psi + \gamma\frac{I}{4}$, the two parties can deduce the pure state $\psi$ from the density matrix of the noisy state. Now the problem reduces to purifying a known quantum state with a distributed purification protocol. Such a problem will be demonstrated in the next section.

The proof of Theorem~\ref{theo:no-go_universal} is highly dependent on the symmetry of the pure state set $\cS_P$. It is natural to ask whether a nontrivial distributed protocol exists if we consider a less symmetric state set. Here we take the Bell state set $\cS_{B}$ into consideration, i.e., $\cS_B = \{\Phi^+, \Phi^-, \Psi^+,\Psi^-\}$. In DQC, operations are applied to the quantum systems locally, and thus it is natural to assume the noise is introduced locally, \update{e.g., a bi-local depolarizing noise $\cN_{AB}^{\gamma_1, \gamma_2} = \cN_{A}^{\gamma_1} \ox \cN_{B}^{\gamma_2}$, which is equivalent to a global noise $\cN_{AB}^{\gamma'}$ with parameter $\gamma' = 1-(1-\gamma_1)(1-\gamma_2)$ shown in Supplemental Material~\ref{app:equ_of_global_local}. Notably, for $\cS_B$ and $\cN_{AB}^{\gamma_1,\gamma_2}$, we show that a non-trivial purification protocol still does not exist.}

\begin{theorem} {\rm \textbf{(No-go theorem for four Bell states)}}\label{theo:no-go_4_bell}
    For the bi-local depolarizing noise channel $\cN_{AB}^{\gamma_1, \gamma_2}$ with noise levels $\gamma_1$ and $\gamma_2$, there is no nontrivial $2\rightarrow 1$ LOCC purification protocol for the set $\cS_B$ that contains 4 Bell states.
\end{theorem}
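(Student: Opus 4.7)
The plan is to adapt the SDP-based argument used for Theorem~\ref{theo:no-go_universal} to the finite, highly symmetric set $\cS_B$, exploiting the Bell states' group structure under local Pauli corrections.

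First, I would invoke the equivalence from Supplemental Material~\ref{app:equ_of_global_local}: on any maximally entangled state, the bi-local channel $\cN_{AB}^{\gamma_1,\gamma_2}$ acts identically to the global depolarizing channel $\cN_{AB}^{\gamma'}$ with $\gamma'=1-(1-\gamma_1)(1-\gamma_2)$. Since every $\Phi\in\cS_B$ is maximally entangled, each input state becomes $\cN^{\gamma'}(\Phi)^{\otimes 2}$, a Bell-diagonal state with leading eigenvalue $1-3\gamma'/4$. Following the same relaxation as in Theorem~\ref{theo:no-go_universal}, I would then weaken the LOCC constraint on $\cE$ to the PPT constraint across the $A_1A_2A'|B_1B_2B'$ cut; since $\text{LOCC}\subseteq\text{PPT}$, ruling out PPT implies the LOCC no-go.

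The crucial step is a symmetry reduction. The four Bell states form a single orbit of the abelian group $G=\{I,X,Y,Z\}$ acting as $\Phi\mapsto(I\otimes P)\Phi(I\otimes P)$, and the global depolarizing channel commutes with all local unitaries. Twirling $\cE$ by $(I\otimes P)_{B_1}\otimes(I\otimes P)_{B_2}$ on the input and by $(I\otimes P)_{B'}$ on the output yields a map $\bar\cE$ which is still CP, trace non-increasing, and PPT; by a direct calculation the resulting per-Bell fidelity is a success-probability-weighted average of the original per-Bell fidelities, so non-triviality is preserved under twirling. Crucially, $\bar\cE$ is $G$-covariant, which forces all four per-Bell success probabilities and fidelities to coincide. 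Hence it suffices to rule out any $G$-covariant PPT CPTN map $\bar\cE$ whose output on $\cN^{\gamma'}(\Phi^+)^{\otimes 2}$ has fidelity with $\Phi^+$ strictly greater than $1-3\gamma'/4$.

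By Schur's lemma, $G$-covariance pins the Choi operator $J_{\bar\cE}$ into the commutant of the induced Pauli representation on $B_1B_2B'$, decomposing it into a small finite-parameter family. The fidelity objective, the PPT constraint, the CP condition, and the trace non-increasing bound all translate to linear or semidefinite constraints on these parameters, yielding a modest SDP that I would solve either in closed form or via a dual witness, with the aim of showing that every feasible point achieves exactly the trivial value $1-3\gamma'/4$. The main obstacle is expected to be this final step: although the covariance group is small, the PPT condition couples the Alice-side degrees of freedom to the Schur-reduced Bob-side blocks nontrivially, and ruling out every potential direction of improvement simultaneously will likely require a carefully designed dual feasible pair, in the same spirit as the collapse mechanism used for the pure-state set in Theorem~\ref{theo:no-go_universal}.
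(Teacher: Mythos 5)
Your overall route is the same as the paper's: use the local--global depolarizing equivalence for maximally entangled states, relax LOCC to PPT, formulate the optimal average purification fidelity as the SDP of Proposition~\ref{prop:SDP_for_discrete_set}, and show its optimum equals the no-purification value $1-\tfrac34\gamma'$ (the lower bound coming from the partial-trace protocol, the upper bound from SDP duality). The Pauli-twirling/covariance reduction you insert is a legitimate extra step the paper does not take --- it would force equal per-Bell fidelities and shrink the variable space --- though note that twirling only Bob's registers leaves Alice's $8$-dimensional factor completely unconstrained, so the resulting commutant is far from the ``small finite-parameter family'' you describe, and the coupling of the PPT condition across the cut is not materially simplified.

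The genuine gap is the step you yourself flag as the main obstacle: you never produce the dual feasible point (or closed-form solution of the reduced SDP) certifying $\bar{F}^*_{\rm PPT}(2,\bar P;\cS_B,\cN^{\gamma'}_{AB})\le 1-\tfrac34\gamma'$, and without that certificate nothing rules out a nontrivial protocol --- the lower-bound direction and the symmetry reduction alone are consistent with the optimum being strictly larger than the trivial value. This certificate is precisely the content of the paper's proof: it takes $K_{A^nB^n}=0$, $x=-\tfrac{1}{\bar P}\bigl(1-\tfrac34\gamma'\bigr)$, all $y_i=0$, and an explicit $L_{A^nB^nA'B'}$ built from products of Bell projectors with coefficient $\tfrac{\gamma'(1-\gamma')(1-3\gamma'/4)}{8\bar P}$, and then verifies feasibility by checking that the largest nonzero eigenvalue of the dual constraint operator is $\tfrac{\gamma'(\gamma'-1)}{16\bar P}\le 0$, giving dual value exactly $1-\tfrac34\gamma'$. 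Until you either exhibit such a witness (possibly in your covariant-reduced coordinates) or solve the reduced SDP exactly for all $\gamma'\in(0,1)$ and all $\bar P$, your argument is a plan rather than a proof of the theorem.
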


The detailed proof is shown in Supplemental Material~\ref{app:proof_of_no-go_4_bell}. \update{By Theorem~\ref{theo:no-go_4_bell}, there does not exist a $2\rightarrow 1$ distributed blind purification protocol $\cE$ that can increase the fidelity for four Bell states simultaneously. However, if we only care about one Bell state rather than four, then the problem reduces to entanglement distillation, and it is possible to increase the fidelity with a purification protocol.}

If we apply the local unitaries $W_{AB}
=U_A\ox V_B$ to the Bell states set, we could obtain a new set $\cS_B^{W_{AB}}$. Since the local unitary transformed set is equivalent to the Bell states set, we could arrive at the following corollary. 
\begin{corollary}\label{coro:no-go_for_4_orthogonal}
    For the bi-local depolarizing noise channel $\cN_{AB}^{\gamma_1,\gamma_2}$ with noise levels $\gamma_1$ and $\gamma_2$, there is no nontrivial $2\rightarrow 1$ LOCC purification protocol for the set $\cS_B^{W_{AB}}$ that contains 4 arbitrary orthogonal maximally entangled states.
\end{corollary}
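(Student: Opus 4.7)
The plan is to reduce Corollary~\ref{coro:no-go_for_4_orthogonal} directly to Theorem~\ref{theo:no-go_4_bell} by a local-unitary covariance argument. Two ingredients do all the work. First, each single-qubit depolarizing channel is unitarily covariant, so the bi-local channel satisfies
\begin{equation*}
\cN_{AB}^{\gamma_1,\gamma_2}\!\left(W_{AB}\rho W_{AB}^{\dagger}\right)=W_{AB}\,\cN_{AB}^{\gamma_1,\gamma_2}(\rho)\,W_{AB}^{\dagger}
\end{equation*}
for any $W_{AB}=U_A\otimes V_B$. Second, the LOCC class is closed under pre- and post-composition with operations that factor across the Alice/Bob cut, since local unitaries on each party are already LOCC and LOCC is closed under sequential composition.

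With these in hand, I would argue by contradiction. Assume $\cE$ is a nontrivial $2\to 1$ LOCC purification protocol for $\cS_B^{W_{AB}}$ over the noise $\cN_{AB}^{\gamma_1,\gamma_2}$, and define the twisted map
\begin{equation*}
\cE'(\rho)\;\vcentcolon=\;W_{AB}^{\dagger}\,\cE\!\left((W_{AB}\otimes W_{AB})\,\rho\,(W_{AB}\otimes W_{AB})^{\dagger}\right)W_{AB},
\end{equation*}
where $W_{AB}\otimes W_{AB}$ denotes one copy of $W_{AB}$ acting on each input pair $A_iB_i$. The pre-processing acts as $U_A^{\otimes 2}$ on Alice's systems $A_1A_2$ and $V_B^{\otimes 2}$ on Bob's systems $B_1B_2$, which is implementable with zero communication; the post-processing on $A'B'$ is likewise a product unitary. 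Thus $\cE'$ is CPTN and LOCC.

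Finally, for any $\phi\in\cS_B$, set $\psi=W_{AB}\phi W_{AB}^{\dagger}\in\cS_B^{W_{AB}}$. Using covariance, the pre-processing turns the two-copy noisy input into $(W_{AB}\otimes W_{AB})\,\cN_{AB}^{\gamma_1,\gamma_2}(\phi)^{\otimes 2}\,(W_{AB}\otimes W_{AB})^{\dagger}=\cN_{AB}^{\gamma_1,\gamma_2}(\psi)^{\otimes 2}$, so $\cE'(\cN_{AB}^{\gamma_1,\gamma_2}(\phi)^{\otimes 2})=W_{AB}^{\dagger}\,\cE(\cN_{AB}^{\gamma_1,\gamma_2}(\psi)^{\otimes 2})\,W_{AB}$. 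Since fidelity and trace are invariant under joint unitary conjugation of their arguments, the purification inequality $F(\sigma_\psi,\psi)\geq F(\cN_{AB}^{\gamma_1,\gamma_2}(\psi),\psi)$ (with strict inequality on at least one $\psi\in\cS_B^{W_{AB}}$) transfers verbatim into $F(\sigma'_\phi,\phi)\geq F(\cN_{AB}^{\gamma_1,\gamma_2}(\phi),\phi)$ (with strict inequality on the corresponding $\phi\in\cS_B$). Hence $\cE'$ is a nontrivial $2\to 1$ LOCC purification protocol for $\cS_B$, contradicting Theorem~\ref{theo:no-go_4_bell}. I do not anticipate a substantial obstacle; the only point to be careful about is verifying that $W_{AB}\otimes W_{AB}$ remains bi-local across the Alice/Bob partition of the two-copy system, after which the argument is pure unitary bookkeeping.
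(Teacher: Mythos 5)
Your proof is correct and follows essentially the same route as the paper: the corollary is obtained by reducing the set $\cS_B^{W_{AB}}$ to $\cS_B$ via local-unitary equivalence and then invoking Theorem~\ref{theo:no-go_4_bell}. You merely make explicit the ingredients the paper leaves implicit (covariance of the bi-local depolarizing channel under $U_A\ox V_B$, closure of LOCC under local unitary pre/post-processing, and unitary invariance of the fidelity), which is a sound and complete filling-in of the same argument.
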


Combining Theorem~\ref{theo:no-go_4_bell} and Corollary~\ref{coro:no-go_for_4_orthogonal}, it is straightforward to arrive at a statement that there does not exist a nontrivial $2\rightarrow 1$ purification protocol for the union set $\cS_U =\cS_B \cup \cS_B^{W_{AB}^1}\cup \cdots\cS_B^{W_{AB}^k}\cdots \cup\cS_B^{W_{AB}^n}$, where $W_{AB}^k$ refers to local unitary operation.
If there exists a nontrivial $2\rightarrow 1$ purification protocol for the union set $\cS_U$, it means the purified fidelity strictly increases for some states, i.e., $F(\psi, \sigma_\psi)>F(\psi, \cN(\psi), \exists\,\psi\in\cS_U$. Then, by removing some states from the union set, we obtain $F(\psi, \sigma_\psi)>F(\psi, \cN(\psi), \exists\,\psi\in\cS_B^{W_{AB}^k}$, which contradicts the no-go theorem shown in Corollary~\ref{coro:no-go_for_4_orthogonal}. Therefore, there must be no nontrivial purification protocol for the union set $\cS_U$.
If all local unitary operations are considered, the union set becomes the maximally entangled state set
\begin{align}
    \cS&_{MES} =\cS_B \cup \cS_B^{W_{AB}^1}\cup \cdots\cS_B^{W_{AB}^k}\cdots \cup\cS_B^{W_{AB}^\infty}\nonumber\\
    &= \{\proj{\Phi}\;|\;\ket{\Phi} = U_A \ox V_B \ket{\Phi^+}, \forall\; U_A,V_B\in SU(2) \}\nonumber,
\end{align}
and we arrive at the no-go theorem for maximally entangled states set $\cS_{MES}$.

\begin{corollary}{\rm \textbf{(No-go theorem for maximally entangled states set)}}\label{theo:no-go_max_entangled_set}
    For the local depolarizing noise channel $\cN_{AB}^{\gamma_1, \gamma_2}$ with noise levels $\gamma_1$ and $\gamma_2$, there is no nontrivial $2\rightarrow 1$ LOCC purification protocol for the set $\cS_{MES}$ that contains all maximally entangled pure states.
\end{corollary}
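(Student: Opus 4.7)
The plan is to reduce Corollary~\ref{theo:no-go_max_entangled_set} directly to Corollary~\ref{coro:no-go_for_4_orthogonal} via a subset argument. The new ingredient is the observation that every two-qubit maximally entangled state belongs to some four-element orthogonal maximally entangled basis of the form $\cS_B^{U_A \otimes V_B}$. Once this is established, any hypothetical nontrivial LOCC purification for $\cS_{MES}$ immediately yields a nontrivial LOCC purification for some such four-element subset, which Corollary~\ref{coro:no-go_for_4_orthogonal} forbids.

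First, I would establish the set identity
\begin{equation}
\cS_{MES} \;=\; \bigcup_{U_A,\,V_B \,\in\, SU(2)} \cS_B^{U_A \otimes V_B}.
\end{equation}
The inclusion $\supseteq$ is immediate, because local unitaries map the Bell basis to an orthonormal basis of maximally entangled states, each of which belongs to $\cS_{MES}$ by definition. For $\subseteq$, I would invoke the Schmidt decomposition: any $\ket{\Phi} \in \cS_{MES}$ has uniform Schmidt coefficients $1/\sqrt{2}$, so it can be written as $(U_A \otimes V_B)\ket{\Phi^+}$ by taking $U_A,V_B$ to rotate the computational basis into its Schmidt basis. Since $\Phi^+ \in \cS_B$, the state $\Phi = (U_A \otimes V_B)\Phi^+(U_A \otimes V_B)^\dagger$ belongs to $\cS_B^{U_A \otimes V_B}$.

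Next, I would argue by contradiction. Suppose $\cE$ is a nontrivial $2 \rightarrow 1$ LOCC purification protocol for $\cS_{MES}$ under the bi-local depolarizing noise $\cN_{AB}^{\gamma_1,\gamma_2}$. By Definition~\ref{def:universal_purification_protocol}, $F(\psi, \sigma_\psi) \geq F(\psi, \cN_{AB}^{\gamma_1,\gamma_2}(\psi))$ for every $\psi \in \cS_{MES}$, with strict inequality at some $\psi^* \in \cS_{MES}$. By the identity above, choose $U_A^*, V_B^* \in SU(2)$ with $\psi^* \in \cS_B^{U_A^* \otimes V_B^*}$. Since $\cS_B^{U_A^* \otimes V_B^*} \subseteq \cS_{MES}$, the non-strict fidelity inequality is inherited for all four states in this orthogonal MES basis, while strict improvement persists at $\psi^*$. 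Hence $\cE$ is a nontrivial LOCC purification for $\cS_B^{U_A^* \otimes V_B^*}$, contradicting Corollary~\ref{coro:no-go_for_4_orthogonal}.

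The hard part of this argument has already been absorbed into Theorem~\ref{theo:no-go_4_bell} and Corollary~\ref{coro:no-go_for_4_orthogonal}, where the PPT relaxation and SDP analysis rule out nontrivial purification on any fixed orthogonal MES basis. The only remaining step is covering $\cS_{MES}$ by such bases, which is purely geometric and follows from the standard local-unitary parametrization of maximally entangled states; consequently no substantive new obstacle arises.
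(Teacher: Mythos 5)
Your proposal is correct and follows essentially the same route as the paper: both reduce the claim to Corollary~\ref{coro:no-go_for_4_orthogonal} by noting that a nontrivial protocol for $\cS_{MES}$ would have a strict-improvement witness $\psi^*$ lying in some locally rotated Bell basis $\cS_B^{U_A^*\otimes V_B^*}\subseteq \cS_{MES}$, yielding a contradiction. Your explicit Schmidt-decomposition verification that $\cS_{MES}=\bigcup_{U_A,V_B}\cS_B^{U_A\otimes V_B}$ just makes precise a covering step the paper states only implicitly.
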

The above no-go theorems highlight the essential role of entanglement in quantum state purification. Without entanglement, LOCC alone is not sufficiently powerful to purify these representative sets of quantum states.

\textit{Purifying a known state with LOCC protocol.}--- 
We have seen different no-go theorems for the distributed state purification. Then what can we do if we know more information about the noisy states? The simplest case is that the state set $\cS$ only has one pure state $|\cS|=1$ and the state is known. This case is trivial in the global purification setting, because the global purification protocol can directly prepare the target state $\psi$. However, such a problem is nontrivial in the distributed computing paradigm, as it is impossible to prepare the state $\psi$ via an LOCC protocol as long as $\psi$ is entangled.
In Theorem~\ref{theo:go_theorem_for_arbitrary_state}, we show that a nontrivial distributed purification protocol exists if the target state $\psi$ is known.

\begin{figure}[h]
\centering
\includegraphics[width=\linewidth]{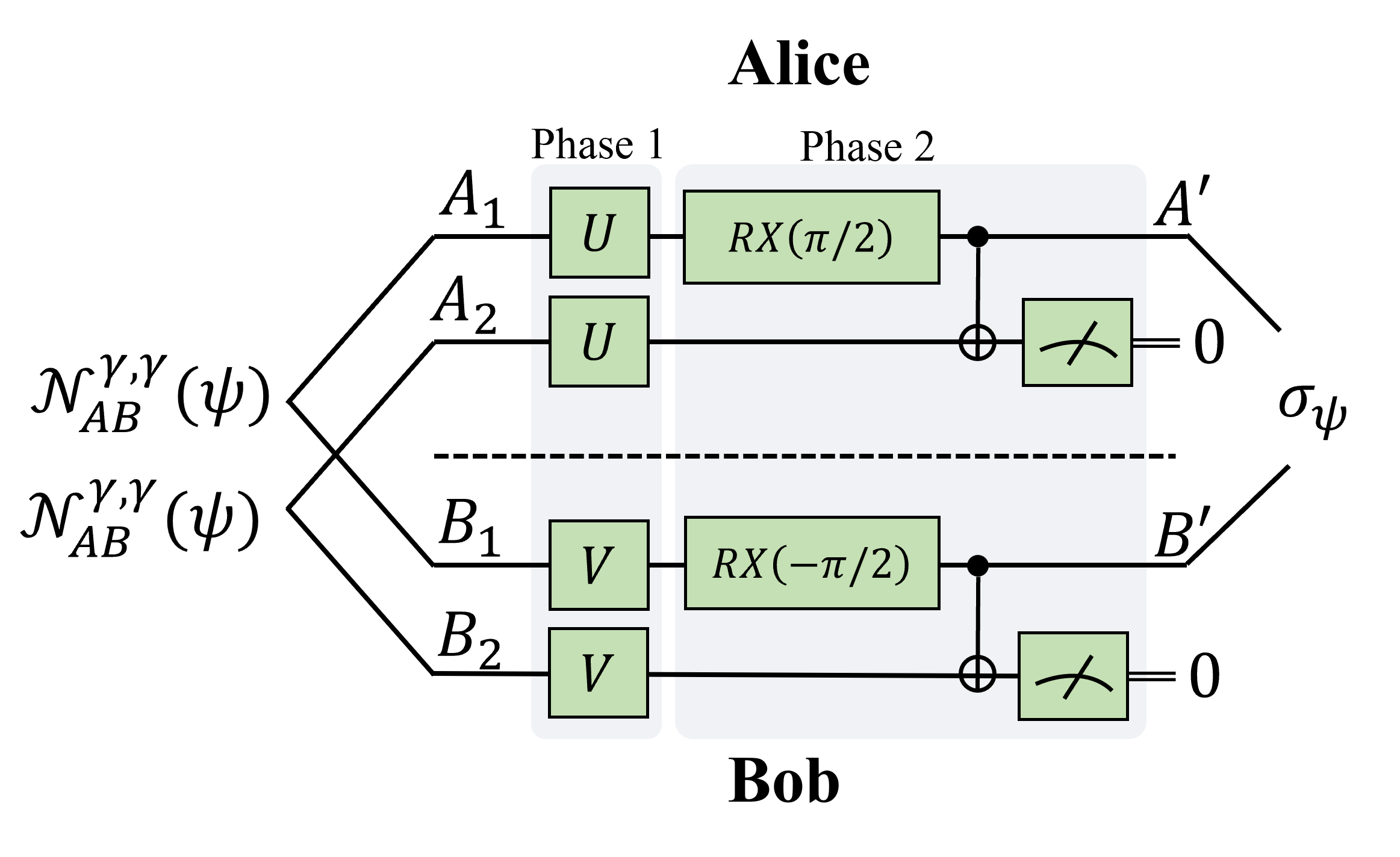}
    \caption{Diagram for the $2\rightarrow1$ analytical distributed purification protocol for noisy state $\cN_{AB}^{\gamma,\gamma}(\psi)$. This purification protocol has two phases. In phase 1, the local unitaries map the noisy state $\cN_{AB}^{\gamma,\gamma}(\psi)$ to $\cN_{AB}^{\gamma,\gamma}(\psi_\alpha)$. Phase 2 can purify the noisy state $\cN_{AB}^{\gamma,\gamma}(\psi_\alpha)$ for arbitrary $\alpha\in[0,1]$.}
    \label{fig:protocol_for_single_state}
\end{figure}

\begin{theorem} {\rm \textbf{(Distributed purification protocol for known states})}\label{theo:go_theorem_for_arbitrary_state}
    For a given \update{pure} state $\psi$, which is corrupted by the bi-local depolarizing noise channel $\cN_{AB}^{\gamma, \gamma}$ with noise level $\gamma\in[0,0.4]$, there always exists a nontrivial LOCC purification protocol, such that the purified state $\sigma_\psi$ has a strictly higher fidelity, i.e., $F(\sigma_\psi, \psi) > F(\cN_{AB}^{\gamma,\gamma}(\psi), \psi)$.
    \end{theorem}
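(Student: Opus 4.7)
The plan is to construct an explicit two-phase LOCC protocol matching Fig.~\ref{fig:protocol_for_single_state}: Phase 1 reduces the problem to purifying a Schmidt-canonical target, and Phase 2 runs a BBPSSW-style coincidence-postselection protocol on two noisy copies of that canonical target.

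For Phase 1, I would invoke the Schmidt decomposition to write the (known) state as $|\psi\rangle=(U_A\ox V_B)|\psi_\alpha\rangle$ with $|\psi_\alpha\rangle=\sqrt{\alpha}|00\rangle+\sqrt{1-\alpha}|11\rangle$ and $\alpha\in[0,1]$. The single-qubit depolarizing channel is unitarily covariant, $\cN^\gamma(U\rho U^\dg)=U\cN^\gamma(\rho)U^\dg$, hence $\cN_{AB}^{\gamma,\gamma}(\psi)=(U_A\ox V_B)\,\cN_{AB}^{\gamma,\gamma}(\psi_\alpha)\,(U_A^\dg\ox V_B^\dg)$. Alice and Bob therefore apply $U_A^\dg$ and $V_B^\dg$ on each of the two input copies, execute Phase 2 on two copies of $\cN_{AB}^{\gamma,\gamma}(\psi_\alpha)$, and finally re-apply $U_A\ox V_B$ to the surviving register. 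Since both $F(\psi,\cN_{AB}^{\gamma,\gamma}(\psi))$ and $F(\psi,\sigma_\psi)$ are invariant under this unitary sandwich, it suffices to establish strict fidelity increase for the canonical family $\{\psi_\alpha\}_{\alpha\in[0,1]}$.

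For Phase 2, I would use a BBPSSW-style bilateral-gate-and-measure protocol on the registers $A_1B_1$ and $A_2B_2$ each initially in $\cN_{AB}^{\gamma,\gamma}(\psi_\alpha)$: Alice applies $\mathrm{CNOT}$ from $A_1$ to $A_2$, Bob applies $\mathrm{CNOT}$ from $B_1$ to $B_2$, both measure the second pair in the computational basis, exchange outcomes, and accept the state on $A_1B_1$ on a chosen outcome pattern (e.g.\ both zeros). Using the expansion
\[
\cN_{AB}^{\gamma,\gamma}(\psi_\alpha)=(1-\gamma)^2\psi_\alpha+(1-\gamma)\gamma\!\left(\rho_A\ox\tfrac{I_B}{2}+\tfrac{I_A}{2}\ox\rho_B\right)+\tfrac{\gamma^2}{4}I_{AB},
\]
with $\rho_A=\rho_B=\alpha\proj{0}+(1-\alpha)\proj{1}$, the accepted-branch output is an explicit rational function of $(\alpha,\gamma)$, obtained by tracking how each product of four summands above transforms under the bilateral CNOTs and under the measurement postselection.

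The main obstacle, I expect, is proving $F_{\mathrm{out}}(\alpha,\gamma)-F_{\mathrm{in}}(\alpha,\gamma)>0$ uniformly on the rectangle $(\alpha,\gamma)\in[0,1]\times[0,0.4]$, where $F_{\mathrm{in}}(\alpha,\gamma)=\langle\psi_\alpha|\cN_{AB}^{\gamma,\gamma}(\psi_\alpha)|\psi_\alpha\rangle$ and $F_{\mathrm{out}}$ is the postselected output fidelity. The margin shrinks both near the product-state endpoints $\alpha\in\{0,1\}$ and near the noise bound $\gamma=0.4$; the latter is close to the BBPSSW threshold $F=1/2$ at $\alpha=1/2$ and almost certainly dictates the stated upper limit on $\gamma$. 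I would clear the acceptance-probability denominator to reduce the claim to positivity of a bivariate polynomial on a compact rectangle, and then either exhibit an explicit nonnegative factorization, or discharge the inequality by examining it along the rectangle's boundary and at interior critical points, after separately confirming that the acceptance probability stays bounded away from zero so that $\sigma_{\psi_\alpha}$ is well defined.
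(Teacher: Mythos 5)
Your proposal follows essentially the same route as the paper's proof: Phase~1 (Schmidt decomposition plus unitary covariance of the local depolarizing channel, conjugating the whole protocol by $U_A\ox V_B$ so that it suffices to treat the canonical family $\psi_\alpha$) is exactly the paper's reduction, and Phase~2 is a bilateral-CNOT, coincidence-postselection gadget whose accepted-branch fidelity is analysed as a rational function of $(\alpha,\gamma)$, with the final step reduced to positivity of a bivariate polynomial on the rectangle. Two points of comparison. First, the paper's Phase~2 circuit applies local $RX$ rotations before the CNOTs, whereas yours uses bare CNOTs; on the accepted branch ($m_A=m_B=0$) your gadget outputs the entrywise (Hadamard) square of $\cN^{\gamma,\gamma}_{AB}(\psi_\alpha)$ in the computational basis, and spot checks of the resulting rational expression indicate the fidelity gain stays positive on all of $[0,1]\times[0,0.4]$, with first failure at the Bell point only around $\gamma\approx 0.42$, so dropping the rotations does not appear to break the argument --- but it does mean you cannot borrow the paper's algebra and must carry out the verification for your own circuit. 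Second, that verification is the entire technical content of the theorem, and your proposal leaves it as a plan: the paper writes the gain explicitly as $f(\alpha,\gamma)g(\alpha,\gamma)/d(\alpha,\gamma)$ with $d>0$ and $f\le 0$ manifest, shows $g$ is monotone in $\gamma$, and thereby reduces everything to a one-variable polynomial positivity check at $\gamma=0.4$ --- precisely the boundary/critical-point strategy you sketch, so the plan is sound, but as written the decisive inequality is asserted rather than proved. A minor caveat shared with the paper: strict improvement is impossible at $\gamma=0$ (and, for the paper's circuit, at $\alpha=0$), so the conclusion should be stated as a non-strict inequality that is strict away from these degenerate parameters.
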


To prove the theorem, we propose an analytical $2\rightarrow 1$ distributed purification protocol to purify the state corrupted by bi-local depolarizing noise, which is shown in FIG.~\ref{fig:protocol_for_single_state}. When two parties, Alice and Bob, possess two copies of shared noisy states $\cN_{AB}^{\gamma, \gamma}(\psi)$, they apply some local unitaries $U$ and $V$ on their own systems respectively (as shown in Phase 1), mapping the noisy state $\cN_{AB}^{\gamma,\gamma}(\psi)$ to $\cN_{AB}^{\gamma,\gamma}(\psi_\alpha)$, where $\ket{\psi_\alpha}=\alpha\ket{00} + \sqrt{1-\alpha^2}\ket{11}$ with $\alpha\in[0,1]$.
Then, for the noisy state $\cN_{AB}^{\gamma,\gamma}(\psi_\alpha)$ with arbitrary $\alpha\in[0,1]$, the circuit shown in Phase 2 can non-trivially purify it.
Specifically, Alice and Bob apply local rotation gates and CNOT gates on their own parts, and 
make measurements on systems $A_2,B_2$. Denote the measurement results as $m_A$ and $m_B$ respectively, and Alice and Bob exchange the results via classical communication. If $m_A=m_B=0$, then the purification is successful and outputs the purified state $\sigma_\psi$; otherwise, the purification fails and the process described above needs to be repeated. To verify the effectiveness of the proposed protocol, we derive the fidelity of the purified state and the target state $F(\sigma_\psi, \psi)$ and show that $F(\sigma_\psi, \psi) - F(\cN_{AB}^\gamma(\psi), \psi)\ge 0$ holds for arbitrary states with $\gamma\in[0,0.4]$. \update{Here we would like to highlight that the protocol shown in FIG.~\ref{fig:protocol_for_single_state} not only can purify a single state, but also works for a set of states with the same Schmidt basis.} More details are shown in Supplemental Material~\ref{app:proof_of_go_single_state}.

The conventional distributed purification protocols, such as DEJMPS~\cite{deutsch1996quantum} and BBPSSW\cite{bennett1996purification}, can only purify the noisy maximally entangled state. While the proposed protocol, as shown in FIG.~\ref{fig:protocol_for_single_state}, can not only purify maximally entangled states, but also works for non-maximally entangled states. Note that non-maximally entangled states are also important resources in distributed quantum information processing tasks such as entanglement catalyst~\cite{jonathan1999entanglement} and the verification of Hardy's paradox~\cite{hardy1992quantum, hardy1993nonlocality}.


\textit{Purification protocol via optimization.}---
In a more general case, where the state set contains more than one pure state, i.e., $|\cS| \ge 2$, the proposed protocol shown in FIG.~\ref{fig:protocol_for_single_state} no longer works, and we have to propose an LOCC purification protocol for every new state set. This motivates us to introduce an automatic algorithm to design the distributed purification protocol. Specifically, Alice and Bob share $n$ copies of noisy states $\cN_{AB}(\psi)$, and the state $\psi$ is sampled from a set of pure states $\cS$. Alice and Bob apply local unitaries $U$ and $V$ on their own systems, respectively. After measuring all the systems except the first pair, the two parties check the measurement results via classical communication. If the measurement results are all 0, the purification process is successful, and the first pair state is the purified state $\sigma_\psi$; otherwise, the purification fails. 

In this algorithm, the optimal local unitaries $U$ and $V$, that satisfy the constraints, are unknown. To solve this problem, we utilize the parametrized quantum circuits (PQCs)~\cite{benedetti2019parameterized} to represent the local unitaries $U(\boldsymbol{\theta})$ and $V(\boldsymbol{\zeta})$, and apply gradient-based optimization method~\cite{cerezo2021variational} to search the optimal parameters $\boldsymbol{\theta}, \boldsymbol{\zeta}$ in the PQCs $U(\boldsymbol{\theta})$ and $V(\boldsymbol{\zeta})$. The optimization method has been successfully applied to various tasks, such as quantum chemistry~\cite{kandala2017hardware, peruzzo2014variational, mcclean2017hybrid, nakanishi2019subspace}, quantum information analysis~\cite{chen2023near, chen2021variational, zhao2021practical,zhao2025variational,liu2025dynamic}, and quantum data compression~\cite{romero2017quantum, cao2021noise}. \update{Note that the optimized protocols here no longer require the noise level constraint $\gamma \le 0.4$, a condition needed in Theorem~\ref{theo:go_theorem_for_arbitrary_state} for nontrivial purification.}

To optimize the parameters in the PQCs, we have to design a cost function that describes our requirements. In the case of a distributed purification protocol, there are two requirements. For the first one, the average of purified fidelity should be as large as possible, i.e., $\max_{U,V} \sum_{\psi\in\cS} F(\psi, \sigma_{\psi})$. For the second one, the purification protocol should increase the fidelity for every purified state in the set, i.e., $F(\psi,\sigma_\psi) \ge F(\psi, \cN(\psi)),\; \forall\, \psi\in\cS.$
Based on the requirements, we have the following cost function

\begin{equation}\label{eq:cost_function}
    C = -\sum_{\psi\in\cS} F(\psi, \sigma_\psi) + \sum_{\psi\in\cS}S(F(\psi, \cN(\psi)) - F(\psi,\sigma_\psi)),
\end{equation}
where $S(x) = \frac{1}{1 + e^{-ax}}$ is sigmoid function with positive real coefficient $a$. The first summation term $\sum_{\psi\in\cS} F(\psi,\sigma_\psi)$ corresponds to the first requirement. The second summation is the punishment term. When $F(\psi, \cN(\psi)) - F(\psi,\sigma_\psi)\ge0$, it implies the purified state has lower fidelity, then $S(F(\psi, \cN(\psi)) - F(\psi,\sigma_\psi)) \ge 1/2$. The more the fidelity drops, the greater the punishment will be. The penalty strength can be modified by tuning the parameter $a$. With the chosen ansatz and the cost function Eq.~\eqref{eq:cost_function}, we can execute the optimization algorithm to train the parameters. More algorithm details please refer to the Supplemental Material~\ref{app:VQA}. 

\begin{figure}[h]
\centering
\includegraphics[width=\linewidth]{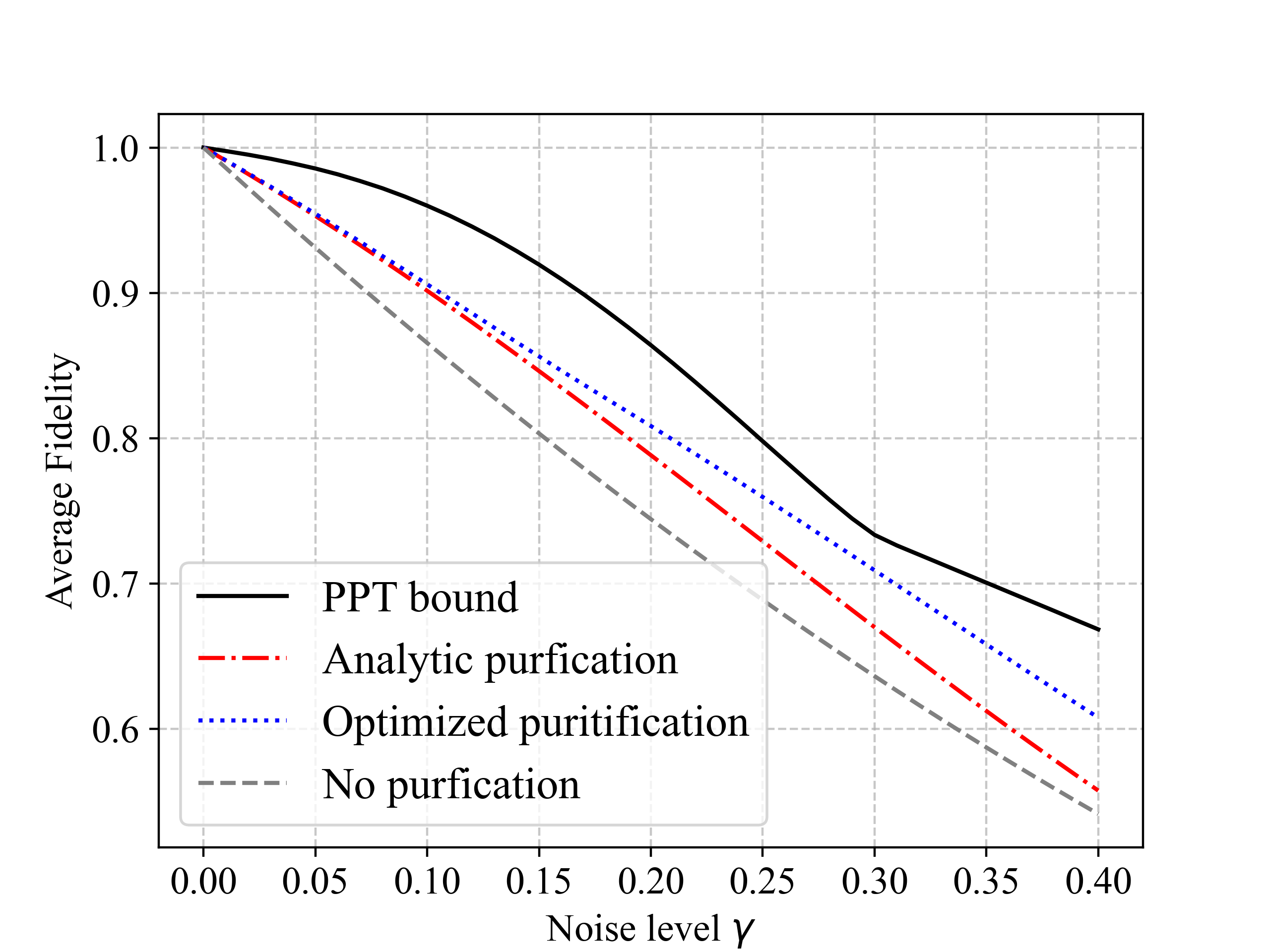}
    \caption{Comparison of the performance of different $2\rightarrow 1$ distributed purification protocols. The black solid line is the PPT upper bound. The dotted blue line is achieved by the optimized purification protocol. The dash-dot red curve corresponds to the Phase 2 circuit shown in FIG.~\ref{fig:protocol_for_single_state}. The dashed grey line refers to the fidelity without purification.}
    \label{fig:purified_fidelity_local_Depo}
\end{figure}
Here we conduct numerical experiments~\cite{codes}, via the QUAIRKIT platform~\cite{quairkit}, to show the effectiveness of the optimized purification protocol. The numerical results are shown in FIG.~\ref{fig:purified_fidelity_local_Depo}. Specifically, we consider the pure state set containing 4 different states $\cS_d=\{\psi_{\frac{1}{\sqrt{2}}}, \psi_{\frac{1}{\sqrt{3}}}, \psi_{\frac{1}{\sqrt{4}}}, \psi_{\frac{1}{\sqrt{5}}}\}$, 
where $\ket{\psi_\alpha}=\alpha\ket{00}+\sqrt{1-\alpha^2}\ket{11}$. The noise model is bi-local depolarizing channel $\cN_{AB}^{\gamma, \gamma}$, with noise level varying $\gamma\in[0,0.4]$. From the proof of Theorem~\ref{theo:go_theorem_for_arbitrary_state}, the Phase 2 circuit shown in FIG.~\ref{fig:protocol_for_single_state} is a nontrivial purification protocol for the state in the form of $\ket{\psi_\alpha}$. Thus, the Phase 2 circuit can also non-trivially purify the set $\cS_d$.
We compare the performance of the optimized protocol and the analytical protocol and find that the optimized protocol achieves higher fidelity. To benchmark the performance of the optimized purification protocol, we compare the fidelity achieved by the optimized purification protocol and the PPT bound, which is calculated by the SDP shown in Supplemental Material~\ref{app:PPT_bound}. Here, we fix the average success probability as 0.1. Note that the fidelity achieved by optimized purification is close to the PPT bound, implying the performance of the optimized purification is remarkable. \update{For larger quantum noise level, the average fidelity becomes more complicated, and we include it in the Supplemental Material~\ref{app:lager_noise_level}.} \update{We also conduct more numerical experiments for better understanding (cf.~ Supplemental Material~\ref{app:more_experiments})}.

\textit{Discussion.}--- 
In this work, we systematically characterized the fundamental limitations and operational capabilities of distributed state purification. In particular, we proved that, in the presence of depolarizing noise,  no  $2 \rightarrow 1$ LOCC purification protocol can work blindly for all states in three important sets: the set of all pure two-qubit states, the set of all two-qubit maximally entangled states, and the Bell basis.   
If instead the purification protocol is targeted to a single, known pure state,  an explicit purification protocol can be constructed.  \update{Finally, for a finite set of pure states}, we introduced an optimization-based algorithm to design LOCC purification protocols. These results identify both no-go regimes and constructive possibilities, establishing distributed state purification as a viable strategy for enhancing noise resilience in quantum networks.

For future work, it would be interesting to identify the necessary and sufficient conditions for state sets that admit a nontrivial distributed purification protocol, \update{and it would also be important to figure out how the symmetry of noise relates to the no-go theorems}. Another promising direction is to determine the optimal $n \rightarrow 1$ purification protocol under LOCC constraints, as well as to study the power and limitations under the multipartite entanglement regime. 

\textbf{Acknowledgements}--- B.Z. and X.W. would like to thank Jiayi Zhao and Bartosz Regula for insightful discussions.
G.C. and X.Z. acknowledge support from the Chinese Ministry of Science and Technology (MOST) through grant 2023ZD0300600, by the Hong Kong Research Grant Council (RGC) through grant R7035-21F, and by the State Key Laboratory of Quantum Information Technologies and Materials, Chinese University of Hong Kong. X.W. acknowledges the support from the National Natural Science Foundation of China (Grant. No.~12447107), the Guangdong Provincial Quantum Science Strategic Initiative (Grant No.~GDZX2403008, GDZX2503001), 
the Guangdong Natural Science Foundation (Grant No.~2025A1515012834), and the Quantum Science Center of Guangdong-Hong Kong-Macao Greater Bay Area.  

\bibliography{ref.bib}


\clearpage
\vspace{2cm}
\onecolumngrid
\vspace{2cm}
\begin{center}
{\textbf{\Large Supplemental Materials for \\ Power and limitations of distributed quantum state purification}}
\end{center}

\appendix
\tableofcontents
\renewcommand{\appendixname}{Supplemental Material}
\renewcommand{\thesection}{\Alph{section}}
\renewcommand{\thesubsection}{\arabic{subsection}}
\renewcommand{\theequation}{\Alph{section}.\arabic{equation}}

\section{Global purification protocol}\label{app:global_purification_protocols}
The state purification has been well studied, and many practical protocols have been proposed. For example, the \textit{symmetric projection} method~\cite{barenco1997stabilization,childs2025streaming, yao2025protocols} probabilistically maps the $n$-copy of noisy state into the symmetric subspace, and the output states have higher purity. The circuit for the implementation of symmetric projection is shown in FIG.~\ref{fig:symmetric_projection}. The unitary $U_k, k\in[1,n-1]$ maps the initial state $\ket{0}^{\ox k}$ to the W-state, i.e., $U_k(\ket{0}^{\ox k}) = \frac{1}{\sqrt{k+1}}(\ket{00\cdots0} + \ket{00\cdots1} + \cdots + \ket{10\cdots0})$. Then apply control-swap gates and $U_k^\dagger$, then measure the ancilla qubits in the end. The measurement results are denoted as $\mathbf{\omega}$. If all the measurement are 0, i.e., $\mathbf{\omega} = \mathbf{0}$, then the purification is successful and output the state $\sigma(\mathbf{\omega})$. Otherwise, the purification process fails, and we have to repeat the purification protocol $\cE_{sym}$ until it succeeds. The output purified state is 
\begin{equation}\label{eq:purified_state}
    \sigma(\mathbf{0})=\cE(\cN(\psi) ^{\ox n}) = \frac{\sum_{j=1}^n \cN(\psi)^j}{\sum_{j=1}^n\tr[\cN(\psi)^j]}.
\end{equation}

\begin{figure}[h]
\centering
\includegraphics[width=0.8\linewidth]{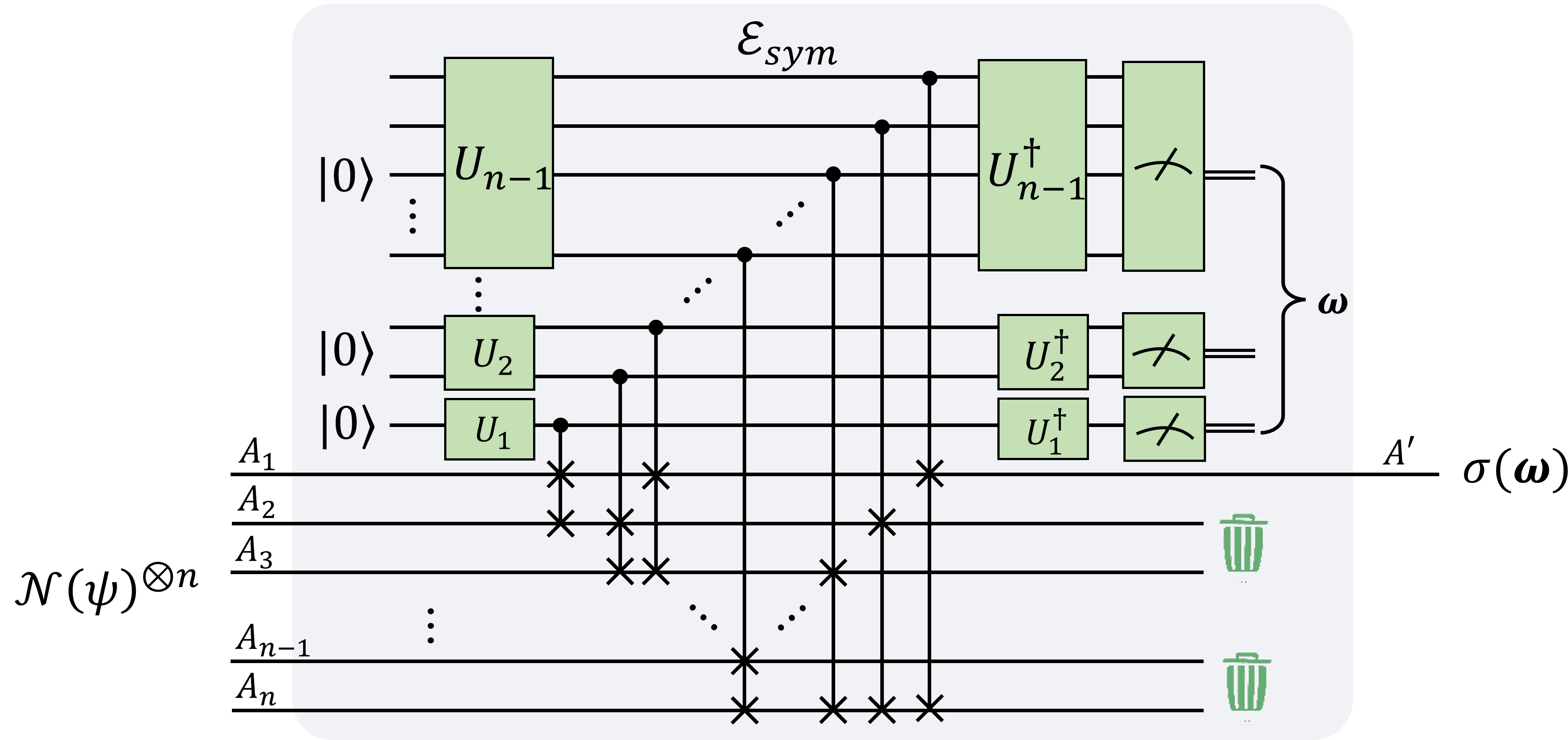}
    \caption{Quantum circuit project quantum state to its symmetry subspace} 
    \label{fig:symmetric_projection}
\end{figure}

\begin{proposition}\label{prop:symmetric_projection_is_universal_protocol} 
    For the noise channels $\cN$ whose dominant term of noise state $\cN(\psi)$ remains $\psi$, the symmetric projection is an $n\rightarrow 1$ purification protocol for the set that contains all pure states.
\end{proposition}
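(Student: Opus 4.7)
The plan is to reduce the claim to a comparison between two scalars by exploiting the fact that $\psi$ is pure, so that the fidelity with any state $\rho$ equals the diagonal matrix element $F(\rho,\psi)=\bra{\psi}\rho\ket{\psi}$. Applied to the output state in Eq.~\eqref{eq:purified_state}, I would write
\begin{equation*}
F\bigl(\sigma(\mathbf{0}),\psi\bigr)=\frac{\sum_{j=1}^{n}\bra{\psi}\cN(\psi)^{j}\ket{\psi}}{\sum_{j=1}^{n}\tr[\cN(\psi)^{j}]},
\end{equation*}
and then compare this to $F(\cN(\psi),\psi)=\bra{\psi}\cN(\psi)\ket{\psi}$.

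Next I would unpack the hypothesis ``dominant term of $\cN(\psi)$ remains $\psi$'' as the statement that $\ket{\psi}$ is an eigenvector of $\rho:=\cN(\psi)$ with eigenvalue $\lambda_{1}$ equal to the largest eigenvalue. Writing the spectral decomposition $\rho=\sum_{i}\lambda_{i}\ketbra{\phi_{i}}{\phi_{i}}$ with $\ket{\phi_{1}}=\ket{\psi}$ and $\lambda_{1}\ge\lambda_{i}$ for all $i$, I get the two clean identities
\begin{equation*}
\bra{\psi}\rho^{j}\ket{\psi}=\lambda_{1}^{j},\qquad \tr[\rho^{j}]=\sum_{i}\lambda_{i}^{j}.
\end{equation*}
The desired inequality $F(\sigma(\mathbf{0}),\psi)\ge\lambda_{1}$ then reduces to showing
\begin{equation*}
\sum_{j=1}^{n}\lambda_{1}^{j}\;\ge\;\lambda_{1}\sum_{j=1}^{n}\sum_{i}\lambda_{i}^{j}.
\end{equation*}

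The key elementary step is the termwise bound $\sum_{i}\lambda_{i}^{j}\le\lambda_{1}^{j-1}$, which follows from $\lambda_{i}^{j}\le\lambda_{1}^{j-1}\lambda_{i}$ (using $\lambda_{i}\le\lambda_{1}$) together with $\sum_{i}\lambda_{i}=1$. Multiplying by $\lambda_{1}$ gives $\lambda_{1}\sum_{i}\lambda_{i}^{j}\le\lambda_{1}^{j}$ for each $j$, and summing over $j$ yields the required inequality, with equality at $j=1$. This would complete the proof.

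I do not expect a substantial obstacle in the calculation itself; the real interpretive work is pinning down the hypothesis. I would therefore make the assumption precise at the outset (either ``$\ket{\psi}$ is the top eigenvector of $\cN(\psi)$'' or the slightly weaker ``$\bra{\psi}\cN(\psi)\ket{\psi}=\|\cN(\psi)\|_{\infty}$''), and note that this is exactly the regime where purification is operationally meaningful: if the noise were so large that $\ket{\psi}$ were no longer the dominant eigenvector of $\cN(\psi)$, then even the optimal $n\to 1$ purification cannot recover $\psi$ in the $n\to\infty$ limit, since $\rho^{n}/\tr[\rho^{n}]$ projects onto the top eigenspace rather than onto $\ket{\psi}$. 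I would close with a brief remark that the bound is tight in the fully depolarizing regime, matching the intuition behind the $n\to 1$ purification schemes discussed in the introduction.
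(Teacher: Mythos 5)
Your proposal is correct and follows essentially the same route as the paper's proof: both use the spectral decomposition of $\cN(\psi)$ with $\ket{\psi}$ as the dominant eigenvector, express the purified fidelity as the ratio $\sum_{j=1}^{n}\lambda_1^{j}/\sum_{j=1}^{n}\sum_{i}\lambda_i^{j}$, and reduce the claim to the termwise inequality $\lambda_1\sum_i\lambda_i^{j}\le\lambda_1^{j}$ obtained from $\lambda_i\le\lambda_1$ and $\sum_i\lambda_i=1$, which is exactly the paper's rearrangement $\sum_m\lambda_m(\lambda_1^{j-1}-\lambda_m^{j-1})\ge0$. Your explicit framing of the hypothesis as ``$\ket{\psi}$ is the top eigenvector of $\cN(\psi)$'' matches the paper's intended reading, so no gap remains.
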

\begin{proof}
    Since the dominant term of the noisy state $\cN(\psi)$ remains $\psi$, we have 
    \begin{equation}
        \cN(\psi) = \lambda_1 \psi + \sum_{i=2}^{d} \lambda_i \psi_i,
    \end{equation}
    with $\lambda_1 > \lambda_k, \,\forall\, k \ge 2$, and $\sum_{i=1}^{d} \psi_i=1$. The $\lambda_i$ are the eigenvalues, and the $\psi_i$ are the noisy terms, which are orthogonal to the ideal pure state $\psi$. The output purified state is given in Eq.~\eqref{eq:purified_state}.
    Note that $\cN(\psi)^j,\,\forall\,j$ share the same eigenvectors. For convenience, we order the eigenvalues in a descending sequence, $\lambda_1> \lambda_2\ge\cdots\ge\lambda_{d}$, then the eigenvalues of the output purified state $\sigma(\mathbf{0})$ are 
    \begin{equation}
        \lambda_i' = \frac{\lambda_i+\lambda_i^2 + \cdots +\lambda_i^n}{1 + \sum_{m=1}^d\lambda_m^2 + \cdots+\sum_{m=1}^d\lambda_m^n}.
    \end{equation}
    In particular,
    \begin{align}
        \lambda_1' - \lambda_1 &= \frac{\lambda_1+\lambda_1^2 + \cdots +\lambda_1^n}{1 + \sum_{m=1}^d\lambda_m^2 + \cdots+\sum_{m=1}^d\lambda_m^n} - \lambda_1\\
        &= \frac{\lambda_1^2 + \lambda_1^3 + \cdots + \lambda_1^n - \lambda_1\sum_{m=1}^d\lambda_m^2 - \cdots-\lambda_1\sum_{m=1}^d\lambda_m^n}{1 + \sum_{m=1}^d\lambda_m^2 + \cdots+\sum_{m=1}^d\lambda_m^n}\\
        &= \frac{\lambda_1^2\sum_{m=1}^d\lambda_m - \lambda_1\sum_{m=1}^d \lambda_m^2 +\cdots \lambda_1^n\sum_{m=1}^d \lambda_m - \lambda_1\sum_{m=1}^d \lambda_m^n}{1 + \sum_{m=1}^d\lambda_m^2 + \cdots+\sum_{m=1}^d\lambda_m^n}\\
        &= \lambda_1 \frac{\sum_{m=1}^d \lambda_m(\lambda_1-\lambda_m)+\cdots + \lambda_m(\lambda_1^{n-1} - \lambda_m^{n-1})}{1 + \sum_{m=1}^d\lambda_m^2 + \cdots+\sum_{m=1}^d\lambda_m^n}\\
        & \ge 0.
    \end{align}
    The equality holds when the noisy state is pure, i.e., no noise at all. Thus, for any input state $\psi$, the purified state has higher fidelity, i.e., $\lambda_1'=F(\psi, \cE(\cN(\psi)^{\ox 2})) \ge F(\psi,\cN(\psi))=\lambda_1$, which satisfies the definition of purification protocol. Here, we complete the proof.
\end{proof}

When we take $n=2$, then the symmetric projection reduces to the \textit{swap test gadget}~\cite{childs2025streaming} as shown in FIG.~\ref{fig:swap_test_gadget}. Since the swap test gadget is a special case of the symmetric projection, we directly have Swap test gadget is a $2\rightarrow 1$ purification protocol for the set that contains all pure states.
    
\begin{figure}[h]
\centering
\includegraphics[width=0.6\linewidth]{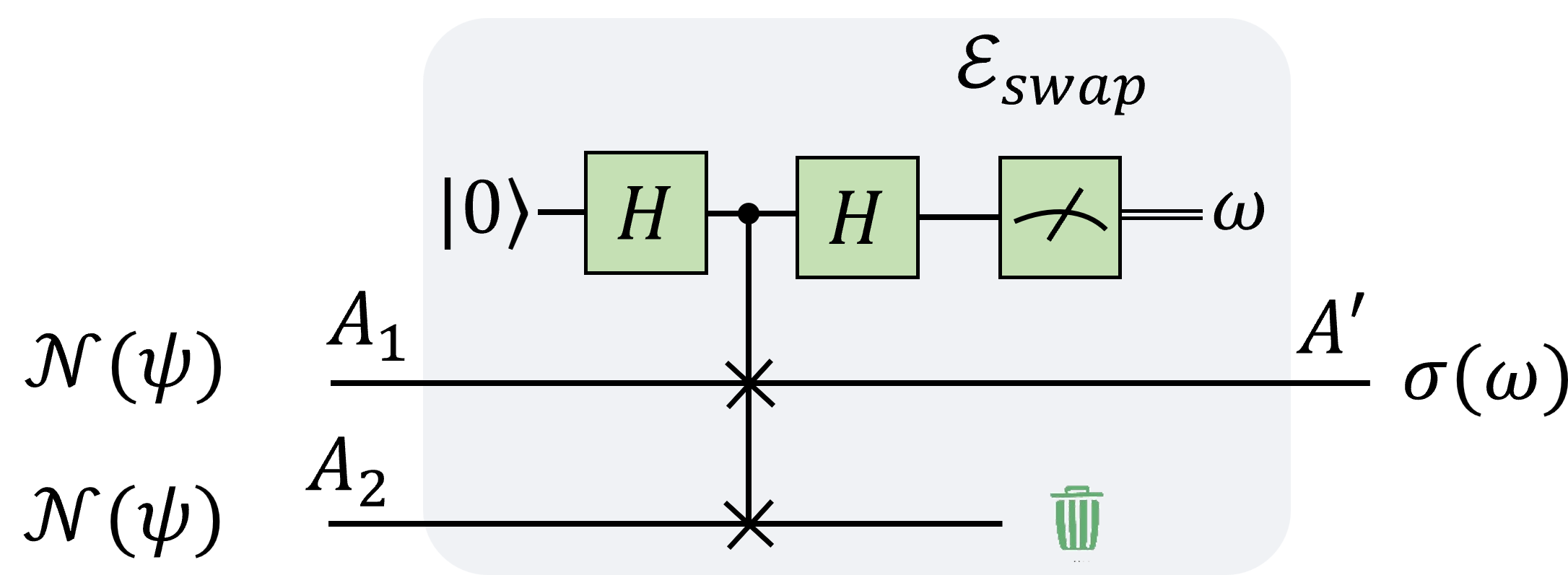}
    \caption{Quantum circuit for $2\rightarrow 1$ swap test gadget on arbitrary input noisy states $\cN(\psi)$. The shaded area is the purification protocol $\cE_{swap}$. The three-qubit gate between Hadamard gates is a controlled swap.} 
    \label{fig:swap_test_gadget}
\end{figure}

\section{Formal definitions of average purification fidelity and optimal distributed purification protocol}\label{app:formal_definition}

To fairly compare the purification protocols, we define the \textit{average purification fidelity} to quantify the performance of different purification protocols.
The average purification fidelity only takes the successful cases into account. Specifically, sample quantum states from the state set uniformly, apply the purification protocol, rule out all the failure cases, and the average fidelity of the successful cases can be computed by $\bar{F}(n;\cS,\cN,\cE)=\sum_{\psi\in\cS} \frac{p_\psi }{\sum_{\psi\in \cS} p_\psi}F(\sigma_\psi, \psi)$. Dividing $|\cS|$ from the numerator and denominator, respectively, we arrive at $\bar{F}(n;\cS,\cN,\cE)=\frac{1}{|\cS|}\sum_{\psi\in\cS} \frac{p_\psi }{\bar{P}}F(\sigma_\psi, \psi)$. Since the output state is $\psi=\frac{\tilde \sigma_\psi}{p_\psi}$, we have $\bar{F}(n;\cS,\cN,\cE) = \frac{1}{|\cS|}\sum_{\psi\in\cS} F\Big(\psi, \frac{\Tilde{\sigma}_\psi}{\bar{P}}\Big)$. Formally, the average purification fidelity is defined in the following.
 
\begin{definition}\label{def:average_purificartion_fid}{\rm \textbf{(Average purification fidelity)}} 
    For a given pure state set $\cS$, noise channel $\cN$ and the $n\rightarrow1$ distributed purification protocol $\cE$, the average purification fidelity is defined as 
    \begin{equation}
        \bar{F}(n;\cS,\cN,\cE) = \frac{1}{|\cS|}\sum_{\psi\in\cS} F\Big(\psi, \frac{\Tilde{\sigma}_\psi}{\bar{P}}\Big),
    \end{equation}
    where $\Tilde{\sigma}_\psi$ is the unnormalized purified state.
\end{definition}

Clearly, the performance of purification protocol depends on the average successful probability $\bar P$, which was also discussed in Ref.~\cite{yao2025protocols}. Thus, with a given average success probability $\bar P$, we define the \textit{optimal purification protocol} as a protocol that achieves \textit{optimal average purification fidelity} $\bar{F}^*$. Since this work focuses on the distributed purification protocol, we have to constrain the operation to LOCC. The \textit{optimal distributed purification protocol} can be defined formally in the following.

\begin{definition}\label{def:opt_average_purificartion_fid}{\rm \textbf{(Optimal distributed purification protocol)}} 
    For a given pure state set $\cS$, noise channel $\cN$ and the average success probability $\bar{P}$, the optimal $n\rightarrow 1$ distributed purification protocol $\cE^*$ is 
    a protocol that achieves the optimal average purification fidelity:
    \begin{align}
        \bar{F}_{\rm LOCC}^*(n,\bar{P};\cS,\cN) &= \bar{F}(n;\cS,\cN, \cE^*) \nonumber\\
        &= \max_{\cE}   
        \big\{\bar{F}(n;\cS,\cN,\cE)\,\big| \,\cE\in{\rm LOCC} \cap {\rm CPTN}, \frac
        {1}{|S|} \sum_{\psi\in\cS} \tr[\cE(\cN(\psi)^{\ox n})]=\bar{P}\big\}.
    \end{align}
\end{definition}

\section{Proof of Theorem~\ref{theo:no-go_universal}}\label{app:proof_universal_no-go}

\noindent \textbf{Theorem~\ref{theo:no-go_universal} (No-go theorem for the pure state set)} \textit{For the two-qubit depolarizing noise channel $\cN_{AB}^\gamma$ with noise level $\gamma\in(0,1)$, there is no non-trivial $2\rightarrow 1$ LOCC purification protocol for the set that contains all pure states $\cS_P$.}
\vspace{0.5 cm}

\begin{proof}
    At the beginning, we denote $\mu_{P}$ as the Haar measure of set $\cS_P$,
    \begin{align}
        M_\psi&=\cN^{\ox n}(\psi^{\ox n})^{T}_{A^nB^n}\ox(\psi-\tr[\psi \cN(\psi)] I)_{A'B'}\text{ for any }\psi\in\cS_P,
    \end{align}
    $M=\mathbb{E}_{\psi\sim\mu_P} M_\psi$ as the expectation of $M_\psi$ with respect to the Haar measure,
    and have that
    \begin{equation}
        \tr[J_{\cE_{A^nB^nA'B'}}M]=\mathbb E_{\psi\sim\mu_P}{\tr[J_{\cE_{A^nB^nA'B'}} ^{T_{A^nB^n}} \cN^{\ox n}\ox \cI(\psi^{\ox n+1})]}-\tr[\psi \cN(\psi)]{\tr[J_{\cE_{A^nB^nA'B'}} ^{T_{A^nB^n}} \cN^{\ox n}(\psi^{\ox n})\ox I_{A'B'}]}.
    \end{equation}
    {\bf Suppose} that there exists a nontrivial $2\rightarrow 1$ LOCC purification protocol, which is equivalent to having a feasible solution $J_{\cE_{A^nB^nA'B'}}$ such that the following constraints are satisfied:
    \begin{align}\label{app:proof_universal_no_go_constraint}
        \begin{cases}
        J_{\cE_{A^nB^nA'B'}} \ge 0,\\ \tr_{A'B'} [J_{\cE_{A^nB^nA'B'}}] \le I_{A^nB^n},\\
        J_{\cE_{A^nB^nA'B'}}^{T_{A^nA'}} \ge 0,\\
        \forall\,\psi\in\cS_P,\ \tr[J_{\cE_{A^nB^nA'B'}}M_\psi]\ge0,\\
        \tr[J_{\cE_{A^nB^nA'B'}}M]>0.
        \end{cases}
    \end{align}
    Note that for any $W=U\ox V\in \operatorname{SU}(2)^{\ox 2}$, we have
    \begin{align}
        M_{W\psi W^\dag}=&(\cN^{\ox n}(W\psi W^\dag)^{\ox n})^{T}_{A^nB^n}\ox (W\psi W^\dag-\tr[W\psi W^\dag \cN(W\psi W^\dag)] I)_{A'B'}\\
        =&(W^{*\ox n}_{A^nB^n}\ox W_{A'B'})((\cN^{\ox n}(\psi)^{\ox n})^{T}_{A^nB^n}\ox(\psi-\tr[\psi \cN(\psi)] I)_{A'B'})(W^{T\ox n}_{A^nB^n}\ox W^\dag_{A'B'})\\
        =&(W^{*\ox n}_{A^nB^n}\ox W_{A'B'})M_\psi(W^{T\ox n}_{A^nB^n}\ox W^\dag_{A'B'}),\\
        \tr[J_{\cE_{A^nB^nA'B'}}M]
        =&\mathbb E_{\psi\sim\mu_P}\tr[J_{\cE_{A^nB^nA'B'}}M_{W^\dag\psi W}]\\
        =&\mathbb E_{\psi\sim\mu_P}\tr[(W^{*\ox n}_{A^nB^n}\ox W_{A'B'})J_{\cE_{A^nB^nA'B'}}(W^{T\ox n}_{A^nB^n}\ox W^\dag_{A'B'})M_\psi]\\
        =&\tr[(W^{*\ox n}_{A^nB^n}\ox W_{A'B'})J_{\cE_{A^nB^nA'B'}}(W^{T\ox n}_{A^nB^n}\ox W^\dag_{A'B'})M].
    \end{align}
    Moreover, since
    \begin{equation}
        J_{\cE'_{A^nB^nA'B'}}\coloneqq(W^{*\ox n}_{A^nB^n}\ox W_{A'B'})J_{\cE_{A^nB^nA'B'}}(W^{T\ox n}_{A^nB^n}\ox W^\dag_{A'B'})
    \end{equation}
    satisfies
    \begin{align}
        \begin{cases}
        \operatorname{Spec}\left(J_{\cE'_{A^nB^nA'B'}}\right)=\operatorname{Spec}\left(J_{\cE_{A^nB^nA'B'}}\right),\\
        \tr_{A'B'} [J_{\cE'_{A^nB^nA'B'}}]=
        W^{*\ox n}_{A^nB^n}\tr_{A'B'} [J_{\cE_{A^nB^nA'B'}}]W^{T\ox n}_{A^nB^n} \le W^{*\ox n}_{A^nB^n}I_{A^nB^n}W^{T\ox n}_{A^nB^n}=I_{A^nB^n},\\
        \operatorname{Spec}\left(J_{\cE'_{A^nB^nA'B'}}^{T_{A^nA'}}\right)
        =\operatorname{Spec}\left((U^{*\ox n}_{A^n}\ox U_{A'}\ox V^{*\ox n}_{B^n}\ox V_{B'})J_{\cE_{A^nB^nA'B'}}(U^{T\ox n}_{A^n}\ox U^\dag_{A'}\ox V^{T\ox n}_{B^n}\ox V^\dag_{B'})\right)^{T_{A^nA'}},\\
        \qquad=\operatorname{Spec}\left((U^{\ox n}_{A^n}\ox U^*_{A'}\ox V^{*\ox n}_{B^n}\ox V_{B'})J_{\cE_{A^nB^nA'B'}}^{T_{A^nA'}}(U^{\dag\ox n}_{A^n}\ox U^T_{A'}\ox V^{T\ox n}_{B^n}\ox V^\dag_{B'})\right)
        =\operatorname{Spec}\left(J_{\cE_{A^nB^nA'B'}}^{T_{A^nA'}}\right),\\
        \forall\,\psi\in\cS_P,\ \tr[J_{\cE'_{A^nB^nA'B'}}M_\psi]=\tr[J_{\cE'_{A^nB^nA'B'}}M_{W^\dag\psi W}]\ge0,\\
        \tr[J_{\cE'_{A^nB^nA'B'}}M]=\tr[J_{\cE_{A^nB^nA'B'}}M]>0,
        \end{cases}
    \end{align}
    we find that $J_{\cE_{A^nB^nA'B'}}$ is a feasible solution to the constraints in Eq.\eqref{app:proof_universal_no_go_constraint} implies that $J_{\cE'_{A^nB^nA'B'}}$ is also a feasible solution. Consider further, the linear combination of the feasible solutions is also feasible, i.e.,
    \begin{equation}
    J_{\bar{\cE}_{A^nB^nA'B'}}=\mathbb E_{U,V\sim\mu_{\operatorname{SU}(2)}}J_{\cE'_{A^nB^nA'B'}}=\mathbb E_{W\sim\mu_{\operatorname{SU}(2)^{\ox2}}}(W^{*\ox n}_{A^nB^n}\ox W_{A'B'}) J_{\cE_{A^nB^nA'B'}}(W^{T\ox n}_{A^nB^n}\ox W^\dag_{A'B'}),
    \end{equation}
        is also feasible.
    By the irreducible representation decomposition of 
    \begin{equation}
        \operatorname{SU}(2)\rightarrow \operatorname{SU}(8):\ U\mapsto U^{*\ox 2}\ox U=2U\oplus f_{4}(U),
    \end{equation}
    we have
    \begin{equation}
        \exists\, G\in\operatorname{SU}(8),\ s.t.\ 
        \forall\,U\in\operatorname{SU}(2),\ G(U^{*\ox 2}\ox U)G^\dag=\begin{pmatrix}
            I_2\ox U\\&f_{4}(U)
        \end{pmatrix},
    \end{equation}
    where $f_4$ is a $4$-dimensional irreducible representation of $\operatorname{SU}(2)$. Moreover, by the irreducible representation decomposition of 
    \begin{equation}
        \operatorname{SU}(2)^{\ox2}\rightarrow \operatorname{SU}(64):\ U\ox V\mapsto U^{*\ox 2}\ox V^{*\ox 2}\ox U\ox V=4U\ox V\oplus2U\ox f_{4}(V)\oplus 2V\ox f_{4}(U)\ox f_4(U)\ox f_4(V),
    \end{equation}
    we have $\exists\, \tilde G\in\operatorname{SU}(64),\ s.t.\ 
        \forall\,U\ox V\in\operatorname{SU}(2)^{\ox2}$,
    \begin{equation}
        \tilde G(U^{*\ox 2}\ox V^{*\ox 2}\ox U\ox V)\tilde G^\dag=\begin{pmatrix}
            I_4\ox U\ox V\\&I_2\ox U\ox f_{4}(V)\\&&I_2\ox V\ox f_{4}(U)\\&&&f_{4}(U)\ox f_{4}(V)
        \end{pmatrix},
    \end{equation}
    where $f_4$ is a $4$-dimensional irreducible representation of $\operatorname{SU}(2)$. The following computation is based on a specific $\tilde G$ which could be found in Supplemental Material \ref{appdx:G}. Otherwise, using other $\tilde G$ may introduce basis transformations. Since $J_{\bar{\cE}_{A^nB^nA'B'}}$ is commutative with
    \begin{equation}
        U^{*\ox n}_{A^n}\ox V^{*\ox n}_{B^n}\ox U_{A'}\ox V_{B'},\text{ for any }U,V\in\operatorname{SU}(2),
    \end{equation}
    by mixed Schur–Weyl duality\cite{grinko2025mixed}, we have $\tilde GJ_{\bar{\cE}_{A^nB^nA'B'}}\tilde G^\dag$ is always of form
    \begin{equation} 
        \tilde GJ_{\bar{\cE}_{A^nB^nA'B'}}\tilde G^\dag
        =\begin{pmatrix}
            H_0\ox I_4\\
            &H_1\ox I_8\\
            &&H_2\ox I_8\\
            &&&h_3I_{16}
        \end{pmatrix}=\begin{pmatrix}
            H_0\\
            &H_1\ox I_2\\
            &&H_2\ox I_2\\
            &&&h_3I_{4}
        \end{pmatrix}\ox I_4,
    \end{equation}
    where $H_0\in\mathbb C^{4\times4}$, $H_1,H_2\in\mathbb C^{2\times2}$, $h_3\in\mathbb C$. 
    Simultaneously, we find 
    \begin{align}\label{eq:Gp_TB}
        &\tilde GJ_{\bar{\cE}_{A^nB^nA'B'}}^{T_{B^nB'}}\tilde G^\dagger
        =\begin{pmatrix}
            (I_2\ox Z)H_0^{T_B}(I_2\ox Z)\\
            &H_1\ox I_2\\
            &&ZH_2^TZ\ox I_2\\
            &&&h_3I_{4}
        \end{pmatrix}\ox I_4,\\
        &\operatorname{Spec}(\tr_{A'B'} [J_{\bar{\cE}_{A^nB^nA'B'}}])=\left\{\frac49\bra{00}H_0\ket{00}+\frac89\bra{0}H_1\ket{0}+\frac89\bra{0}H_2\ket{0}+\frac{16}{9}h_3,\right.\\
        &\qquad\left.\frac43\bra{01}H_0\ket{01}+\frac83\bra{1}H_2\ket{1},\frac43\bra{10}H_0\ket{10}+\frac83\bra{1}H_1\ket{1},4\bra{11}H_0\ket{11}\right\},
    \end{align}
    where $\bra{jk}H_0^{T_B}\ket{lr}=\bra{jr}H_0\ket{lk}$ for any $j,k,l,r=0,1$.
    Denote $\tr_d$ as the partial trace on the last $d$-dimensional system, the block representations of $\tr_4[\tilde G M_\psi \tilde G^\dag]$ and $\tr_4[\tilde G M \tilde G^\dag]$ as
    \begin{equation}
        \tr_4[\tilde G M_\psi \tilde G^\dag]=\sum_{j,k=0}^3\ketbra{j}{k}\ox M_\psi^{(jk)},\ \tr_4[\tilde G M \tilde G^\dag]=\sum_{j,k=0}^3\ketbra{j}{k}\ox M^{(jk)},
    \end{equation}
    and we have
    \begin{equation}
        \tr[J_{\bar{\cE}_{A^nB^nA'B'}}M]=\tr[H_0 M^{(00)}]+\tr[H_1\tr_2[M^{(11)}]]+\tr[H_2\tr_2[ M^{(22)}]]+h_3\tr[M^{(33)}].
    \end{equation}
    Then we obtain an SDP related to the constraints \eqref{app:proof_universal_no_go_constraint}
    \begin{subequations}\label{eq:sdp_H1234}
        \begin{align}
        \max\; & \tr[H_0M^{(00)}]+\tr[H_1\tr_2[M^{(11)}]]+\tr[H_2\tr_2[ M^{(22)}]]+h_3\tr[M^{(33)}]\\
        \text{s.t.}\; & H_0,H_1,H_2,h_3,H_0^{T_B}\ge 0;\\
        &\frac49\bra{00}H_0\ket{00}+\frac89\bra{0}H_1\ket{0}+\frac89\bra{0}H_2\ket{0}+\frac{16}{9}h_3\le1;\\
        &\frac43\bra{10}H_0\ket{10}+\frac83\bra{1}H_1\ket{1}\le1;\\
        &\frac43\bra{01}H_0\ket{01}+\frac83\bra{1}H_2\ket{1}\le1;\\
        &4\bra{11}H_0\ket{11}\le1;\\
        & \forall\,\psi,\ \tr[H_0M_\psi^{(00)}]+\tr[H_1\tr_2[M_\psi^{(11)}]]+\tr[H_2\tr_2[ M_\psi^{(22)}]]+h_3\tr[M_\psi^{(33)}]\ge0.
        \end{align}
    \end{subequations}
    and its relaxation
        \begin{subequations}\label{eq:proof_universal_no_go_constraint}
        \begin{align}
        \max\; & \tr[H_0M^{(00)}]+\tr[H_1\tr_2[M^{(11)}]]+\tr[H_2\tr_2[ M^{(22)}]]+h_3\tr[M^{(33)}]\\
        \text{s.t.}\; & H_0,H_1,H_2,h_3,H_0^{T_B}\ge 0;\\
        &\frac49\bra{00}H_0\ket{00}+\frac89\bra{0}H_1\ket{0}+\frac89\bra{0}H_2\ket{0}+\frac{16}{9}h_3\le1;\\
        &\frac43\bra{10}H_0\ket{10}+\frac83\bra{1}H_1\ket{1}\le1;\\
        &\frac43\bra{01}H_0\ket{01}+\frac83\bra{1}H_2\ket{1}\le1;\\
        &4\bra{11}H_0\ket{11}\le1;\\
        & \tr[H_0M_{\ket\Phi^+}^{(00)}]+\tr[H_1\tr_2[M_{\ket\Phi^+}^{(11)}]]+\tr[H_2\tr_2[ M_{\ket\Phi^+}^{(22)}]]+h_3\tr[M_{\ket\Phi^+}^{(33)}]\ge0.
        \end{align}
    \end{subequations}
    It is checked that for any $\gamma\in[0,1]$,
    \begin{align}
       &\tr_2[M^{(11)}]=\tr_2[M^{(22)}]=\operatorname{Diag}\left(-\frac{1}{120} (1-\gamma) \left(27 \gamma^2-56 \gamma+80\right),-\frac{1}{40} \gamma(1-\gamma)(32-15\gamma)\right)\le0,\\
        &\tr[M^{(33)}]=-\frac{1}{60} (1-\gamma) \left(27 \gamma^2-56 \gamma+80\right)\le0,\\
        &\tr_2[M_{\ket\Phi^+}^{(11)}]=\tr_2[M_{\ket\Phi^+}^{(22)}]=\operatorname{Diag}\left(-\frac{1}{24} (1-\gamma) \left(3 \gamma^2-6 \gamma+16\right),-\frac{3}{8} \gamma(1-\gamma)(2-\gamma)\right)\le0,\\
        &\tr[M_{\ket\Phi^+}^{(33)}]=-\frac{1}{12} (1-\gamma) \left(3 \gamma^2-6 \gamma+10\right)\le0.
    \end{align}
    Hence for any optimal solution $(H_0,H_1,H_2,h_3)=(\check H_0,\check H_1,\check H_2,\check h_3)$ of SDP \eqref{eq:proof_universal_no_go_constraint}, $(H_0,H_1,H_2,h_3)=(\check H_0,0,0,0)$ must also be a feasible solution because for any $\gamma\in[0,1]$
    \begin{align}
        &\frac49\bra{00}\check H_0\ket{00}\le\frac49\bra{00}\check H_0\ket{00}+\frac89\bra{0}\check H_1\ket{0}+\frac89\bra{0}\check H_2\ket{0}+\frac{16}{9}\check h_3\le1,\\
        &\frac43\bra{10}\check H_0\ket{10}\le\frac43\bra{10}\check H_0\ket{10}+\frac83\bra{1}\check H_1\ket{1}\le1,\\
        &\frac43\bra{01}\check H_0\ket{01}\le\frac43\bra{01}\check H_0\ket{01}+\frac83\bra{1}\check H_2\ket{1}\le1,\\
        & \tr[\check H_0M_{\ket\Phi^+}^{(00)}]\ge\tr[\check H_0 M_{\ket\Phi^+}^{(00)}]+\tr[\check H_1\tr_2[M_{\ket\Phi^+}^{(11)}]]+\tr[\check H_2\tr_2[M_{\ket\Phi^+}^{(22)}]]+\check h_3\tr[M_{\ket\Phi^+}^{(33)}]\ge0.    
    \end{align}
    Moreover, for any $\gamma\in[0,1]$, by
    \begin{equation}
        \tr[\check H_0 M^{(00)}]+\tr[\check H_1\tr_2[M^{(11)}]]+\tr[\check H_2\tr_2[ M^{(22)}]]+\check h_3\tr[M^{(33)}]\le \tr[\check H_0M^{(00)}]
    \end{equation}
    $(H_0,H_1,H_2,h_3)=(\check H_0,0,0,0)$ is also an optimal solution of SDP \eqref{eq:proof_universal_no_go_constraint}.
    Therefore, we could let $H_1=H_2=h_3=0$, or equivalently
    \begin{equation}
        J_{\bar{\cE}_{A^nB^nA'B'}}
        =\tilde G(\ketbra{00}{00}\ox H_0\ox I_4)\tilde G^\dagger,
    \end{equation}
    and find the following SDP has equivalent maximum with \eqref{eq:proof_universal_no_go_constraint}.
    \begin{subequations}\label{eq:sdp_H1PhiP}
    \begin{align}
    \max\; & \tr[H_0 M^{(00)}]\\
    \text{s.t.}\; & H_0,H_0^{T_B} \ge 0;\\
    &\bra{00}H_0\ket{00}\le\frac94,\ 
    \bra{11}H_0\ket{11}\le\frac14;\\
    &\bra{10}H_0\ket{10}\le\frac34,\ 
    \bra{01}H_0\ket{01}\le\frac34;\\
    & \tr[H_0M_{\ket\Phi^+}^{(00)}]\ge0.
    \end{align}
    \end{subequations}
    
    where
    \begin{align}
    &VM^{(00)}V^\dag=-(1-\gamma) \operatorname{Diag}\left(\frac{1}{16} \gamma  (4-3 \gamma),-\frac{3}{80} \gamma  (4-3 \gamma),\frac{1}{80} \gamma  (32-15 \gamma),\frac{1}{240} \left(27 \gamma ^2-56 \gamma +80\right)
    \right),\\
    &VM_{\ket{\Phi}^+}^{(00)}V^\dag=-(1-\gamma)\left(
    \begin{array}{cccc}
         \frac{1}{16} \gamma  (4-3 \gamma) & 0 & 0 & 0 \\
         0 & -\frac{3}{80} \gamma  (4-3 \gamma ) & 0 & -\frac{1}{20} \gamma  (4-3 \gamma) \\
         0 & 0 & \frac{1}{16} \gamma  (8-3 \gamma ) & 0 \\
         0 & -\frac{1}{20} \gamma  (4-3 \gamma ) & 0 & \frac{1}{240} \left(123 \gamma ^2-264 \gamma +200\right) \\
    \end{array}
    \right),\\
    &\text{for }V=\left(
    \begin{array}{cccc}
         0 & \frac{1}{\sqrt{2}} & \frac{1}{\sqrt{2}} & 0 \\
         -\frac{3}{\sqrt{10}} & 0 & 0 & \frac{1}{\sqrt{10}} \\
         0 & -\frac{1}{\sqrt{2}} & \frac{1}{\sqrt{2}} & 0 \\
         \frac{1}{\sqrt{10}} & 0 & 0 & \frac{3}{\sqrt{10}}
    \end{array}\right).
    \end{align}
    Denote the set of $d$-dimensional positive semi-definite Hermitian as $\cL_d^+$. By Peres–Horodecki criterion~\cite{PhysRevLett.77.1413, horodecki2001separability},    
    the $4$-dimensional Hermitian $H_0$ satisfies $H_0\ge0$ and $H_0^{T_B}\ge0$ if and only if $H_0$ is separable, i.e. there exist several $H_{0j}^A,H_{0j}^B\in\cL_2^+$ with $H_0=\sum_{j}H_{0j}^A\ox H_{0j}^B$. Thus we have an equivalent programming with \eqref{eq:sdp_H1PhiP}:
    \begin{subequations}\label{eq:sdp_H1PhiP_SEP}
    \begin{align}
    \max\; & \tr[(\sum_{j}H_{0j}^A\ox H_{0j}^B) M^{(00)}]\\
    \text{s.t.}\; & \forall j,\ H_{0j}^A, H_{0j}^B \in\cL_2^+;\\
    &\bra{00}(\sum_{j}H_{0j}^A\ox H_{0j}^B)\ket{00}\le\frac94,\ 
    \bra{11}(\sum_jH_{0j}^A\ox H_{0j}^B)\ket{11}\le\frac14;\\
    &\bra{10}(\sum_{j}H_{0j}^A\ox H_{0j}^B)\ket{10}\le\frac34,\ 
    \bra{01}(\sum_{j}H_{0j}^A\ox H_{0j}^B)\ket{01}\le\frac34;\\
    & \tr[(\sum_{j}H_{0j}^A\ox H_{0j}^B)M_{\ket\Phi^+}^{(00)}]\ge0.
    \end{align}
    \end{subequations}
    We claim that 
    \begin{align}
        &H_{0}^A,H_{0}^B\in\cL_2^+,\ \gamma\in(0,1)\implies \tr[(H_{0}^A\ox H_{0}^B)M_{\ket\Phi^+}^{(00)}]\le0,\label{eq:claim1}\\
        &H_{0}^A,H_{0}^B\in\cL_2^+,\ \gamma\in(0,1),\ \tr[(H_{0}^A\ox H_{0}^B)M_{\ket\Phi^+}^{(00)}]=0\implies \tr[(H_{0}^A\ox H_{0}^B)M^{(00)}]\le0,\label{eq:claim2}
    \end{align}
    which will be proved after this proof. 
    By these two claims, we find $0$ is an upper bound of the maximum of \eqref{eq:sdp_H1PhiP_SEP} because
    \begin{align}
        &\gamma\in(0,1)\bigwedge(\forall j,\ H_{0j}^A,H_{0j}^B\in\cL_2^+)\bigwedge  \tr[(\sum_{j}H_{0j}^A\ox H_{0j}^B)M_{\ket\Phi^+}^{(00)}]\ge0\\
        \xRightarrow{\eqref{eq:claim1}} &\gamma\in(0,1)\bigwedge(\forall j,\ H_{0j}^A,H_{0j}^B\in\cL_2^+,\tr[(H_{0j}^A\ox H_{0j}^B)M_{\ket\Phi^+}^{(00)}]\le0)\bigwedge  \tr[(\sum_{j}H_{0j}^A\ox H_{0j}^B)M_{\ket\Phi^+}^{(00)}]\ge0\\
        \iff &\gamma\in(0,1)\bigwedge(\forall j,\ H_{0j}^A,H_{0j}^B\in\cL_2^+,\tr[(H_{0j}^A\ox H_{0j}^B)M_{\ket\Phi^+}^{(00)}]=0)\\
        \xRightarrow{\eqref{eq:claim2}} & \forall j,\ \tr[(H_{0j}^A\ox H_{0j}^B)M^{(00)}]\le0\\
        \implies & \tr[(\sum_jH_{0j}^A\ox H_{0j}^B)M^{(00)}]\le0.
    \end{align}
    On the other hand, feasible solution with each $H_{0j}^A=H_{0j}^B=0$ suggests $0$ is also an lower bound of the maximum of \eqref{eq:sdp_H1PhiP_SEP}. As a result, the maximal values of programming \eqref{eq:sdp_H1PhiP_SEP} and SDPs \eqref{eq:sdp_H1PhiP} and \eqref{eq:proof_universal_no_go_constraint} are all $0$, and the maximal value of \eqref{eq:sdp_H1234} is at most $0$, which implies that \eqref{app:proof_universal_no_go_constraint} has no feasible solution.
\end{proof}

\begin{proof}\hspace{-0.2em}\textbf{of \eqref{eq:claim1}.}
    Claim \eqref{eq:claim1} could be considered as a corollary of following Theorem~\ref{theo:no-go_4_bell}, which is proved independent of this section. Here we could give another proof based on symbolic computation automatic reasoning using the cylindrical algebraic decomposition algorithm~\cite{collins1976quantifier, collins1998quantifier, caviness2012quantifier, collins1991partial}. After setting 
    \begin{equation}
        H_0^A=\left(\begin{array}{cc}
            a & b+i d \\
            b-i d & c \\
        \end{array}\right),\ 
        H_0^B=\left(\begin{array}{cc}
            e & f+i h \\
            f-i h & g \\
        \end{array}\right),
    \end{equation}
    with real variables $\{a,b,c,d,e,f,g,h\}$, we could use a system of inequalities
    \begin{equation}
        \{a\ge 0,c \ge 0,
        a c \ge b^2 + d^2,
        e\ge 0,g \ge 0,
        e g \ge f^2 + h^2\}
    \end{equation}
    to indicate $H_{0}^A,H_{0}^B\ge0$ equivalently.
    Now we introduce another system of inequalities
    \begin{equation}\label{eq:ineqalities1}
    \begin{aligned}    
        &\{a\ge 0,c \ge 0,
        a c \ge b^2 + d^2,
        e\ge 0,g \ge 0,
        e g \ge f^2 + h^2,0<\gamma<1,\\
        &\ (-3 \gamma ^2+6 \gamma -4) a e + 9\gamma(
    \gamma-2) (a g + c e )-9 \left(3 \gamma ^2-6 \gamma +4\right) c g + 24(\gamma - 1) b f + 24 d h>0\}    
    \end{aligned}
    \end{equation}
    where the left hand of the last inequality equals to $48\tr[(H_{0}^A\ox H_{0}^B)M_{\ket\Phi^+}^{(00)}]/(1-\gamma)$ because
    \begin{equation}
        48M_{\ket\Phi^+}^{(00)}/(1-\gamma)=\left(
\begin{array}{cccc}
 -3 \gamma ^2+6 \gamma -4 & 0 & 0 & 6 (\gamma -2) \\
 0 & 9 (\gamma -2) \gamma  & 6 \gamma  & 0 \\
 0 & 6 \gamma  & 9 (\gamma -2) \gamma  & 0 \\
 6 (\gamma -2) & 0 & 0 & -9 \left(3 \gamma ^2-6 \gamma +4\right)
\end{array}
\right).
    \end{equation}
    Thus the semialgebraic set defined by \eqref{eq:ineqalities1} is exactly the feasible solutions of constraints 
    \begin{equation}
        H_{0}^A,H_{0}^B\in\cL_2^+,\ \gamma\in(0,1),\tr[(H_{0}^A\ox H_{0}^B)M_{\ket\Phi^+}^{(00)}]>0.
    \end{equation}
    However, when we use the cylindrical algebraic decomposition algorithm~\cite{collins1976quantifier, collins1998quantifier, caviness2012quantifier, collins1991partial}
    on \eqref{eq:ineqalities1}, we will always obtain {\bf False}, which derives that claim \eqref{eq:claim1} always holds.
\end{proof}

\begin{proof}\hspace{-0.2em}\textbf{of \eqref{eq:claim2}.}
    Analogous to the proof of Claim \eqref{eq:claim1}, Claim \eqref{eq:claim2} could also be proved by the cylindrical algebraic decomposition algorithm from symbolic computation automatic reasoning. To prove Claim \eqref{eq:claim2}, it is equivalent to prove the following constrains
    \begin{align}\label{eq:claim2_constrains}
    \begin{cases}
        \gamma\in(0,1),\\
        H_0^A, H_0^B\in\cL_2^+,\\
        \tr[(H_0^A\ox H_0^B) M_{\ket\Phi^+}^{(00)}]/(1-\gamma)=0,\\
        \tr[(H_0^A\ox H_0^B) M^{(00)}]/(1-\gamma)>0,
    \end{cases}
    \end{align}
    are not feasible.
    After setting 
    \begin{equation}
        H_0^A=\left(\begin{array}{cc}
            a & b+i d \\
            b-i d & c \\
        \end{array}\right),\ 
        H_0^B=\left(\begin{array}{cc}
            e & f+i h \\
            f-i h & g \\
        \end{array}\right),
    \end{equation}
    with real variables $\{a,b,c,d,e,f,g,h\}$, we have established a system of inequalities equivalent to constrains \eqref{eq:claim2_constrains}:
    \begin{equation}\label{eq:ineqalities2}
    \begin{aligned}    
        &\{a\ge 0,c \ge 0,
        a c \ge b^2 + d^2,
        e\ge 0,g \ge 0,
        e g \ge f^2 + h^2,0<\gamma<1,\\
        &\ (-3 \gamma ^2+6 \gamma -4) a e + 9\gamma(
    \gamma-2) (a g + c e )-9 \left(3 \gamma ^2-6 \gamma +4\right) c g + 24(\gamma - 1) b f + 24 d h=0,\\
    &\ \left(-27 \gamma ^2+38 \gamma -8\right) ae+3  \gamma  (15 \gamma -26) (ag+ce)-9  \left(3 \gamma ^2-6 \gamma +8\right) cg+48  (\gamma -1) bf+24 (\gamma +2) d h>0
    \}    
    \end{aligned}
    \end{equation}
    where the left hand of the last inequality equals to $240\tr[(H_{0}^A\ox H_{0}^B)M^{(00)}]/(1-\gamma)$ because
    \begin{equation}
        240M^{(00)}/(1-\gamma)=\left(\begin{array}{cccc}
             -27 \gamma ^2+38 \gamma -8 & 0 & 0 & 6 (\gamma -4) \\
             0 & 3 \gamma  (15 \gamma -26) & 18 \gamma  & 0 \\
             0 & 18 \gamma  & 3 \gamma  (15 \gamma -26) & 0 \\
             6 (\gamma -4) & 0 & 0 & -9 \left(3 \gamma ^2-6 \gamma +8\right)
        \end{array}\right).
    \end{equation}
    When we use the cylindrical algebraic decomposition algorithm~\cite{collins1976quantifier, collins1998quantifier, caviness2012quantifier, collins1991partial}
    on \eqref{eq:ineqalities2}, we will always obtain {\bf False}, which derives that claim \eqref{eq:claim2} always holds. 
    The time cost of the cylindrical algebraic decomposition algorithm depends on the order of the variables. The time cost would be 2.5 seconds with order $(\gamma,a,e,c,g,b,d,f,h)$, but 8.5 hours with order $(\gamma,a,b,c,d,e,f,g,h)$ when we use the ``CylindricalDecomposition" function~\cite{reference.wolfram_2025_cylindricaldecomposition} 
    (or ``Reduce" function~\cite{reference.wolfram_2025_reduce})
    in Wolfram Mathematica~\cite{Mathematica} 
    on a workstation equipped with an Intel Core i9-14900K CPU and 48 GB RAM (2×24 GB, 8000 MT/s). 
\end{proof}

By Schmidt decomposition~\cite{nielsen2010quantum}` , any pure state of dimension $4$ could be rewritten as
$\ket{\psi}=\sigma_0\ket{\psi_0^A}\ket{\psi_0^B}+\sigma_1\ket{\psi_1^A}\ket{\psi_1^B}$ 
with $\sigma_0,\sigma_1\ge0$, $\braket{\psi_j^A}{\psi_k^A}=\braket{\psi_j^B}{\psi_k^B}=\delta_{jk}$ which is Kronecker delta. Denote 
\begin{equation}
    q=\sigma_0^2\sigma_1^2\in[0,\frac14],
\end{equation}
and we have all of $M_{\ket{\psi}}^{(00)}\tr_2[M_{\ket{\psi}}^{(11)}],\tr_2[M_{\ket{\psi}}^{(22)}],\tr[M_{\ket{\psi}}^{(33)}]$ only depend on $\gamma$ and $q$, and moreover linear on $q$:
\begin{equation}
    {\footnotesize
        VM_{\ket{\psi}}^{(00)}V^\dag=\frac{1-\gamma}{720}\left(\begin{array}{cccc}
        45 \gamma  (3 \gamma -4) & 0 & 0 & 0 \\
        0 & 27 \gamma  (4-3 \gamma) & 0 & 24 \gamma  (4-3 \gamma) (10 q-1) \\
        0 & 0 & 15 \gamma  (9 \gamma -32 q-16) & 0 \\
        0 & 24 \gamma  (4-3 \gamma) (10 q-1) & 0 & 111 \gamma ^2-248 \gamma -1920 \gamma ^2 q+4160 \gamma  q-2400 q
        \end{array}\right)
    }
\end{equation}
\begin{align}
    &\tr_2[M_{\ket{\psi}}^{(11)}]=\tr_2[M_{\ket{\psi}}^{(22)}]=\frac{1-\gamma}{72}\operatorname{Diag}\left(
    -21 \gamma ^2+44 \gamma +48 \gamma ^2 q-104 \gamma  q-48,3 \gamma  (9 \gamma +8 q-20)
    \right)\\
    &\tr[M_{\ket{\psi}}^{(33)}]=\frac{1-\gamma}{36}\left(-21 \gamma ^2+44 \gamma +48 \gamma ^2 q-104 \gamma  q+120 q-60\right),
\end{align}
Furthermore, it is checked that $\tr_2[M_{\ket{\psi}}^{(11)}],\tr_2[M_{\ket{\psi}}^{(22)}]$ are both negative semi-definite, and $\tr[M_{\ket{\psi}}^{(33)}]$ is always non-positive for $\gamma\in(0,1)$ and $q\in[0,\frac14]$. Especially, we find that 
\begin{equation}
    q=\frac{1}{10}\implies (M_{\ket{\psi}}^{(00)}\tr_2[M_{\ket{\psi}}^{(11)}],\tr_2[M_{\ket{\psi}}^{(22)}],\tr[M_{\ket{\psi}}^{(33)}])=(M^{(00)},\tr_2[M^{(11)}],\tr_2[M^{(22)}],\tr[M^{(33)}]),
\end{equation}
which indicates that the maximum point of $\tr[H_0M^{(00)}]$ is the same as the maximum point of $\tr[H_0M_{\ket{\psi}}^{(00)}]$ for any pure state $\ket{\psi}$, which implies that

\begin{corollary}
    [No-go theorem for the maximally entangled state set]\textit{For the two-qubit depolarizing noise channel $\cN_{AB}^\gamma$ with noise level $\gamma$, there is no non-trivial $2\rightarrow 1$ LOCC purification protocol for any set containing all maximally entangled pure states.}
\end{corollary}

\clearpage
\section{The specific $\tilde G$ in Supplemental Material \ref{app:proof_universal_no-go}}\label{appdx:G}
The matrix $\Tilde{G}$ is a sparse matrix, and all the non-zero elements are shown as follows:

\begin{equation}
    \begin{aligned}
    &\{1,13\}\to \frac{1}{6},\{1,25\}\to \frac{1}{6},\{1,30\}\to \frac{1}{3},\{1,37\}\to \frac{1}{6},\{1,47\}\to \frac{1}{3},\{1,49\}\to \frac{1}{6},\{1,54\}\to \frac{1}{3},\{1,59\}\to \frac{1}{3},\{1,64\}\to \frac{2}{3},\nonumber\\
    &\{2,9\}\to -\frac{1}{3},\{2,14\}\to -\frac{1}{6},\{2,26\}\to -\frac{1}{6},\{2,33\}\to -\frac{1}{3},\{2,38\}\to -\frac{1}{6},\{2,43\}\to -\frac{2}{3},\{2,48\}\to -\frac{1}{3},\{2,50\}\to -\frac{1}{6},\nonumber\\
    &\{2,60\}\to -\frac{1}{3},\{3,5\}\to -\frac{1}{3},\{3,15\}\to -\frac{1}{6},\{3,17\}\to -\frac{1}{3},\{3,22\}\to -\frac{2}{3},\{3,27\}\to -\frac{1}{6},\{3,32\}\to -\frac{1}{3},\{3,39\}\to -\frac{1}{6},\nonumber\\
    &\{3,51\}\to -\frac{1}{6},\{3,56\}\to -\frac{1}{3},\{4,1\}\to \frac{2}{3},\{4,6\}\to \frac{1}{3},\{4,11\}\to \frac{1}{3},\{4,16\}\to \frac{1}{6},\{4,18\}\to \frac{1}{3},\{4,28\}\to \frac{1}{6},\{4,35\}\to \frac{1}{3},\nonumber\\
    &\{4,40\}\to \frac{1}{6},\{4,52\}\to \frac{1}{6},\{5,13\}\to \frac{i}{2 \sqrt{3}},\{5,25\}\to -\frac{i}{2 \sqrt{3}},\{5,37\}\to \frac{i}{2 \sqrt{3}},\{5,47\}\to \frac{i}{\sqrt{3}},\{5,49\}\to -\frac{i}{2 \sqrt{3}},\nonumber\\
    &\{5,59\}\to -\frac{i}{\sqrt{3}},\{6,14\}\to \frac{i}{2 \sqrt{3}},\{6,26\}\to -\frac{i}{2 \sqrt{3}},\{6,38\}\to \frac{i}{2 \sqrt{3}},\{6,48\}\to \frac{i}{\sqrt{3}},\{6,50\}\to -\frac{i}{2 \sqrt{3}},\{6,60\}\to -\frac{i}{\sqrt{3}},\nonumber\\
    &\{7,5\}\to -\frac{i}{\sqrt{3}},\{7,15\}\to -\frac{i}{2 \sqrt{3}},\{7,17\}\to \frac{i}{\sqrt{3}},\{7,27\}\to \frac{i}{2 \sqrt{3}},\{7,39\}\to -\frac{i}{2 \sqrt{3}},\{7,51\}\to \frac{i}{2 \sqrt{3}},\{8,6\}\to -\frac{i}{\sqrt{3}},\nonumber\\
    &\{8,16\}\to -\frac{i}{2 \sqrt{3}},\{8,18\}\to \frac{i}{\sqrt{3}},\{8,28\}\to \frac{i}{2 \sqrt{3}},\{8,40\}\to -\frac{i}{2 \sqrt{3}},\{8,52\}\to \frac{i}{2 \sqrt{3}},\{9,13\}\to \frac{i}{2 \sqrt{3}},\{9,25\}\to \frac{i}{2 \sqrt{3}},\nonumber\\
    &\{9,30\}\to \frac{i}{\sqrt{3}},\{9,37\}\to -\frac{i}{2 \sqrt{3}},\{9,49\}\to -\frac{i}{2 \sqrt{3}},\{9,54\}\to -\frac{i}{\sqrt{3}},\{10,9\}\to -\frac{i}{\sqrt{3}},\{10,14\}\to -\frac{i}{2 \sqrt{3}},\{10,26\}\to -\frac{i}{2 \sqrt{3}},\nonumber\\
    &\{10,33\}\to \frac{i}{\sqrt{3}},\{10,38\}\to \frac{i}{2 \sqrt{3}},\{10,50\}\to \frac{i}{2 \sqrt{3}},\{11,15\}\to \frac{i}{2 \sqrt{3}},\{11,27\}\to \frac{i}{2 \sqrt{3}},\{11,32\}\to \frac{i}{\sqrt{3}},\{11,39\}\to -\frac{i}{2 \sqrt{3}},\nonumber\\
    &\{11,51\}\to -\frac{i}{2 \sqrt{3}},\{11,56\}\to -\frac{i}{\sqrt{3}},\{12,11\}\to -\frac{i}{\sqrt{3}},\{12,16\}\to -\frac{i}{2 \sqrt{3}},\{12,28\}\to -\frac{i}{2 \sqrt{3}},\{12,35\}\to \frac{i}{\sqrt{3}},\nonumber\\
    &\{12,40\}\to \frac{i}{2 \sqrt{3}},\{12,52\}\to \frac{i}{2 \sqrt{3}},\{13,13\}\to -\frac{1}{2},\{13,25\}\to \frac{1}{2},\{13,37\}\to \frac{1}{2},\{13,49\}\to -\frac{1}{2},\{14,14\}\to -\frac{1}{2},\nonumber\\
    &\{14,26\}\to \frac{1}{2},\{14,38\}\to \frac{1}{2},\{14,50\}\to -\frac{1}{2},\{15,15\}\to -\frac{1}{2},\{15,27\}\to \frac{1}{2},\{15,39\}\to \frac{1}{2},\{15,51\}\to -\frac{1}{2},\{16,16\}\to -\frac{1}{2},\nonumber\\
    &\{16,28\}\to \frac{1}{2},\{16,40\}\to \frac{1}{2},\{16,52\}\to -\frac{1}{2},\{17,29\}\to \frac{1}{\sqrt{6}},\{17,53\}\to \frac{1}{\sqrt{6}},\{17,63\}\to \sqrt{\frac{2}{3}},\{18,13\}\to -\frac{1}{3 \sqrt{2}},\nonumber\\
    &\{18,25\}\to -\frac{1}{3 \sqrt{2}},\{18,30\}\to \frac{1}{3 \sqrt{2}},\{18,37\}\to -\frac{1}{3 \sqrt{2}},\{18,47\}\to -\frac{\sqrt{2}}{3},\{18,49\}\to -\frac{1}{3 \sqrt{2}},\{18,54\}\to \frac{1}{3 \sqrt{2}},\nonumber\\
    &\{18,59\}\to -\frac{\sqrt{2}}{3},\{18,64\}\to \frac{\sqrt{2}}{3},\{19,21\}\to -\sqrt{\frac{2}{3}},\{19,31\}\to -\frac{1}{\sqrt{6}},\{19,55\}\to -\frac{1}{\sqrt{6}},\{20,5\}\to \frac{\sqrt{2}}{3},\{20,15\}\to \frac{1}{3 \sqrt{2}},\nonumber\\
    &\{20,17\}\to \frac{\sqrt{2}}{3},\{20,22\}\to -\frac{\sqrt{2}}{3},\{20,27\}\to \frac{1}{3 \sqrt{2}},\{20,32\}\to -\frac{1}{3 \sqrt{2}},\{20,39\}\to \frac{1}{3 \sqrt{2}},\{20,51\}\to \frac{1}{3 \sqrt{2}},\nonumber\\
    &\{20,56\}\to -\frac{1}{3 \sqrt{2}},\{21,9\}\to \frac{1}{3 \sqrt{2}},\{21,14\}\to -\frac{1}{3 \sqrt{2}},\{21,26\}\to -\frac{1}{3 \sqrt{2}},\{21,33\}\to \frac{1}{3 \sqrt{2}},\{21,38\}\to -\frac{1}{3 \sqrt{2}},\nonumber\\
    &\{21,43\}\to \frac{\sqrt{2}}{3},\{21,48\}\to -\frac{\sqrt{2}}{3},\{21,50\}\to -\frac{1}{3 \sqrt{2}},\{21,60\}\to -\frac{\sqrt{2}}{3},\{22,10\}\to \frac{1}{\sqrt{6}},\{22,34\}\to \frac{1}{\sqrt{6}},\{22,44\}\to \sqrt{\frac{2}{3}},\nonumber\\
    &\{23,1\}\to -\frac{\sqrt{2}}{3},\{23,6\}\to \frac{\sqrt{2}}{3},\{23,11\}\to -\frac{1}{3 \sqrt{2}},\{23,16\}\to \frac{1}{3 \sqrt{2}},\{23,18\}\to \frac{\sqrt{2}}{3},\{23,28\}\to \frac{1}{3 \sqrt{2}},\{23,35\}\to -\frac{1}{3 \sqrt{2}},\nonumber\\
    &\{23,40\}\to \frac{1}{3 \sqrt{2}},\{23,52\}\to \frac{1}{3 \sqrt{2}},\{24,2\}\to -\sqrt{\frac{2}{3}},\{24,12\}\to -\frac{1}{\sqrt{6}},\{24,36\}\to -\frac{1}{\sqrt{6}},\{25,29\}\to \frac{i}{\sqrt{2}},\{25,53\}\to -\frac{i}{\sqrt{2}},\nonumber\\
    &\{26,13\}\to -\frac{i}{\sqrt{6}},\{26,25\}\to -\frac{i}{\sqrt{6}},\{26,30\}\to \frac{i}{\sqrt{6}},\{26,37\}\to \frac{i}{\sqrt{6}},\{26,49\}\to \frac{i}{\sqrt{6}},\{26,54\}\to -\frac{i}{\sqrt{6}},\{27,31\}\to \frac{i}{\sqrt{2}},\nonumber\\
    &\{27,55\}\to -\frac{i}{\sqrt{2}},\{28,15\}\to -\frac{i}{\sqrt{6}},\{28,27\}\to -\frac{i}{\sqrt{6}},\{28,32\}\to \frac{i}{\sqrt{6}},\{28,39\}\to \frac{i}{\sqrt{6}},\{28,51\}\to \frac{i}{\sqrt{6}},\{28,56\}\to -\frac{i}{\sqrt{6}},\nonumber\\
\end{aligned}
\end{equation}

\begin{equation}
\begin{aligned}
    &\{29,9\}\to \frac{i}{\sqrt{6}},\{29,14\}\to -\frac{i}{\sqrt{6}},\{29,26\}\to -\frac{i}{\sqrt{6}},\{29,33\}\to -\frac{i}{\sqrt{6}},\{29,38\}\to \frac{i}{\sqrt{6}},\{29,50\}\to \frac{i}{\sqrt{6}},\{30,10\}\to \frac{i}{\sqrt{2}}, \nonumber\\
    &\{30,34\}\to -\frac{i}{\sqrt{2}},\{31,11\}\to \frac{i}{\sqrt{6}},\{31,16\}\to -\frac{i}{\sqrt{6}},\{31,28\}\to -\frac{i}{\sqrt{6}},\{31,35\}\to -\frac{i}{\sqrt{6}},\{31,40\}\to \frac{i}{\sqrt{6}},\{31,52\}\to \frac{i}{\sqrt{6}},\nonumber\\
    &\{32,12\}\to \frac{i}{\sqrt{2}},\{32,36\}\to -\frac{i}{\sqrt{2}},\{33,45\}\to \frac{1}{\sqrt{6}},\{33,57\}\to \frac{1}{\sqrt{6}},\{33,62\}\to \sqrt{\frac{2}{3}},\{34,41\}\to -\sqrt{\frac{2}{3}},\{34,46\}\to -\frac{1}{\sqrt{6}},\nonumber\\
    &\{34,58\}\to -\frac{1}{\sqrt{6}},\{35,13\}\to -\frac{1}{3 \sqrt{2}},\{35,25\}\to -\frac{1}{3 \sqrt{2}},\{35,30\}\to -\frac{\sqrt{2}}{3},\{35,37\}\to -\frac{1}{3 \sqrt{2}},\{35,47\}\to \frac{1}{3 \sqrt{2}},\nonumber\\
    &\{35,49\}\to -\frac{1}{3 \sqrt{2}},\{35,54\}\to -\frac{\sqrt{2}}{3},\{35,59\}\to \frac{1}{3 \sqrt{2}},\{35,64\}\to \frac{\sqrt{2}}{3},\{36,9\}\to \frac{\sqrt{2}}{3},\{36,14\}\to \frac{1}{3 \sqrt{2}},\{36,26\}\to \frac{1}{3 \sqrt{2}},\nonumber\\
    &\{36,33\}\to \frac{\sqrt{2}}{3},\{36,38\}\to \frac{1}{3 \sqrt{2}},\{36,43\}\to -\frac{\sqrt{2}}{3},\{36,48\}\to -\frac{1}{3 \sqrt{2}},\{36,50\}\to \frac{1}{3 \sqrt{2}},\{36,60\}\to -\frac{1}{3 \sqrt{2}},\nonumber\\
    &\{37,5\}\to \frac{1}{3 \sqrt{2}},\{37,15\}\to -\frac{1}{3 \sqrt{2}},\{37,17\}\to \frac{1}{3 \sqrt{2}},\{37,22\}\to \frac{\sqrt{2}}{3},\{37,27\}\to -\frac{1}{3 \sqrt{2}},\{37,32\}\to -\frac{\sqrt{2}}{3},\nonumber\\
    &\{37,39\}\to -\frac{1}{3 \sqrt{2}},\{37,51\}\to -\frac{1}{3 \sqrt{2}},\{37,56\}\to -\frac{\sqrt{2}}{3},\{38,1\}\to -\frac{\sqrt{2}}{3},\{38,6\}\to -\frac{1}{3 \sqrt{2}},\{38,11\}\to \frac{\sqrt{2}}{3},\nonumber\\
    &\{38,16\}\to \frac{1}{3 \sqrt{2}},\{38,18\}\to -\frac{1}{3 \sqrt{2}},\{38,28\}\to \frac{1}{3 \sqrt{2}},\{38,35\}\to \frac{\sqrt{2}}{3},\{38,40\}\to \frac{1}{3 \sqrt{2}},\{38,52\}\to \frac{1}{3 \sqrt{2}},\{39,7\}\to \frac{1}{\sqrt{6}},\nonumber\\
    &\{39,19\}\to \frac{1}{\sqrt{6}},\{39,24\}\to \sqrt{\frac{2}{3}},\{40,3\}\to -\sqrt{\frac{2}{3}},\{40,8\}\to -\frac{1}{\sqrt{6}},\{40,20\}\to -\frac{1}{\sqrt{6}},\{41,45\}\to \frac{i}{\sqrt{2}},\{41,57\}\to -\frac{i}{\sqrt{2}},\nonumber\\
    &\{42,46\}\to \frac{i}{\sqrt{2}},\{42,58\}\to -\frac{i}{\sqrt{2}},\{43,13\}\to -\frac{i}{\sqrt{6}},\{43,25\}\to \frac{i}{\sqrt{6}},\{43,37\}\to -\frac{i}{\sqrt{6}},\{43,47\}\to \frac{i}{\sqrt{6}},\{43,49\}\to \frac{i}{\sqrt{6}},\nonumber\\
    &\{43,59\}\to -\frac{i}{\sqrt{6}},\{44,14\}\to -\frac{i}{\sqrt{6}},\{44,26\}\to \frac{i}{\sqrt{6}},\{44,38\}\to -\frac{i}{\sqrt{6}},\{44,48\}\to \frac{i}{\sqrt{6}},\{44,50\}\to \frac{i}{\sqrt{6}},\{44,60\}\to -\frac{i}{\sqrt{6}},\nonumber\\
    &\{45,5\}\to \frac{i}{\sqrt{6}},\{45,15\}\to -\frac{i}{\sqrt{6}},\{45,17\}\to -\frac{i}{\sqrt{6}},\{45,27\}\to \frac{i}{\sqrt{6}},\{45,39\}\to -\frac{i}{\sqrt{6}},\{45,51\}\to \frac{i}{\sqrt{6}},\{46,6\}\to \frac{i}{\sqrt{6}},\nonumber\\
    &\{46,16\}\to -\frac{i}{\sqrt{6}},\{46,18\}\to -\frac{i}{\sqrt{6}},\{46,28\}\to \frac{i}{\sqrt{6}},\{46,40\}\to -\frac{i}{\sqrt{6}},\{46,52\}\to \frac{i}{\sqrt{6}},\{47,7\}\to \frac{i}{\sqrt{2}},\{47,19\}\to -\frac{i}{\sqrt{2}},\nonumber\\
    &\{48,8\}\to \frac{i}{\sqrt{2}},\{48,20\}\to -\frac{i}{\sqrt{2}},\{49,61\}\to 1,\{50,45\}\to -\frac{1}{\sqrt{3}},\{50,57\}\to -\frac{1}{\sqrt{3}},\{50,62\}\to \frac{1}{\sqrt{3}},\{51,29\}\to -\frac{1}{\sqrt{3}},\nonumber\\
    &\{51,53\}\to -\frac{1}{\sqrt{3}},\{51,63\}\to \frac{1}{\sqrt{3}},\{52,13\}\to \frac{1}{3},\{52,25\}\to \frac{1}{3},\{52,30\}\to -\frac{1}{3},\{52,37\}\to \frac{1}{3},\{52,47\}\to -\frac{1}{3},\nonumber\\
    &\{52,49\}\to \frac{1}{3},\{52,54\}\to -\frac{1}{3},\{52,59\}\to -\frac{1}{3},\{52,64\}\to \frac{1}{3},\{53,21\}\to \frac{1}{\sqrt{3}},\{53,31\}\to -\frac{1}{\sqrt{3}},\{53,55\}\to -\frac{1}{\sqrt{3}},\nonumber\\
    &\{54,5\}\to -\frac{1}{3},\{54,15\}\to \frac{1}{3},\{54,17\}\to -\frac{1}{3},\{54,22\}\to \frac{1}{3},\{54,27\}\to \frac{1}{3},\{54,32\}\to -\frac{1}{3},\{54,39\}\to \frac{1}{3},\{54,51\}\to \frac{1}{3},\nonumber\\
    &\{54,56\}\to -\frac{1}{3},\{55,23\}\to 1,\{56,7\}\to -\frac{1}{\sqrt{3}},\{56,19\}\to -\frac{1}{\sqrt{3}},\{56,24\}\to \frac{1}{\sqrt{3}},\{57,41\}\to \frac{1}{\sqrt{3}},\{57,46\}\to -\frac{1}{\sqrt{3}},\nonumber\\
    &\{57,58\}\to -\frac{1}{\sqrt{3}},\{58,42\}\to 1,\{59,9\}\to -\frac{1}{3},\{59,14\}\to \frac{1}{3},\{59,26\}\to \frac{1}{3},\{59,33\}\to -\frac{1}{3},\{59,38\}\to \frac{1}{3},\{59,43\}\to \frac{1}{3},\nonumber\\
    &\{59,48\}\to -\frac{1}{3},\{59,50\}\to \frac{1}{3},\{59,60\}\to -\frac{1}{3},\{60,10\}\to -\frac{1}{\sqrt{3}},\{60,34\}\to -\frac{1}{\sqrt{3}},\{60,44\}\to \frac{1}{\sqrt{3}},\{61,1\}\to \frac{1}{3},\nonumber\\
    &\{61,6\}\to -\frac{1}{3},\{61,11\}\to -\frac{1}{3},\{61,16\}\to \frac{1}{3},\{61,18\}\to -\frac{1}{3},\{61,28\}\to \frac{1}{3},\{61,35\}\to -\frac{1}{3},\{61,40\}\to \frac{1}{3},\{61,52\}\to \frac{1}{3},\nonumber\\
    &\{62,2\}\to \frac{1}{\sqrt{3}},\{62,12\}\to -\frac{1}{\sqrt{3}},\{62,36\}\to -\frac{1}{\sqrt{3}},\{63,3\}\to \frac{1}{\sqrt{3}},\{63,8\}\to -\frac{1}{\sqrt{3}},\{63,20\}\to -\frac{1}{\sqrt{3}},\{64,4\}\to 1.
\end{aligned}
\end{equation}

\section{Semidefinite program for optimal average PPT purification fidelity}\label{app:PPT_bound}

\begin{definition}
For a given pure state set $\cS$, noise channel $\cN$ and the average success probability $\bar{P}$, the optimal $n\rightarrow 1$ average PPT purification fidelity is defined by 
    \begin{align}
        \bar{F}_{\rm PPT}^*(n,\bar{P};\cS,\cN)
        = \max_{\cE}   \big\{\bar{F}(n;\cS,\cN,\cE) \,\big| \,\cE\in{\rm PPT} \cap {\rm CPTN}, 
        \frac
        {1}{|S|} \sum_{\psi\in\cS} \tr[\cE(\cN(\psi)^{\ox n})]=\bar{P}\big\}.
    \end{align}
\end{definition}

\begin{proposition}\label{prop:SDP_for_discrete_set}
    For a given pure state set $\cS$, noise channel $\cN$, and the average success probability $\bar{P}$, the optimal purification fidelity achieved by the $n\rightarrow 1$ PPT purification protocol $\cE$ can be characterized by the following SDP.
    \begin{subequations}\label{eq:SDP_for_PPT_bound}
        \begin{align}
            \bar{F}_{\rm PPT}^*(n, \bar{P};\cS, \cN) = \max\; & \frac{1}{\bar{P}} \tr[Q_{A^nB^nA'B'} J_{\cE_{A^nB^nA'B'}}^{T_{A^nB^n}}];\\
            \text{s.t.}\; & \tr[R_{A^nB^nA'B'} J_{\cE_{A^nB^nA'B'}} ^{T_{A^nB^n}}] = \bar{P};\label{eq:SDP_optimal_prob}\\
            & J_{\cE_{A^nB^nA'B'}} \ge 0; \; \tr_{A'B'} [J_{\cE_{A^nB^nA'B'}}] \le I_{A^nB^n};\label{eq:SDP_optimal_CPTN}\\
            & J_{\cE_{A^nB^nA'B'}}^{T_{A^nA'}} \ge 0,\label{eq:SDP_optimal_PPT}\\
            & \frac{\tr[J_{\cE_{A^nB^nA'B'}} ^{T_{A^nB^n}} \cN^{\ox n}\ox \cI(\psi_i^{\ox n+1})]}{\tr[J_{\cE_{A^nB^nA'B'}} ^{T_{A^nB^n}} \cN^{\ox n}(\psi_i^{\ox n})\ox I_{A'B'}]} \ge \tr[\psi_i \cN(\psi_i)] \; \forall \; \psi_i\in\cS \label{eq:SDP_optimal_all_state}, 
        \end{align}
    \end{subequations}
    where $J_{\cE_{A^nB^nA'B'}}$ is the \Choi matrix of map $\cE$, $T$ refers to the transpose operation. In addition \begin{align}
        Q_{A^nB^nA'B'}&=\frac{1}{|\cS|}\sum_{\psi_i\in\cS}\cN^{\ox n} \ox \cI \left(\psi_i^{\ox n+1}\right)\\
        R_{A^nB^nA'B'}& = \frac{1}{|\cS|}\sum_{\psi_i\in\cS}\cN^{\ox n}\left(\psi_i^{\ox n}\right) \ox I
    \end{align}
    where $|\cS|$ is the size of the set $\cS$, $\cI$ and $I$ refers to identity channel and identity matrix, respectively.
\end{proposition}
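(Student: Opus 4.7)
The plan is to translate every component of the optimization defining $\bar{F}_{\rm PPT}^*(n,\bar{P};\cS,\cN)$ into a linear or semidefinite constraint on the Choi matrix $J_{\cE_{A^nB^nA'B'}}$ of the purification operation, and then to collect these constraints into the single SDP in the statement.

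First, I would invoke the Choi--Jamio\l{}kowski isomorphism to represent the CPTN map $\cE : A^nB^n \to A'B'$ by its Choi matrix $J_{\cE_{A^nB^nA'B'}}$, using the standard identity $\cE(\rho) = \tr_{A^nB^n}\!\big[J_{\cE_{A^nB^nA'B'}}^{T_{A^nB^n}} (\rho_{A^nB^n} \otimes I_{A'B'})\big]$. Under this correspondence, complete positivity is equivalent to $J_{\cE_{A^nB^nA'B'}} \ge 0$, and trace non-increase is equivalent to $\tr_{A'B'}[J_{\cE_{A^nB^nA'B'}}] \le I_{A^nB^n}$, giving Eq.~\eqref{eq:SDP_optimal_CPTN}. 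The PPT condition (with respect to the bipartition $A^nA' \mid B^nB'$ on which LOCC and hence PPT are defined) translates to $J_{\cE_{A^nB^nA'B'}}^{T_{A^nA'}} \ge 0$, yielding Eq.~\eqref{eq:SDP_optimal_PPT}.

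Second, for each $\psi \in \cS$ I would write the unnormalized output and its fidelity with $\psi$ as linear functionals of $J_{\cE_{A^nB^nA'B'}}^{T_{A^nB^n}}$. Using that $\psi$ is pure, $F(\psi, \tilde{\sigma}_\psi) = \tr[\psi_{A'B'} \tilde{\sigma}_\psi] = \tr\!\big[J_{\cE_{A^nB^nA'B'}}^{T_{A^nB^n}} \big(\cN^{\ox n}(\psi^{\ox n}) \otimes \psi_{A'B'}\big)\big] = \tr\!\big[J_{\cE_{A^nB^nA'B'}}^{T_{A^nB^n}} \cN^{\ox n} \ox \cI(\psi^{\ox n+1})\big]$, and $p_\psi = \tr\!\big[J_{\cE_{A^nB^nA'B'}}^{T_{A^nB^n}} \cN^{\ox n}(\psi^{\ox n}) \otimes I_{A'B'}\big]$. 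Averaging the first expression over $\psi \in \cS$ and dividing by $\bar{P}$ produces the objective $\tfrac{1}{\bar{P}} \tr[Q_{A^nB^nA'B'} J_{\cE_{A^nB^nA'B'}}^{T_{A^nB^n}}]$ with $Q$ as defined, while averaging the success probabilities and equating to $\bar{P}$ gives the linear equality Eq.~\eqref{eq:SDP_optimal_prob} with the operator $R$ as stated.

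Third, I would encode the per-state fidelity-improvement requirement from Definition~\ref{def:universal_purification_protocol}. The condition $F(\psi,\sigma_\psi) \ge F(\psi, \cN(\psi))$ becomes, after clearing the positive denominator $p_\psi$, a single linear inequality in $J_{\cE_{A^nB^nA'B'}}$, equivalent to Eq.~\eqref{eq:SDP_optimal_all_state}. Imposing this inequality for each $\psi_i \in \cS$ gives $|\cS|$ linear constraints, and the combined list of constraints reduces the optimization of $\bar{F}(n;\cS,\cN,\cE)$ over PPT $\cap$ CPTN maps to the SDP in Eq.~\eqref{eq:SDP_for_PPT_bound}. The main care point is bookkeeping: keeping the partial transposes on the correct subsystems (the $T_{A^nB^n}$ appearing in the action of the channel versus the $T_{A^nA'}$ defining the PPT cut) and verifying that the bipartition used for PPT matches the LOCC bipartition in the rest of the paper; once this bookkeeping is verified, the proof is a direct rewriting and no non-trivial inequality is needed.
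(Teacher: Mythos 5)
Your proposal is correct and follows essentially the same route as the paper's proof: both rewrite the fidelity and success probability as linear functionals of $J_{\cE_{A^nB^nA'B'}}^{T_{A^nB^n}}$ via the Choi--Jamio\l{}kowski identity, average over $\cS$ to obtain the objective with $Q$ and the probability constraint with $R$, and identify CPTN and PPT with the stated semidefinite conditions. Your added remarks about clearing the positive denominator in the per-state constraint and matching the PPT cut $A^nA'\mid B^nB'$ to the LOCC bipartition are exactly the bookkeeping the paper leaves implicit, so nothing is missing.
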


\begin{proof}
    The objective function of the primal SDP is 
    \begin{align}
        \frac{1}{\bar{P}|\cS|}\sum_{\psi_i\in\cS}\tr[\psi_i \Tilde{\rho}_i] &= \frac{1}{\bar{P}|\cS|}\sum_{\psi_i\in\cS}\tr[\psi_i \cE(\cN(\psi_i)^{\ox n})]\\
        &= \frac{1}{\bar{P}|\cS|}\sum_{\psi_i\in\cS}\tr\left[J_{\cE_{A^nB^nA'B'}}^{T_{A^nB^n}} \cN(\psi_i)^{\ox n}\ox \psi_i\right]\\
        &= \frac{1}{\bar{P}}\tr\left[J_{\cE_{A^nB^nA'B'}}^{T_{A^nB^n}} \sum_{\psi_i\in\cS}\frac{1}{|\cS|} \cN^{\ox n}\ox \cI \left(\psi_i^{\ox n + 1}\right)\right]\\
        &= \frac{1}{\bar{P}}\tr\left[Q_{A^nB^nA'B'} J_{\cE_{A^nB^nA'B'}}^{T_{A^nB^n}}\right].
    \end{align}
    Similarly, we arrive at the average success probability constraint as shown in Eq.~\eqref{eq:SDP_optimal_prob}. Eq.~\eqref{eq:SDP_optimal_CPTN} and Eq.~\eqref{eq:SDP_optimal_PPT} ensure the purification protocol $\cE$ satisfies CPTN and PPT constraints, respectively. Eq.~\eqref{eq:SDP_optimal_all_state} guarantees that for every state in the set $\cS$, the purified state by the CPTN map $\cE$ is less noisy.
\end{proof}

\subsection{Dual problem}
According to the primal problem, we have the Lagrange function
\begin{align}
    \cL(&J_{\cE_{A^nB^nA'B'}}, x, K_{A^nB^n}, L_{A^nB^nA'B'}, y_1, y_2, \cdots, y_{|\cS|})\nonumber\\
    = &\frac{1}{\bar{P}} \tr[Q_{A^nB^nA'B'} J_{\cE_{A^nB^nA'B'}}^{T_{A^nB^n}}] + x\langle \tr[R_{A^nB^nA'B'} J_{\cE_{A^nB^nA'B'}} ^{T_{A^nB^n}}] - \bar{P}\rangle
    + \langle K_{A^nB^n}, I_{A^nB^n} - \tr_{A'B'} [J_{\cE_{A^nB^nA'B'}}] \rangle \nonumber\\
    & + \langle L_{A^nB^nA'B'}, J_{\cE_{A^nB^nA'B'}}^{T_{A^nA'}}\rangle + \sum_{\psi_i\in \cS} y_i \left(\tr[J_{\cE_{A^nB^nA'B'}} ^{T_{A^nB^n}} \cN^{\ox n}\ox \cI(\psi_i^{\ox n+1})]\right)\nonumber\\
    & - \sum_{\psi_i\in \cS} y_i \left( \tr[\psi_i \cN(\psi_i)] \tr[J_{\cE_{A^nB^nA'B'}} ^{T_{A^nB^n}} \cN^{\ox n}(\psi_i^{\ox n})\ox I_{A'B'}]\right)\\
    =& -\bar{P}x + \langle K_{A^nB^n}\rangle +\Big\langle J_{\cE_{A^nB^nA'B'}}, \frac{1}{p} Q_{A^nB^nA'B'} ^{T_{A^nB^n}} + x R_{A^nB^nA'B'}^{T_{A^nB^n}} -K_{A^nB^n}\ox I_{A'B'} +L_{A^nB^nA'B'}^{T_{A^nA'}} \nonumber\\
    & + \sum_{\psi_i\in\cS} y_i \cN^{\ox n}\ox \cI (\psi_i^{\ox n+1}) ^{T_{A^nB^n}}- \sum_{\psi_i\in\cS} y_i \tr[\psi_i \cN(\psi_i)] \cN^{\ox n} (\psi_i^{\ox n}) ^{T} \ox I_{A'B'} \Big\rangle,
\end{align}
where $x, K_{A^nB^n}, L_{A^nB^nA'B'}, y_1,\cdots, y_{|\cS|}$ are Lagrange multipliers. The Lagrange dual function is 
$$g(x, K_{A^nB^n}, L_{A^nB^nA'B'}, \mathbf{y}) = \sup_{J_\cE \ge 0} \cL(J_{\cE_{A^nB^nA'B'}}, x, K_{A^nB^n}, L_{A^nB^nA'B'}, \mathbf{y}),$$
where $\mathbf{y}=[y_1, y_2, \cdots, y_{|\cS|}]$. The variables must satisfies $K_{A^nB^n}\ge 0$, $L_{A^nB^nA'B'}\ge 0$, and $\mathbf{y} \geq 0$.
Otherwise, there is no feasible solution to the primal problem. Besides, since $J_{\cE_{A^nB^n}}\ge 0$, it holds that 
\begin{align}
    \frac{1}{\bar{P}} Q_{A^nB^nA'B'} ^{T_{A^nB^n}} +& x R_{A^nB^nA'B'}^{T_{A^nB^n}} -K_{A^nB^n}\ox I_{A'B'} +L_{A^nB^nA'B'}^{T_{A^nA'}} + \sum_{\psi_i\in\cS} y_i \cN^{\ox n}\ox \cI (\psi_i^{\ox n+1}) ^{T_{A^nB^n}} \nonumber\\
    &- \sum_{\psi_i\in\cS} y_i \tr[\psi_i \cN(\psi_i)] \cN^{\ox n} (\psi_i^{\ox n}) ^{T} \ox I_{A'B'} \le 0,
\end{align}
otherwise the dual problem is unbounded. Thus, we arrive at the dual SDP as
\begin{subequations}
    \begin{align}
        \min \; &\langle K_{A^nB^n}\rangle-\bar{P}x \\
        \text{s.t.}\;& K_{A^nB^n}\ge 0;\; L_{A^nB^nA'B'}\ge 0;\; \mathbf{y} \geq 0;\\
        &\frac{1}{\bar{P}} Q_{A^nB^nA'B'} ^{T_{A^nB^n}} + x R_{A^nB^nA'B'}^{T_{A^nB^n}} -K_{A^nB^n}\ox I_{A'B'} +L_{A^nB^nA'B'}^{T_{A^nA'}} + \sum_{\psi_i\in\cS} y_i \cN^{\ox n}\ox \cI (\psi_i^{\ox n+1}) ^{T_{A^nB^n}} \nonumber\\
    \qquad& - \sum_{\psi_i\in\cS} y_i \tr[\psi_i \cN(\psi_i)] \cN^{\ox n} (\psi_i^{\ox n}) ^{T} \ox I_{A'B'} \le 0.
    \end{align}
\end{subequations}
\section{Equivalence between global and local depolarizing noises}\label{app:equ_of_global_local}
\begin{lemma}\label{lemma:local_global_depo}
    Applying depolarizing channels locally on maximally entangled states $\cN_A^{\gamma_1}\ox \cN_B^{\gamma_2}$ is equivalent to applying global depolarizing channel $\cN_{AB}^{\gamma'}$ with parameter $\gamma' = 1-(1-\gamma_1)(1-\gamma_2)$.
\end{lemma}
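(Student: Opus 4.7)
The plan is to verify the identity by direct computation, exploiting two elementary facts: (i) the maximally mixed state $I_{AB}/d^2$ is a fixed point of any (local or global) depolarizing channel; and (ii) the reduced state of a maximally entangled state $\Phi$ on either subsystem equals the maximally mixed state $I/d$ on that subsystem. These two observations together force each local depolarizing step to produce a state of exactly the two-term form $(\text{weight})\,\Phi + (\text{weight}')\,I_{AB}/d^2$, which is the hallmark of a global depolarizing output.

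First I would apply $\cN_A^{\gamma_1}$ alone to $\Phi$. Writing the local depolarizing channel acting on a bipartite state as
\begin{equation}
    \cN_A^{\gamma_1}(\rho_{AB}) = (1-\gamma_1)\rho_{AB} + \gamma_1 \frac{I_A}{d}\ox \tr_A[\rho_{AB}],
\end{equation}
and invoking fact (ii), we get $\cN_A^{\gamma_1}(\Phi) = (1-\gamma_1)\Phi + \gamma_1 \,I_{AB}/d^2$. Next I would apply $\cN_B^{\gamma_2}$ to this intermediate state. The same computation on the $\Phi$-summand yields $\cN_B^{\gamma_2}(\Phi) = (1-\gamma_2)\Phi + \gamma_2 \,I_{AB}/d^2$, while fact (i) gives $\cN_B^{\gamma_2}(I_{AB}/d^2) = I_{AB}/d^2$. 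Linearity then produces
\begin{equation}
    (\cN_A^{\gamma_1}\ox \cN_B^{\gamma_2})(\Phi) = (1-\gamma_1)(1-\gamma_2)\,\Phi + \bigl[\,1-(1-\gamma_1)(1-\gamma_2)\bigr]\frac{I_{AB}}{d^2}.
\end{equation}
Setting $\gamma' = 1-(1-\gamma_1)(1-\gamma_2)$, the right-hand side is exactly $(1-\gamma')\Phi + \gamma' \,I_{AB}/d^2 = \cN_{AB}^{\gamma'}(\Phi)$, which is the claimed equivalence.

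There is no substantive obstacle: the argument is essentially a two-line bookkeeping exercise. The only conceptual input is that a maximally entangled state has maximally mixed marginals, which is precisely what collapses the composite local action into the convex form of a single global depolarizing channel. I note also that the order in which $\cN_A^{\gamma_1}$ and $\cN_B^{\gamma_2}$ are applied is immaterial since they act on disjoint subsystems and hence commute, so the identity $\gamma' = 1-(1-\gamma_1)(1-\gamma_2)$ is symmetric in $\gamma_1$ and $\gamma_2$ as expected.
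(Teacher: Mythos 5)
Your proof is correct and follows essentially the same route as the paper: both arguments rest on the fact that a maximally entangled state has maximally mixed marginals, so every local-identity replacement produces $I_{AB}/d^2$, and the result follows by expanding the action of the two local channels (the paper writes the four terms of $\cN_A^{\gamma_1}\ox\cN_B^{\gamma_2}$ at once, while you compose the channels sequentially — the same bookkeeping). The resulting weight $\gamma_1+\gamma_2-\gamma_1\gamma_2 = 1-(1-\gamma_1)(1-\gamma_2)$ matches the paper's computation exactly.
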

\begin{proof}
    Denote $\Phi_{AB}^{MAX}$ as an arbitrary maximally entangled state. Due to the property of maximally entangled states, the partial traced density matrix is $\rho_A = \rho_B = I/d$, where $d$ is the dimension of the subsystems $A$ and $B$, i.e., $d_A = d_B = d$. When apply the local depolarizing noise to $\Phi_{AB}^{MAX}$, we have
    \begin{align}
        \cN_A^{\gamma_1}\ox \cN_B^{\gamma_2}(\Phi_{AB}^{MAX}) &= (1-\gamma_1)(1-\gamma_2) \Phi_{AB}^{MAX} + \gamma_1(1-\gamma_2)\frac{I_{AB}}{d_{AB}} +  \gamma_2(1-\gamma_1)\frac{I_{AB}}{d_{AB}} +  \gamma_1 \gamma_2\frac{I_{AB}}{d_{AB}}\nonumber\\
        &= (1-\gamma_1)(1-\gamma_2) \Phi_{AB}^{MAX} + (\gamma_1 + \gamma_2 - \gamma_1 \gamma_2) \frac{I_{AB}}{d_{AB}}\nonumber\\
        & = (1-\gamma_1)(1-\gamma_2) \Phi_{AB}^{MAX} + (1-(1-\gamma_1)(1-\gamma_2)) \frac{I_{AB}}{d_{AB}}\nonumber\\
        &= (1-\gamma') \Phi_{AB}^{MAX} + \gamma' \frac{I_{AB}}{d_{AB}} \\
        &= \cN_{AB}^{\gamma'}(\Phi_{AB}^{MAX}) \nonumber,
    \end{align}
    where $\gamma' = 1-(1-\gamma_1)(1-\gamma_2)$. The proof is completed.
\end{proof}


\section{Proof of Theorem~\ref{theo:no-go_4_bell}}\label{app:proof_of_no-go_4_bell}

\vspace{0.5cm}
\noindent\textbf{Theorem~\ref{theo:no-go_4_bell}} {\rm \textbf{(No-go theorem for four Bell states)}}
\textit{For the local depolarizing noise channel $\cN_{AB}^{\gamma_1,\gamma_2}$ with noise levels $\gamma_1$ and $\gamma_2$, there is no nontrivial $2\rightarrow 1$ LOCC purification protocol for the set that contains 4 Bell states $\cS_B$.}
\vspace{0.5cm}

\begin{proof}
Note that arbitrary four orthogonal maximally entangled pure states can be achieved by applying local unitaries on the four Bell states. Thus, it is equivalent to proving there is no nontrivial purification protocol for the set that contains four Bell states $\cS_B = \{\Phi^+, \Phi^-, \Psi^+,\Psi^- \}$.

The main idea of this proof is to show that the optimal $2\rightarrow 1$ PPT protocol is trivial, and since LOCC is a subset of PPT operations, we arrive at the no-go theorem for the LOCC protocol. From Proposition~\ref{prop:SDP_for_discrete_set}, the optimal fidelity can be achieved by the following SDP.
\begin{subequations}
    \begin{align}
        \bar{F}_{\rm PPT}^*(n=2, \bar{P};\cS_B, \cN_{AB}^{\gamma_1,\gamma_2}) = \max\; & \frac{1}{\bar{P}} \tr[Q_{A^nB^nA'B'} J_{\cE_{A^nB^nA'B'}}^{T_{A^nB^n}}];\\
        \text{s.t.}\; & \tr[R_{A^nB^nA'B'} J_{\cE_{A^nB^nA'B'}} ^{T_{A^nB^n}}] = \bar{P};\\
        & J_{\cE_{A^nB^nA'B'}} \ge 0; \; \tr_{A'B'} [J_{\cE_{A^nB^nA'B'}}] \le I_{A^nB^n};\\
        & J_{\cE_{A^nB^nA'B'}}^{T_{A^nA'}} \ge 0;\\
        & \frac{\tr[J_{\cE_{A^nB^nA'B'}} ^{T_{A^nB^n}} \cN_{AB}^{\gamma_1,\gamma_2 \ox n}\ox \cI(\psi_i^{\ox n+1})]}{\tr[J_{\cE_{A^nB^nA'B'}} ^{T_{A^nB^n}} \cN_{AB}^{\gamma_1,\gamma_2 \ox n}(\psi_i^{\ox n})\ox I_{A'B'}]} \ge \tr[\psi_i \cN(\psi_i)] \; \forall \; \psi_i\in\cS_B,
    \end{align}
\end{subequations}
where $Q_{A^nB^nA'B'} = \frac{1}{|\cS_B|}\sum_{\psi_i\in\cS_B}\cN_{AB}^{\gamma_1,\gamma_2\ox 2}\ox \cI(\psi_i ^{\ox 3})$ and $R_{A^nB^nA'B'} = \frac{1}{|\cS_B|}\sum_{\psi_i\in\cS_B}\cN_{AB}^{\gamma_1,\gamma_2\ox 2}(\psi_i ^{\ox 2})_{A^nB^n} \ox I_{A'B'}$. The noise channel $\cN$ is local depolarizing channel, i.e., $\cN_{AB}^{\gamma_1,\gamma_2} = \cN_A^{\gamma_1}\ox\cN_B^{\gamma_2}$. Since the input states are all maximally entangled states, according to Lemma~\ref{lemma:local_global_depo}, it is equivalent to replace the local depolarizing noise channel with the global depolarizing channel, which is $\cN_{AB}^{\gamma_1,\gamma_2}=\cN_{AB}^{\gamma'}$ with $\gamma' = 1-(1-\gamma_1)(1-\gamma_2)$. Obviously, the partial trace channel, is one feasible solution to the primal SDP, i.e., 
\begin{equation}
        J_{\cE_{A^nB^nA'B'}} = \bar{P}\cdot\Phi^+_{A_1A'}\ox\Phi^+_{B_1B'}\ox I_{A^nB^n/A_1B_1}.
    \end{equation} 
In this case, we have
\begin{equation}\label{eq:discrete_trivial_direction}
    \bar{F}_{\rm PPT}^*(2, \bar{P};\cS_B, \cN_{AB}^{\gamma'})\ge 1-\frac{3}{4} \gamma',
\end{equation}
which is the same as the average fidelity without purification, implying the partial trace channel is a trivial solution. 

Besides, we also need to prove $\bar{F}_{\rm PPT}^*(n, \bar{P};\cS_B, \cN_{AB}^{\gamma'})\le 1-\frac{3}{4} \gamma'$ by the following dual SDP.
\begin{subequations}
    \begin{align}
        \min \; &\langle K_{A^nB^n}\rangle-\bar{P}x \\
        \text{s.t.}\;& K_{A^nB^n}\ge 0;\; L_{A^nB^nA'B'}\ge 0;\; y_i \geq 0,\; \forall i;\label{eq:proof_SB_dual_1}\\
        &\frac{1}{\bar{P}} Q_{A^nB^nA'B'} ^{T_{A^nB^n}} + x R_{A^nB^nA'B'}^{T_{A^nB^n}} -K_{A^nB^n}\ox I_{A'B'} +L_{A^nB^nA'B'}^{T_{A^nA'}}\nonumber\\
    \qquad& + \sum_{\psi_i\in\cS_B} y_i \cN_{AB}^{\gamma'\ox n}\ox \cI (\psi_i^{\ox n+1}) ^{T_{A^nB^n}} - \sum_{\psi_i\in\cS_B} y_i \tr[\psi_i \cN_{AB}^{\gamma'}(\psi_i)] \cN_{AB}^{\gamma'\ox n} (\psi_i^{\ox n}) ^{T} \ox I_{A'B'} \le 0 \label{eq:proof_SB_dual_2}.
    \end{align}
\end{subequations}
We note that 
\begin{subequations}\label{eq:proof_SB_dual_feasible}
    \begin{align}
    K_{A^nB^n} = 0;\;  x = -\frac{1}{\bar{P}}\left(1-\frac{3}{4}\gamma'\right);\; &y_i=0,\; \forall i;\\
    L_{A^nB^nA'B'}= \frac{\gamma'(1-\gamma')(1-3\gamma'/4 )}{8\bar{P}}&\big[(\Phi^+_{A_1B_1}\Phi^-_{A_2B_2} + \Phi^-_{A_1B_1}\Phi^+_{A_2B_2})(\Psi^+ + \Psi^-)_{A'B'}\nonumber\\
    + &(\Psi^+_{A_1B_1}\Psi^-_{A_2B_2} + \Psi^-_{A_1B_1}\Psi^+_{A_2B_2})(\Phi^+ + \Phi^-)_{A'B'}\nonumber\\
    +&(\Phi^+_{A_1B_1}\Phi^-_{A'B'} + \Phi^-_{A_1B_1}\Phi^+_{A'B'})(\Psi^+ + \Psi^-)_{A_2B_2}\nonumber\\
    +&(\Psi^+_{A_1B_1}\Psi^-_{A'B'} + \Psi^-_{A_1B_1}\Psi^+_{A'B'})(\Phi^+ + \Phi^-)_{A_2B_2}\nonumber\\
    +& (\Phi^+_{A_2B_2}\Phi^-_{A'B'} + \Phi^-_{A_2B_2}\Phi^+_{A'B'})(\Psi^+ + \Psi^-)_{A_1B_1}\nonumber\\
    +& (\Psi^+_{A_2B_2}\Psi^-_{A'B'} + \Psi^-_{A_2B_2}\Psi^+_{A'B'})(\Phi^+ + \Phi^-)_{A_1B_1}
    \big],
\end{align}
\end{subequations}
is a feasible solution to the dual SDP. The constraints in Eq.~\eqref{eq:proof_SB_dual_1} can be easily checked. For the constraint shown in Eq.~\eqref{eq:proof_SB_dual_2}, substitute $L_{A^nB^nA'B'}$ into the left hand side of ~\eqref{eq:proof_SB_dual_2}, and the largest non-zero eigenvalue is $\frac{\gamma'(\gamma'-1)}{16\bar{P}}\le 0$ when $0\le \gamma'\le 1$. Thus, Eq.~\eqref{eq:proof_SB_dual_feasible} is a feasible solution to the dual SDP, and we have $\bar{F}^*_{\rm PPT}(2,\bar{P};\cS_B, \cN_{AB}^{\gamma'})\le 1-\frac{3}{4}\gamma'$. Combing with Eq.~\eqref{eq:discrete_trivial_direction}, we have
\begin{equation}
    \bar{F}_{\rm PPT}^*(2,\bar{P};\cS_B, \cN_{AB}^{\gamma'}) = 1-\frac{3}{4}\gamma'.
\end{equation}
Here, we have that the optimal PPT purification protocol is trivial. Note that the LOCC is a subset of the PPT operations; we directly have the optimal LOCC purification protocol is trivial, implying no nontrivial LOCC protocol can purify four orthogonal maximally entangled states. The proof is completed.
\end{proof}

\section{Proof of Theorem~\ref{theo:go_theorem_for_arbitrary_state}}\label{app:proof_of_go_single_state}

We divide the proof of Theorem~\ref{theo:go_theorem_for_arbitrary_state} into two parts. In the first part, we show that the protocol shown in FIG.~\ref{fig:protocol_for_single_state_2} could purify the quantum states in the form of \update{$\ket{\psi_\alpha} = \alpha\ket{00}+\sqrt{1-\alpha^2}\ket{11}$} corrupted by local depolarizing noise $\cN_{AB}^{\gamma, \gamma} = \cN_{A}^\gamma\ox\cN_{B}^\gamma$, where $\alpha\in[0,1],\; \beta=\sqrt{1-\alpha^2}$ (Lemma~\ref{lemma:nontrivial protocol_for_a00b11}). Specifically, Alice and Bob possess two copies of shared local depolarizing noisy states $\cN_{AB}^{\gamma, \gamma} (\psi_\alpha)$; they apply RX gates and CNOT gates on their own systems. Then, Alice and Bob make measurements on systems $A_2, B_2$ and denote the measurement results as $m_A$ and $m_B$, respectively. Then they exchange the results via classical communication. If $m_A = m_B = 0$, then the purification is successful and outputs the purified state $\sigma_\psi$; Otherwise, the purification fails, and the process needs to be repeated.

\begin{figure}[h]
\centering
\includegraphics[width=0.5\linewidth]{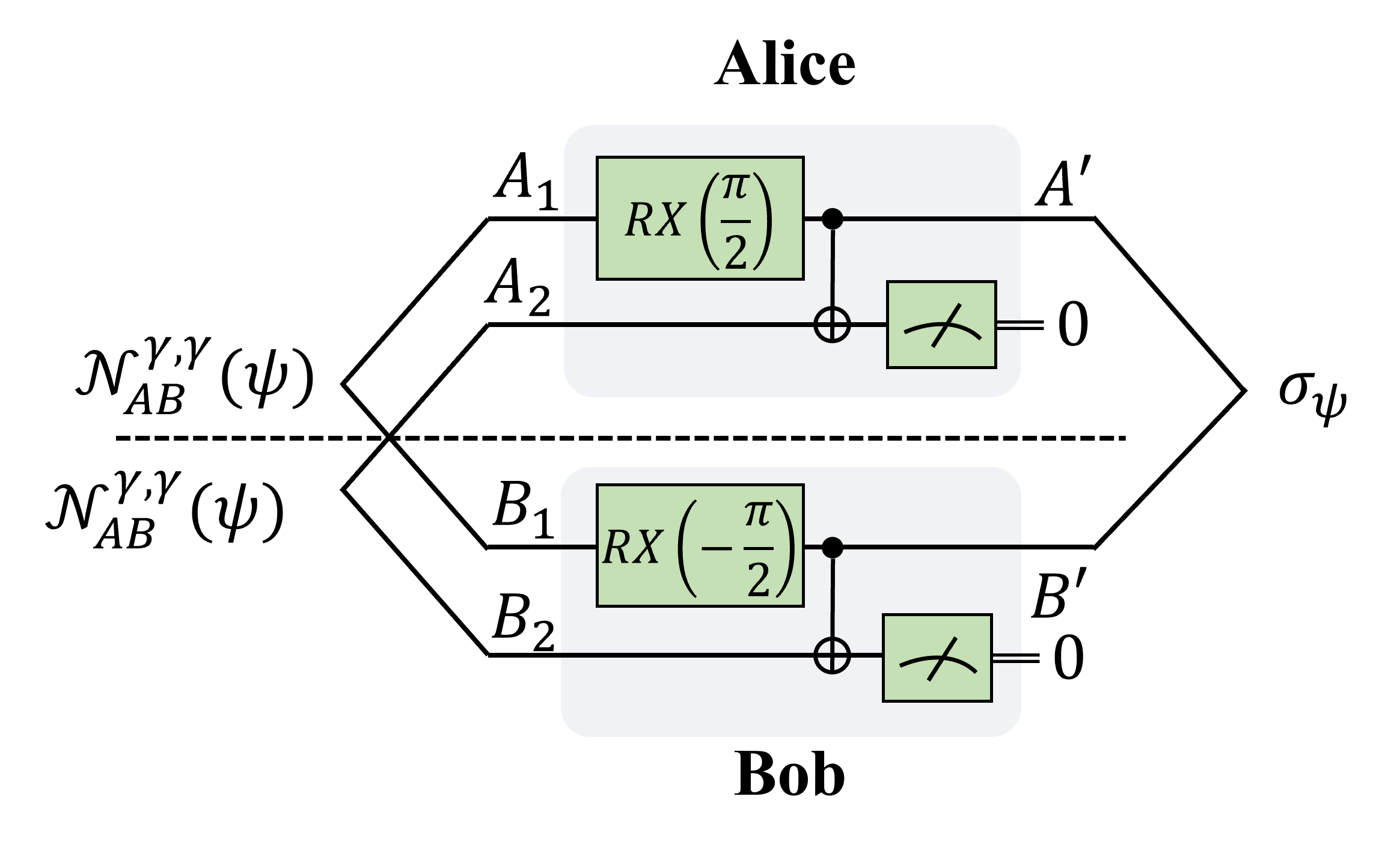}
    \caption{$2\rightarrow1$ purification protocol for noisy state $\cN_{AB}^{\gamma, \gamma}(\psi_\alpha)$, where $\ket{\psi_\alpha} = \alpha\ket{00} + \sqrt{1-\alpha^2}\ket{11}$. Alice and Bob apply $RX$ and CNOT gates on their own part, respectively. Then both parties make measurements on the second registers, and exchange the measurement results. If the measurements are '00', then the purification is successful; Otherwise, the purification fails and needs to repeat the processes.}
    \label{fig:protocol_for_single_state_2}
\end{figure}

\begin{lemma}\label{lemma:nontrivial protocol_for_a00b11}
    If the quantum state in the form $\ket{\psi_\alpha} = \alpha\ket{00} + \sqrt{1-\alpha^2}\ket{11}$ with $\alpha\in[0,1]$, the proposed protocol is a nontrivial $2 \rightarrow 1$ distribute purification protocol $\cE$ such that $F(\psi_\alpha, \cN_{AB}^{\gamma, \gamma}(\psi_\alpha)) \le F(\psi_\alpha, \cE(\cN_{AB}^{\gamma, \gamma}(\psi_\alpha)^{\ox 2}))$, where $\cN_{AB}^{\gamma, \gamma}$ is the local depolarizing noise with noise level $\gamma\in[0,0.4]$.
\end{lemma}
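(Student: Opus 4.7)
The plan is to proceed by direct computation, exploiting the fact that the circuit of Fig.~\ref{fig:protocol_for_single_state_2} is a concrete Clifford$+$single-qubit-rotation circuit acting on only eight qubits in total. First I would fix the angles of the $RX$ gates so that the local rotations on Alice's and Bob's sides send the (noiseless) pair of target states $\ket{\psi_\alpha}\otimes\ket{\psi_\alpha}$ into a form that is diagonalized by the subsequent CNOTs, in the same spirit as the DEJMPS protocol but adapted to the non-maximally-entangled case $\alpha\neq 1/\sqrt 2$. Concretely, the angle should be chosen as an explicit function of $\alpha$ (expressible in terms of the Schmidt coefficients $\alpha,\sqrt{1-\alpha^2}$) so that on the ideal input the post-selection branch $m_A=m_B=0$ returns $\ket{\psi_\alpha}$ deterministically; the noise then perturbs this branch and the task becomes bounding that perturbation.

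Next I would expand the two-copy noisy state. Using $\cN_A^\gamma\otimes\cN_B^\gamma$ applied to $\psi_\alpha$, one writes
\begin{equation}
\cN_{AB}^{\gamma,\gamma}(\psi_\alpha)=(1-\gamma)^2\,\psi_\alpha+\gamma(1-\gamma)\bigl(\tfrac{I_A}{2}\otimes\rho_B+\rho_A\otimes\tfrac{I_B}{2}\bigr)+\tfrac{\gamma^2}{4}I_{AB},
\end{equation}
with $\rho_A=\alpha^2|0\rangle\langle 0|+(1-\alpha^2)|1\rangle\langle 1|$ and $\rho_B$ analogous. Tensoring two such copies gives a sum of nine terms. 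Since the circuit is a fixed local unitary followed by projective measurement on $A_2B_2$, the unnormalized output $\tilde\sigma_\psi$ on $A_1B_1$ is a linear combination of at most nine easily computable $2\times 2\times 2\times 2$ density matrices. I would compute each contribution in the computational basis and collect the result into closed form for $\tilde\sigma_\psi(\alpha,\gamma)$ and for the success probability $p_\psi(\alpha,\gamma)=\tr[\tilde\sigma_\psi]$.

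With these in hand, the purified fidelity $F(\psi_\alpha,\sigma_\psi)=\langle\psi_\alpha|\tilde\sigma_\psi|\psi_\alpha\rangle/p_\psi$ and the baseline fidelity $F(\psi_\alpha,\cN_{AB}^{\gamma,\gamma}(\psi_\alpha))=(1-\gamma)^2+\tfrac{\gamma(1-\gamma)}{2}(\alpha^2+1-\alpha^2)\cdot 1+\tfrac{\gamma^2}{4}\cdot \ldots$ become explicit rational functions of $(\alpha,\gamma)$. The difference $\Delta(\alpha,\gamma):=F(\psi_\alpha,\sigma_\psi)-F(\psi_\alpha,\cN_{AB}^{\gamma,\gamma}(\psi_\alpha))$ then takes the form $N(\alpha,\gamma)/p_\psi(\alpha,\gamma)$ where $N$ is a polynomial in $\alpha,\gamma$ of small total degree. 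The final step is to verify $N(\alpha,\gamma)\ge 0$ for all $\alpha\in[0,1]$ and $\gamma\in[0,0.4]$: I would do this by factoring out the obvious $(1-\gamma)^2$ and $\alpha^2(1-\alpha^2)$-type prefactors coming from the fact that both $\gamma=0$ and $\alpha\in\{0,1\}$ give equality (since product states have no entanglement to purify), and then bounding the remaining factor on the rectangle $[0,1]\times[0,0.4]$.

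The main obstacle I anticipate is twofold. First, the correct choice of $RX$ angle as a function of $\alpha$ must be pinned down so that the $|00\rangle$ branch is genuinely the ``good'' branch for every $\alpha$; this couples to the CNOT structure and is essentially the design step that makes the protocol work away from the maximally entangled point. Second, verifying non-negativity of the numerator $N(\alpha,\gamma)$ on the full rectangle: although it is a polynomial inequality in two real variables of modest degree, it need not factor cleanly, and the threshold $\gamma\le 0.4$ presumably arises precisely as the boundary where $N$ develops a zero in the interior; so one expects to have to perform a careful Sturm-type or SOS-type argument, or alternatively to estimate $N$ by a judiciously chosen lower bound that is manifestly non-negative for $\gamma\le 0.4$. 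The remaining parts of the calculation are routine.
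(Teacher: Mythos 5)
Your overall route---expand the bi-locally depolarized two-copy state, push it through the fixed local circuit with post-selection on $m_A=m_B=0$, obtain the unnormalized output and success probability in closed form, and reduce the lemma to non-negativity of a low-degree expression in $(\alpha,\gamma)$ on $[0,1]\times[0,0.4]$---is exactly the strategy the paper follows. The genuine gap is that the two steps carrying all of the lemma's content are never actually performed: the $RX$ angles are left undetermined (only a design criterion is stated), and the decisive inequality $N(\alpha,\gamma)\ge 0$ is deferred to an unspecified ``Sturm-type or SOS-type argument'' or a lower bound to be found later. Since the lemma is precisely the assertion that this inequality holds for the given circuit on the given range, a plan that stops before either the circuit or the inequality is pinned down establishes nothing yet; the anticipated ``obstacles'' you list are the proof.

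For comparison, the paper completes these steps explicitly: the fidelity gap factors as $f(\alpha,\gamma)g(\alpha,\gamma)/d(\alpha,\gamma)$ with $d=2\bigl(1+2\alpha\sqrt{1-\alpha^2}(\gamma-1)^4\bigr)>0$ and $f=\alpha\gamma(\gamma-2)(\gamma-1)^2\le 0$, so the only real work is showing $g\le 0$. This is done by checking $\partial g/\partial\gamma\ge 0$ on the rectangle (this monotonicity step is where $\gamma\le 0.4$ enters---not, as you conjecture, as the locus where the numerator develops an interior zero) and then verifying the endpoint $\gamma=0.4$, where the substitution $u=\alpha^2$ reduces everything to the elementary inequality $(12u^2-12u+16)^2\ge 625u(1-u)$, settled by one-variable calculus (minimum $12.75$ at $u=1/2$). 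So no SOS or Sturm machinery is needed, but an explicit computation of this kind cannot be avoided and is absent from your proposal. Two further points: the paper's Phase-2 circuit is a single fixed circuit that works for every $\alpha$ simultaneously (it is later applied unchanged to the whole family $\cS_d$), so an $\alpha$-dependent angle choice is not how the protocol is specified; and your baseline-fidelity expression is already incorrect as written---the $\gamma(1-\gamma)$ term should carry the coefficient $\alpha^4+(1-\alpha^2)^2$ rather than $\tfrac12(\alpha^2+1-\alpha^2)$---which is symptomatic of the computation not having been carried through.
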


\begin{proof}
    Here we will prove that the fidelity between the target state $\ket{\psi_\alpha} = \alpha\ket{00} + \beta\ket{11}$ and the purified state $\cE(\cN_{AB}^{\gamma, \gamma}(\psi_\alpha)^{\ox2})$, where $\cE$ is the proposed $2\rightarrow 1$ is larger than no-purification fidelity. In other words, the aim is to prove
    \begin{equation}
        F(\psi_\alpha, \cE(\cN_{AB}^{\gamma, \gamma}(\psi_\alpha))^{\ox 2}) - F(\psi_\alpha, \cN_{AB}^{\gamma, \gamma}(\psi_\alpha)) \ge 0.
    \end{equation}
    By applying the protocol $\cE$ proposed in FIG.~\ref{fig:protocol_for_single_state_2}, the increased fidelity can be calculated, thus our aim becomes to prove
    \begin{equation}
        F(\psi_\alpha, \cE(\cN_{AB}^{\gamma, \gamma}(\psi_\alpha))^{\ox 2}) - F(\psi_\alpha, \cN_{AB}^{\gamma, \gamma}(\psi_\alpha))=\frac{f(\alpha,\gamma)g(\alpha,\gamma) }{d(\alpha, \gamma)}\ge0,
    \end{equation}
    where
    \begin{align}
        d(\alpha, \gamma) &= 2(1+2\alpha\sqrt{1-\alpha^2}(\gamma-1)^4),\\
        f(\alpha,\gamma) &= \alpha\gamma(\gamma-2)(\gamma-1)^2,\\
        g(\alpha,\gamma) &= 4\alpha-4\alpha^3-\sqrt{1-\alpha^2}(\gamma-2)^2+8\alpha^2\sqrt{1-\alpha^2}\gamma(\gamma-1)(\alpha^2-1).
    \end{align}
    Since $\alpha\in[0,1]$ and $\gamma\in[0,0.4]$, we can easily arrive at $d(\alpha, \gamma)>0$ and $f(\alpha, \gamma)\le 0$. Now, we need to prove $g(\alpha, \gamma)\le 0$. Fix the variable $\alpha$ and take the derivative over $\gamma$, which is
    \begin{align}
        \frac{\partial g(\alpha, \gamma)}{\partial \gamma} &=     \sqrt{1-\alpha^2}[4-2\gamma+8\alpha^2(\alpha^2-1)(2\gamma-1)].
    \end{align}
    Due to the domain of $\gamma$ and $\alpha$, we directly have $\frac{\partial g(\alpha, \gamma)}{\partial \gamma}\ge 0$, which implies the function $g(\alpha,\gamma)$ increase with respect to the increase of $\gamma$. Thus, the max value of the function $g(\alpha,\gamma)$ is achieved when $\gamma=0.4$.  If the largest value of $g(\alpha,\gamma)$ is no larger than 0, then the proof is finished. Here, the problem becomes proving
    \begin{align}
        g(\alpha, 0.4) = -\frac{4}{25}(25\alpha(\alpha^2-1) + \sqrt{1-\alpha^2}(12\alpha^4-12\alpha^2+16)) \le 0.
    \end{align}
     It is equivalent to proving
     \begin{equation}
         k(\alpha) = 25\alpha(\alpha^2-1) + \sqrt{1-\alpha^2}(12\alpha^4-12\alpha^2+16) \ge 0.
     \end{equation}
     Consider the end points $k(0)=16$, $k(1)=0$. Consider $\alpha\in(0,1)$, we need to prove
     \begin{align}
        25\alpha(\alpha^2-1) + \sqrt{1-\alpha^2}(12\alpha^4-12\alpha^2+16)& \ge 0,\\
        \sqrt{1-\alpha^2}(12\alpha^4-12\alpha^2+16)&\ge 25\alpha(1-\alpha^2).
    \end{align}
    Since $25\alpha(1-\alpha^2)>0$ and $\sqrt{1-\alpha^2}(12\alpha^4-12\alpha^2+16)>0$, we can take square on both side
    \begin{align}
        [\sqrt{1-\alpha^2}(12\alpha^4-12\alpha^2+16)]^2& \ge [25\alpha(1-\alpha^2)]^2.
    \end{align}
    Note that $1-\alpha^2\ge0$, we can reduce it to
    \begin{align}
        (12\alpha^4-12\alpha^2+16)^2& \ge 625\alpha^2(1-\alpha^2).
    \end{align}
     
    Let's $u=\alpha^2\in(0,1)$, and define the a function $p(u)=(12u^2-12u+16)^2- 625u(1-u)$. Now, our aim becomes to prove $p(u)\ge0$ for $u\in(0,1)$. By taking the first and second derivatives of $p(u)$, we found that $u=0.5$ is the only root for $p'(u)$, implying the minimum value for $p(u)$ is $p(0.5)=12.75>0$. Thus, $p(u)\ge0,\; u\in(0,1)$, which implies $g(\alpha, \gamma)\le0$. Together with $d(\alpha, \gamma)>0$ and $f(\alpha, \gamma)\le0$, we arrive at
    \begin{equation}
         F(\psi_\alpha, \cE(\cN_{AB}^{\gamma,\gamma}(\psi_\alpha))^{\ox 2}) - F(\psi_\alpha, \cN_{AB}^{\gamma,\gamma}(\psi_\alpha))=\frac{f(\alpha,\gamma)g(\alpha,\gamma) }{d(\alpha, \gamma)}\ge0.
    \end{equation}
    The proof is complete.
\end{proof}

In the second part, we are going to prove that an arbitrary bipartite pure state $\ket{\psi}$ is equivalent to the $\ket{\psi_\alpha}$ up to the local unitaries, i.e., $\exists\, U,V,\alpha$ such that $\ket{\psi} = U\ox V \ket{\psi_\alpha}$. Then we can transfer the state into the form of $\ket{\psi_\alpha}$ via local unitaries first, and then we can apply the protocol shown in FIG.~\ref{fig:protocol_for_single_state_2} can be used to purify the noisy state.

\vspace{0.5cm}
\noindent\textbf{Theorem~\ref{theo:go_theorem_for_arbitrary_state}} {\rm \textbf{(Go theorem for arbitrary state)}}
    \textit{For the local depolarizing noise channel $\cN_{AB}^{\gamma, \gamma}$ with noise level $\gamma\in[0,0.4]$, the proposed $2\rightarrow 1$ analytical distributed purification protocol is nontrivial for arbitrary pure states.}
\vspace{0.5cm}

\begin{proof}
    For arbitrary given state $\ket{\psi}$, which can represent it in the Schmidt decomposition form $\ket{\psi} = \alpha\ket{u_1}_A \ket{v_1}_B + \beta\ket{u_2}_A \ket{v_2}_B$, the protocol $\cU^\dagger\circ\cE\circ\cU^{\ox 2}(\cdot)$ is a nontrivial distributed purification protocol over the local depolarizing noisy states, where $\cE$ is the protocol shown in FIG.~\ref{fig:protocol_for_single_state_2}, and $\cU(\cdot)=U_A\ox U_B (\cdot) U_A^\dagger\ox U_B^\dagger$ is unitary channel which maps the state to the form of $\alpha\ket{00}+\beta\ket{11}$ with $U_1 = \ketbra{0}{u_1} + \ketbra{1}{u_2}$ and $U_2 = \ketbra{0}{v_1} + \ketbra{1}{v_2}$. The fidelity between the purified state and the target state is 
    \begin{align}
        F(\psi, \cU^\dagger\circ\cE\circ\cU^{\ox 2 }(\cN_{AB}^{\gamma, \gamma}(\psi)^{\ox 2})) &= \tr[\psi \cU^\dagger\circ\cE\cU^{\ox 2}(\cN_{AB}^{\gamma, \gamma}(\psi)^{\ox 2})]\\
        &= \tr[\psi U_A^\dagger \ox U_B^\dagger \cE[U^{\ox 2}_A \ox U^{\ox 2}_B(\cN_{AB}^{\gamma, \gamma}(\psi)^{\ox 2}) U^{\ox 2 \dagger}_A \ox U^{\ox 2 \dagger}_B]U_A \ox U_B]\\
        &= \tr[U_A \ox U_B \psi U_A^\dagger \ox U_B^\dagger \cE[U^{\ox 2}_A \ox U^{\ox 2}_B(\cN_{AB}^{\gamma, \gamma}(\psi)^{\ox 2}) U^{\ox 2 \dagger}_A \ox U^{\ox 2\dagger}_B]].
    \end{align}
    Note that the depolarizing channel commutes with local unitary, 
    \begin{align}
        U\cdot \cN^\gamma(\psi) \cdot U^\dagger &= U\cdot [(1-\gamma)\psi + \gamma\frac{I}{d})]\cdot U^\dagger
        = (1-\gamma)U\psi U^\dagger + \gamma\frac{I}{d}\\
        &= \cN^\gamma(U\psi U^\dagger).
    \end{align}
    Thus, the fidelity becomes
    \begin{align}
        F(\psi, \cU^\dagger\circ\cE\circ\cU^{\ox 2 }(\cN_{AB}^{\gamma, \gamma}(\psi)^{\ox 2}))
        &= \tr[U_A \ox U_B \psi U_A^\dagger \ox U_B^\dagger \cE[U^{\ox 2}_A \ox U^{\ox 2}_B(\cN_{AB}^{\gamma, \gamma}(\psi)^{\ox 2}) U^{\ox 2 \dagger}_A \ox U^{\ox 2\dagger}_B]]\\
        &=\tr[U_A \ox U_B \psi U_A^\dagger \ox U_B^\dagger \cE[(\cN_{AB}^{\gamma, \gamma}(U_A \ox U_B \psi U_A^\dagger \ox U_B^\dagger )^{\ox 2})]]\\
        &= \tr[\psi' \cE[(\cN_{AB}^{\gamma, \gamma}(\psi')^{\ox 2})]]\\
        &= F(\psi', \cE(\cN_{AB}^{\gamma, \gamma}(\psi')^{\ox 2})),
    \end{align}
    where $\psi'= U_A \ox U_B \psi U_A^\dagger \ox U_B^\dagger = \alpha\ket{00} + \beta\ket{11}$. According to Lemma~\ref{lemma:nontrivial protocol_for_a00b11}, we directly arrive at
    \begin{align}
        F(\psi, \cU^\dagger\circ\cE\circ\cU^{\ox 2 }(\cN_{AB}^{\gamma, \gamma}(\psi)^{\ox 2})) &= F(\psi', \cE(\cN_{AB}^{\gamma, \gamma}(\psi')^{\ox 2})) \\
        &\ge F(\psi', \cN_{AB}^{\gamma, \gamma}(\psi')) = F(\psi, \cN_{AB}^{\gamma, \gamma}(\psi)).
    \end{align}
The proof is complete.
\end{proof}
\section{Optimized purification protocol}\label{app:VQA}

\begin{figure*}[h]
\centering
\includegraphics[width=0.7\linewidth]{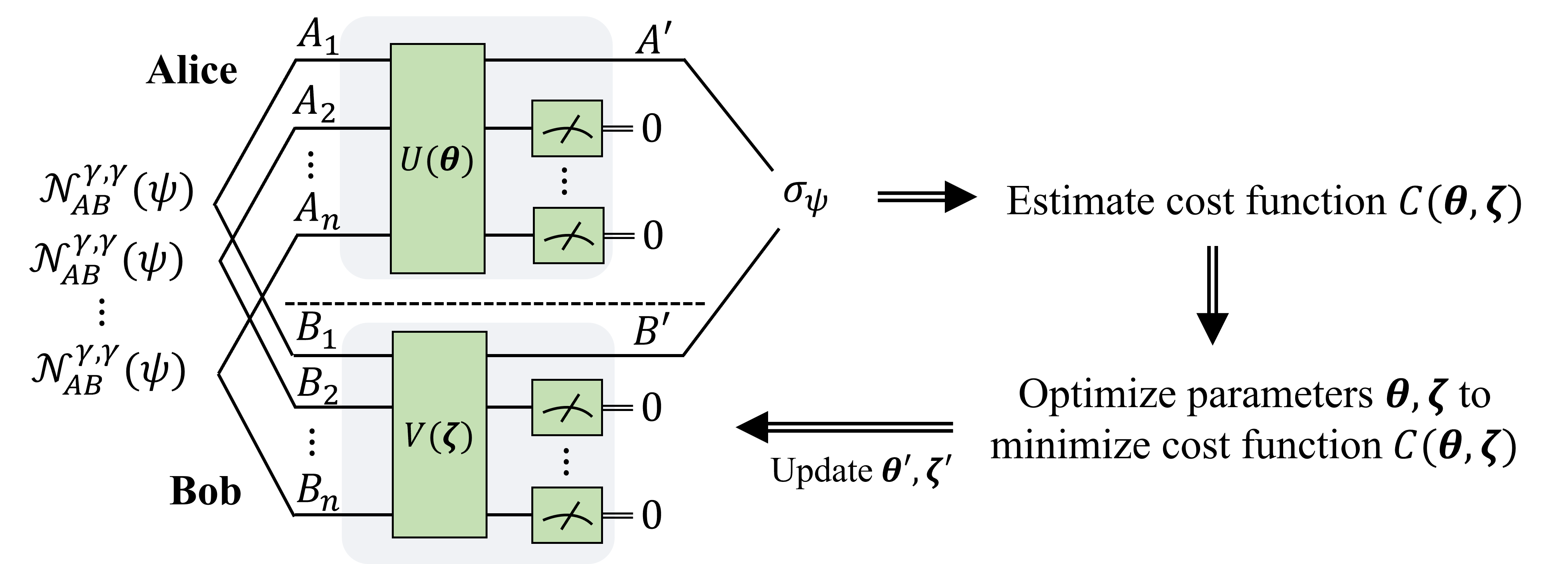}
    \caption{Framework of the optimization purification algorithm. Alice and Bob apply local unitaries $U(\boldsymbol{\theta})$ and $V(\boldsymbol{\zeta})$ on their systems, respectively. Then they make measurements on the systems and check the measurement results via classical communication. If the results are all 0, then output the purified state $\sigma_\psi$; otherwise, repeat the process until it succeeds. With the purified state $\sigma_\psi$, we can estimate the cost function $C(\boldsymbol{\theta}, \boldsymbol{\zeta})$. The cost function can be minimized by utilizing a classical optimizer, updating the parameters to the PQCs, and repeating the iterations until the cost function converges to its minimum.}
    \label{fig:VQA}
\end{figure*}

\begin{algorithm}[t]
    \renewcommand{\algorithmicrequire}{\textbf{Input:}}
    \renewcommand{\algorithmicensure}{\textbf{Output:}}
    \begin{algorithmic}[1]
        \REQUIRE Noisy state $\cN(\psi)$ with $\psi$ is sampled from the state $\cS$, two parameterized quantum circuits (PQC) $U(\boldsymbol{\theta})$ and $V(\boldsymbol{\zeta})$, number of iterations ITR, 
        \ENSURE Quantum circuit for LOCC purification protocol $U(\boldsymbol{\theta}^*) \ox V(\boldsymbol{\zeta}^*)$
        \STATE Randomly initialize parameters $\boldsymbol{\theta},\boldsymbol{\zeta} $, 
        and prepare the circuit ansatz $U(\boldsymbol{\theta}),V(\boldsymbol{\zeta})$ 
        \FOR {itr = 1, $\dots$, ITR}
            \FOR {state $\psi\in \cS$}
                \STATE Apply $U(\boldsymbol{\theta})$ and $V(\boldsymbol{\zeta})$ to the noisy states $\cN(\psi)^{\ox n}$
                \STATE Make measurements on the ancillary qubit
                \IF{Measurement results are all 0}
                \STATE The purification is successful and outputs the purified state $\sigma_\psi$
                \ELSE
                \STATE Repeat the process until it succeeds.
                \ENDIF
                \STATE Compute the fidelity $F(\psi, \sigma_\psi)$
            \ENDFOR
            \STATE Estimate the cost function $C(\boldsymbol{\theta, \zeta})$ as shown in Eq.~\eqref{eq:cost_function}
            \STATE Minimize the cost function $C$ and update parameter $\boldsymbol{\theta', \zeta'}$ via classical computers
        \ENDFOR
        \STATE Output the optimized circuit for distributed purification protocol $U(\boldsymbol{\theta}^*) \ox V(\boldsymbol{\zeta}^*)$
    \end{algorithmic}
\caption{Optimization algorithm for designing a distributed purification protocol}
\label{algo:VQA}
\end{algorithm}

Note that the success case can be defined flexibly. For example, one can define the success case as measurements of ancillary systems are all 1 or all 0. In our case, we only define the measurement as all 0 as the successful case. The algorithm may achieve higher fidelity by \update{introducing} ancilla qubits.

The choice of the unitary ansatz should be made carefully, for the trade-off between trainability and expressibility. If the ansatz we choose is very complex and the number of parameters is large, it may express more quantum states. However, at the same time, it will be harder to train, preventing us from achieving the optimal result. On the contrary, if the ansatz is simple and the number of parameters is small, then it is very easy to train. But the global optimal results may not be expressed by the ansatz. Meanwhile, the barren plateaus, referring to a phenomenon in which the gradient decreases exponentially with respect to the increase of quantum system size, is another important problem we have to take into consideration. Choosing a suitable quantum ansatz and setting the starting parameters carefully can also circumvent barren plateaus to some extent~\cite{liu2024mitigating, jing2025quantum}. As we focus on the $2\rightarrow 1$ purification protocol, we choose the ansatz as \textit{universal 2-qubit gate}~\cite{vidal2004universal, shende2004minimal}, which can represent an arbitrary 2-qubit unitary with 15 parameters. 
\section{Distributed purification protocol for the set $\cS_d$ with extended noise levels}\label{app:lager_noise_level}

\update{
In the numerical experiments shown in FIG.~\ref{fig:purified_fidelity_local_Depo}, we consider the pure state set containing 4 different states $\cS_d=\{\psi_{\frac{1}{\sqrt{2}}}, \psi_{\frac{1}{\sqrt{3}}}, \psi_{\frac{1}{\sqrt{4}}}, \psi_{\frac{1}{\sqrt{5}}}\}$, 
where $\ket{\psi_\alpha}=\alpha\ket{00}+\sqrt{1-\alpha^2}\ket{11}$. The noise model is bi-local depolarizing channel $\cN_{AB}^{\gamma, \gamma}$, with noise level varying $\gamma\in[0,0.4]$. However, if we extend the noise level $\gamma$ to 0.5, the behavior of the PPT bound becomes complicated. The numerical experiments are conducted, and the results are shown in Fig.~\ref{fig:0-0.5}. One can observe that the PPT bound jumps around $\gamma\approx0.43$, and then the average fidelity remains at 0.71 as the noise level increases. It is natural to ask the following two questions:
}

\begin{itemize}
    \item \update{\textbf{Question:} Why there is a jump point around 0.43?}
    
    \update{\textbf{Answer:} 
    When $\gamma \ge 0.43$, if the purification protocol is successful, the corresponding operation is the replacement channel, which maps any input state to the a fixed state, i.e., $\frac{\cE(\cN(\rho)\ox\cN(\rho))}{\tr[\cE(\cN(\rho)\ox\cN(\rho))]} = \ketbra{11}{11}$ (We call it replacement protocol for convenience in the following). Set the noise level as $\gamma=0.43$. Before purification, the fidelities $F(\cN(\psi_\alpha), \psi_\alpha)$ of the state set $\cS_d$ are $\{0.4937, 0.5073,0.5243,0.5378\}$, respectively, while the purified state fidelity $F(\sigma_{\psi_\alpha}, \psi_\alpha)$ becomes $\{0.5, 0.6667,0.75,0.8\}$, where $\sigma_{\psi_\sigma}$ refers to the purified state. By the definition of purification protocol, for any state in the set, the purified state should have a larger fidelity, i.e., 
    \begin{equation}
        F(\sigma_\psi,\psi) \ge F(\cN(\psi), \psi), \; \forall\, \psi\in \cS_d.
    \end{equation}
    Therefore, the replacement protocol is a valid purification protocol. The optimal purified average fidelity is calculated by 
    \begin{equation}
        \bar{F} = \sum_\alpha p_\alpha F(\sigma_{\psi_\alpha}, \psi_\alpha),
    \end{equation}
    with $\sum_\alpha p_\alpha/4= \bar{P}=0.1$. 
    By SDP, the success probabilities $p_\alpha$ are given by $\{0.0569, 0.0941, 0.1169, 0.1321\}$, and the optimal average fidelity is calculated to be $\bar{F}=0.7113$.}
    
    \update{However, when $\gamma\le 0.42$, the replacement protocol shown above is no longer feasible. For example, let's take $\gamma=0.42$, the fidelities before purification are $\{0.5023, 0.5158, 0.5327, 0.5461\}$, respectively. Clearly, if we apply the PPT purification protocol directly, the fidelity of the first output state drops, i.e., 
    \begin{equation}
        F(\sigma_{\psi_{\frac{1}{\sqrt{2}}}}, \psi_{\frac{1}{\sqrt{2}}}) = 0.5 < 0.5023 = F(\cN({\psi_{\frac{1}{\sqrt{2}}}}), \psi_{\frac{1}{\sqrt{2}}}),
    \end{equation}
    which contradicts the definition of the purification protocol, i.e., the replacement protocol is not feasible for $\gamma=0.42$. That's why there is a jump around $\gamma=0.43$.}

    \item \update{\textbf{Question:} Why cannot the optimization method converge to the replacement protocol as PPT does?}
    
    \update{\textbf{Answer:} The replacement channel is basically a measure-and-prepare operation. The ansatz we used in the optimization algorithm can only optimize LOCC protocols consisting of local unitaries; the measure-and-prepare operation is outside the scope of the chosen ansatz. If we apply other kinds of ansatz, e.g., with ancilla systems and partial trace operation, it is possible for the optimized protocol to converge to the replacement protocol.}
\end{itemize}

\begin{figure}[h]
\centering
\includegraphics[width=0.5\linewidth]{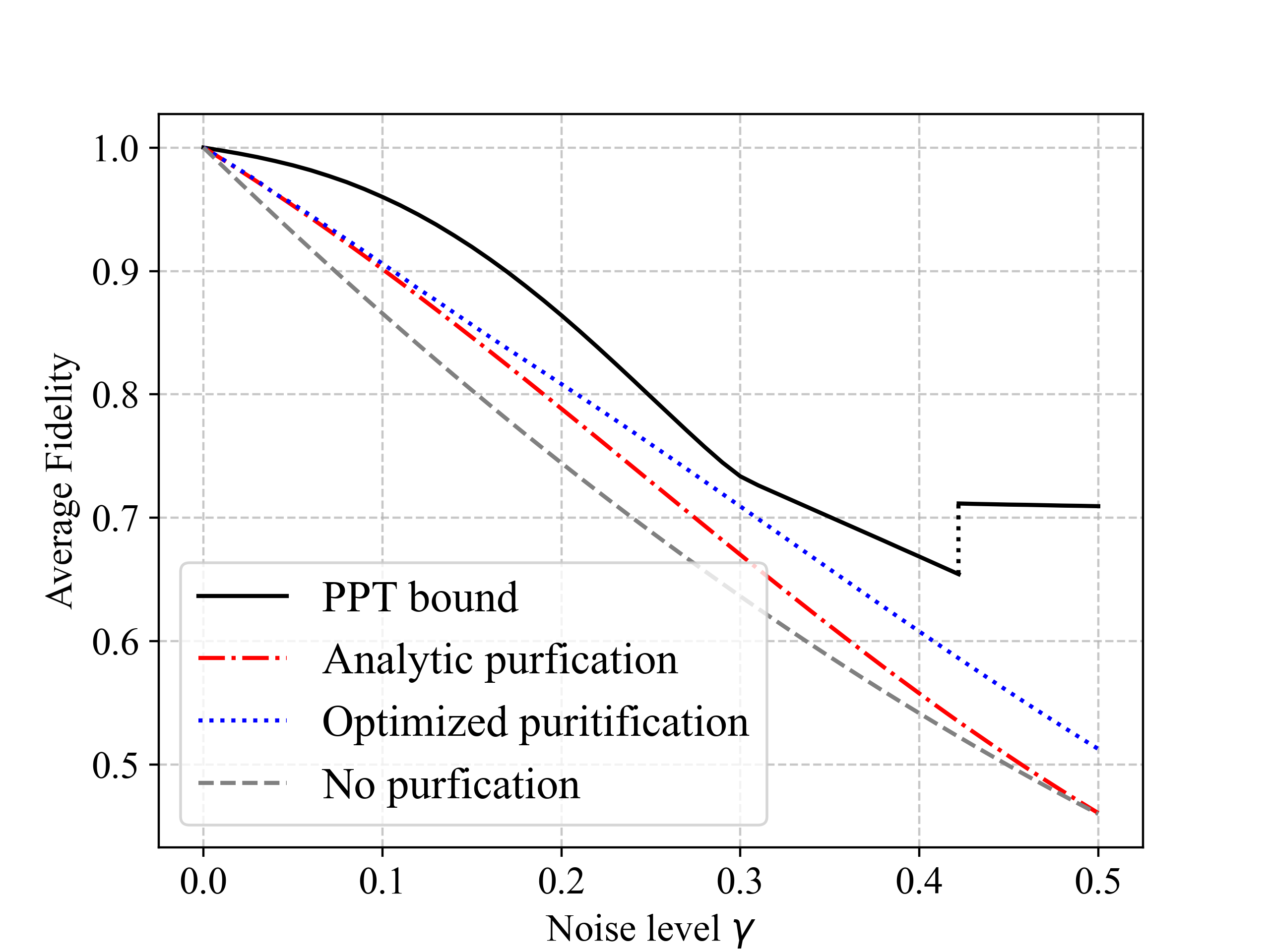}
    \caption{Performance of $2\rightarrow1$ purification with different strategies for the set $\cS_d$ over bi-local depolarizing noise with noise level $\gamma\in[0,0.5]$.}
    \label{fig:0-0.5}
\end{figure}

\update{Note that a similar fidelity jump is also observed in the purification of the set $\cS_d$ degraded by the mixed Pauli noise channel $\cN_{\rm Pauli}^\gamma\ox\cN_{\rm Pauli}^\gamma$, where
\begin{equation}
    \cN_{\rm Pauli}^\gamma(\rho) = (1-\gamma) \rho + \frac{\gamma}{2} X\rho X + \frac{\gamma}{4} Y\rho Y + \frac{\gamma}{4} Z\rho Z.
\end{equation}
The PPT bound for the $2\rightarrow1$ protocol is shown in FIG.~\ref{fig:Pauli_noise}
}.

\begin{figure}[h]
\centering
\includegraphics[width=0.5\linewidth]{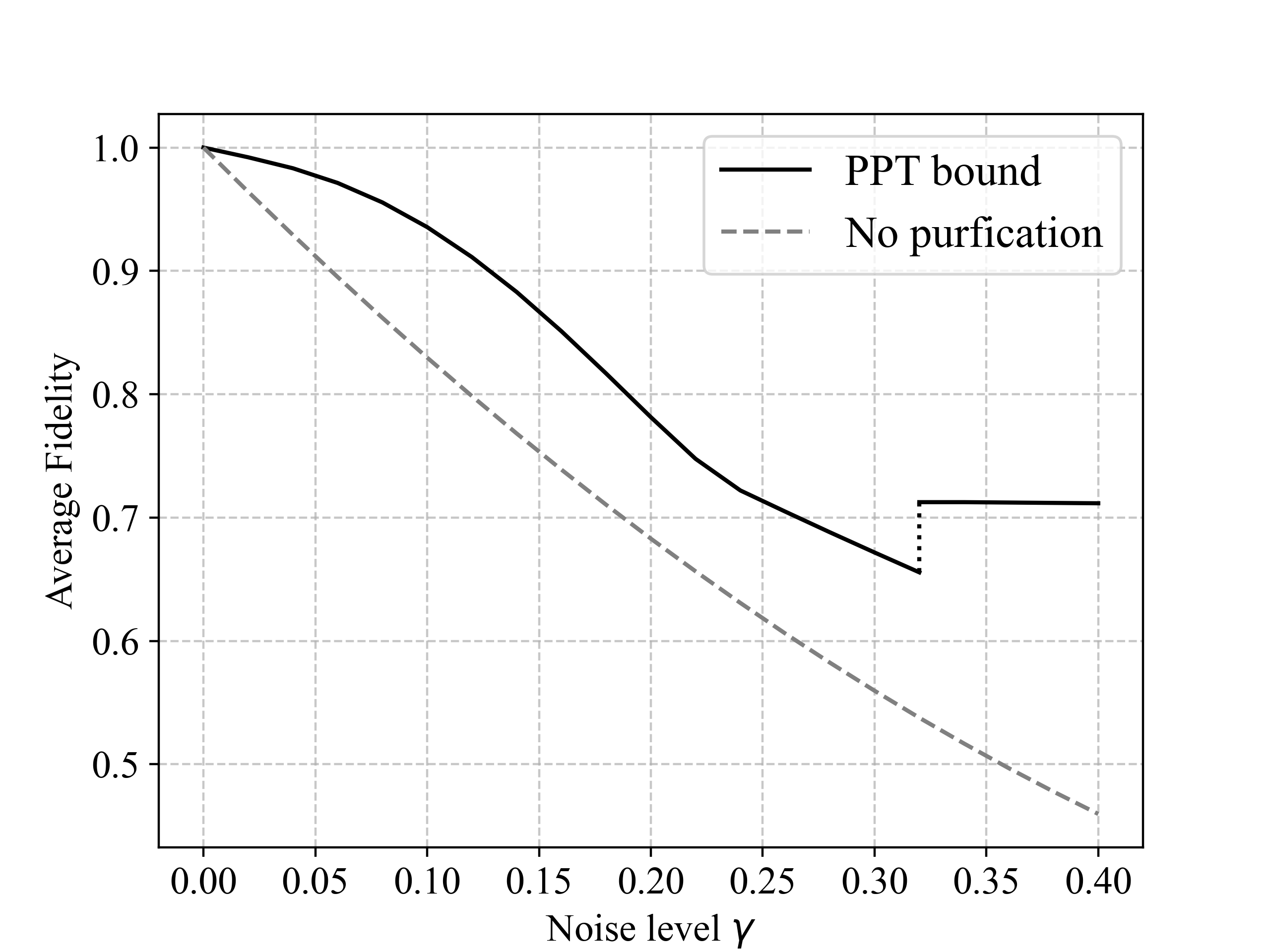}
    \caption{Performance of $2\rightarrow1$ purification with different strategies for the set $\cS_d$ over mixed Pauli noise $\cN_{\rm Pauli}^\gamma\ox\cN_{\rm Pauli}^\gamma$.}
    \label{fig:Pauli_noise}
\end{figure}

\update{However, the jump point does not always exist. It is highly dependent on the state set and the quantum noise. For example, if the set is the four Bell states, i.e., $\cS_B$, and the quantum noise is amplitude damping noise, then there is no such jump point, as illustrated in FIG.~\ref{fig:AD noise for Bell}.}

\section{Additional numerical results}\label{app:more_experiments}
\update{In the main text, we only consider purifying the depolarizing noise corrupted states with $2\rightarrow1$ LOCC purification protocols. However, the tools proposed in this work, such as the SDP for the bound of purification and the protocol design with an optimization-based method, can be applied to a more general scenario. Here, we conduct more numerical experiments to demonstrate the power of tools utilized in this work.}

\subsection{$2\rightarrow 1$ LOCC purification protocol for sets with Bell states}

\update{Here, we consider 4 different state sets which contain 4, 3, 2, and 1 Bell states respectively, i.e., 
$\cS_{B_4}=\{\Phi^+, \Phi^-, \Psi^+,\Psi^-\}$,
$\cS_{B_3}=\{\Phi^+, \Phi^-, \Psi^+\}$,
$\cS_{B_2}=\{\Phi^+, \Psi^+\}$,
and $\cS_{B_1}=\{\Phi^+\}$. The noise we consider is a bi-local depolarizing noise model. The numerical results are shown in TABLE~\ref{tab:Depo_Bell}. In this table, ``No Pure" refers to the averaged fidelity between the noised state and target state, i.e., $\frac{1}{|\cS|}\sum_{\psi\in\cS} F(\cN(\psi), \psi)$. The ``PPT Bound" is computed via the SDP given in Eq.~\eqref{eq:SDP_for_PPT_bound} with the average successful probability $\bar{P}=0.1$. The ``Optimized" refers to the averaged fidelity computed by the protocols, which are designed by the proposed optimization-based method as shown in Algorithm~\ref{algo:VQA}.}

\update{
The numerical results demonstrate the following:
\begin{itemize}
    \item For bi-local depolarizing noise $\cN_{DE}^\gamma\ox\cN_{DE}^\gamma$ with noise level $\gamma\in[0,0.4]$, there is no non-trivial $2\rightarrow 1$ LOCC purification protocol for the set with three or four Bell states, i.e., $\cS_{B_3}$ and $\cS_{B_4}$. 
    \item  For bi-local depolarizing noise $\cN_{DE}^\gamma\ox\cN_{DE}^\gamma$ with noise level $\gamma\in[0,0.4]$, there exists a non-trivial $2\rightarrow 1$ LOCC purification protocol for the set with two or one Bell states, i.e., $\cS_{B_2}$ and $\cS_{B_1}$.
    \item For the set with two Bell states, i.e., $\cS_{B_2}$, the protocol designed by the optimization-based method is near-optimal as the fidelity achieved by the optimized protocol is very close to the PPT bound. 
\end{itemize}
These conclusions shown here are inferred from numerical results. Rigorous proofs are necessary to make them solid.
}

\begin{table}[!htbp]
\centering
\begin{tabular}{|c|c|c|c|c|c|c|c|c|c|c|}
\hline
 & \multicolumn{3}{|c|}{$\cS_{B_1}$}& \multicolumn{3}{|c|}{$\cS_{B_2}$} & \multicolumn{2}{|c|}{$\cS_{B_3}$} & \multicolumn{2}{|c|}{$\cS_{B_4}$}\\ 
\hline
Noise Level $\gamma$ & No Pure & PPT Bound& Optimized & No Pure & PPT Bound & Optimized & No Pure & PPT Bound & No Pure & PPT Bound\\
\hline
0 & 1.0 & 1.0 & 1.0 & 1.0 & 1.0 & 1.0 & 1.0 & 1.0 & 1.0 & 1.0\\
\hline
0.1 & 0.8575 & 0.9731 & 0.8907 & 0.8575 & 0.8907 & 0.8907 & 0.8575 & 0.8575 & 0.8575 & 0.8575\\
\hline
0.2 & 0.7300 & 0.8797 & 0.76759 & 0.7300 & 0.7676 & 0.76759 & 0.7300 & 0.7300 & 0.7300 & 0.7300\\
\hline
0.3 & 0.6175 & 0.7227 & 0.64118 & 0.6175 & 0.6412 & 0.64117 & 0.6175 & 0.6175 & 0.6175 & 0.6175\\
\hline
0.4 & 0.5200 & 0.5399 & 0.5241 & 0.5200 & 0.5241 & 0.5241 & 0.5200 & 0.5200 & 0.5200 & 0.5200 \\
\hline
\end{tabular}
\caption{State purification for the sets with Bell states.}
\label{tab:Depo_Bell}
\end{table}

\subsection{$2\rightarrow 1$ LOCC purification protocol for Bell set $\cS_{B}$ over other noise models}
\update{Here, we conduct another numerical experiment to show the performance of state purification for the Bell set $\cS_B=\{\Phi^\pm, \Psi^\pm\}$ over the bi-local amplitude damping noise $\cN_{AD}^\gamma\ox\cN_{AD}^\gamma$. The Amplitude damping noise $\cN^\gamma_{AD}$ is particularly relevant for the loss of energy or the dissipation of excited states, which is characterized by the damping rate $\gamma$, and has two Kraus operators: }
\begin{equation}
    A_0^\gamma = \ketbra{0}{0} + \sqrt{1-\gamma} \ketbra{1}{1},\; A_1^\gamma=\sqrt{\gamma}\ketbra{0}{1}
\end{equation}
\update{The numerical results are shown in FIG.~\ref{fig:AD noise for Bell}. The PPT bound is computed via the SDP given in Eq.~\eqref{eq:SDP_for_PPT_bound} with the average successful probability $\bar{P}=0.1$. The optimized purification refers to the averaged fidelity computed by the protocols, which are designed by the proposed optimization-based method as shown in Algorithm~\ref{algo:VQA}. From the numerical results, we observe that for bi-local amplitude damping noise, unlike depolarizing noise, it is possible to purify the set with four Bell states via state purification.}

\begin{figure}[h]
\centering
\includegraphics[width=0.5\linewidth]{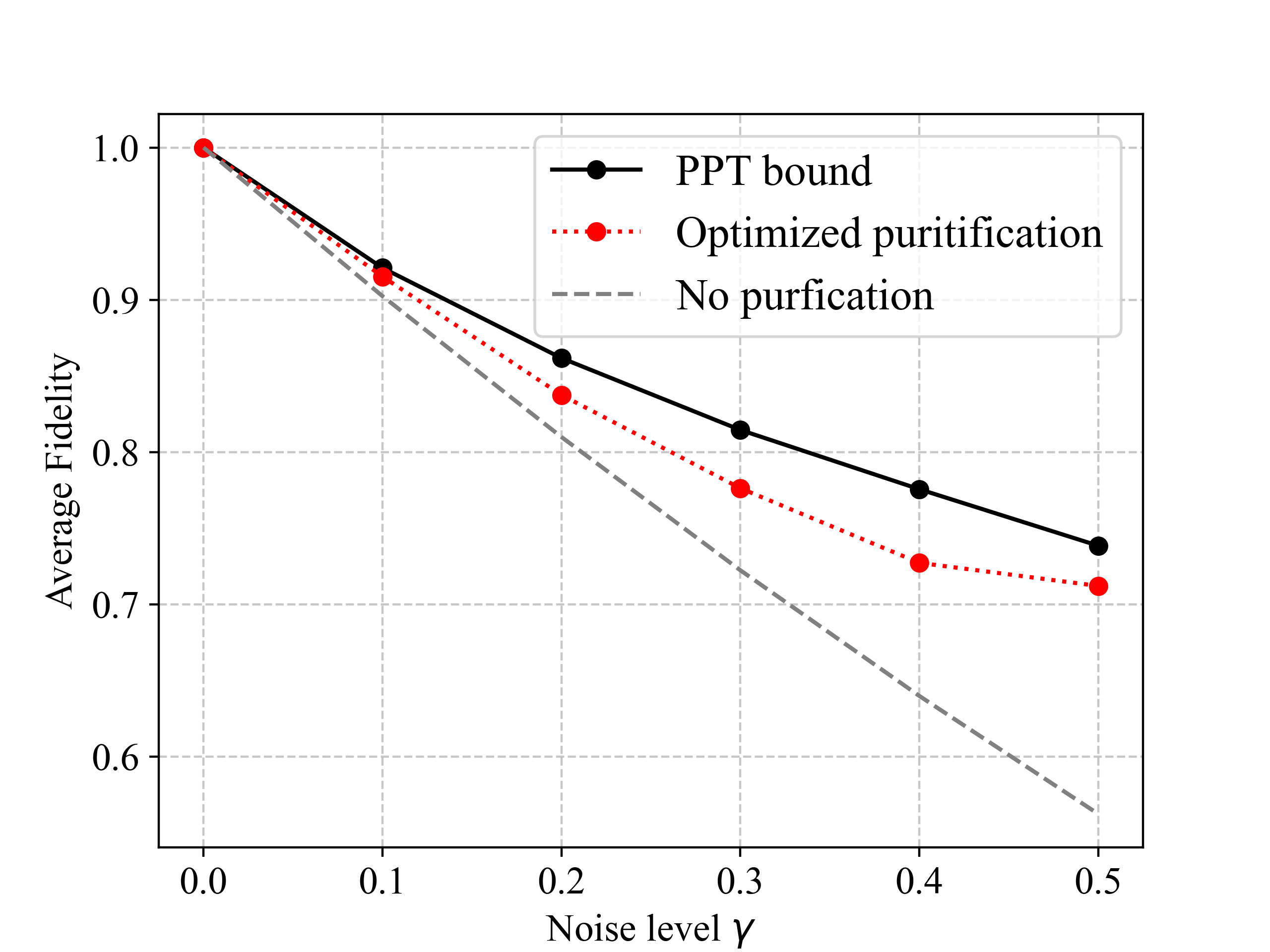}
    \caption{State purification for bi-local amplitude damping corrupted Bell set $\cS_B$.}
    \label{fig:AD noise for Bell}
\end{figure}
    
\update{We also take bi-local dephasing noise $\cN_{DP}^\gamma\ox\cN_{DP}^\gamma$ into consideration, which probabilistically applies the Pauli-Z operation on the quantum state, i.e., 
\begin{equation}
    \cN_{DP}^\gamma(\rho) = (1-\gamma)\rho + \gamma Z\rho Z.
\end{equation}
From the SDP numerical results, the optimal blind purification protocol cannot increase the fidelity at all, as shown in Table~\ref{tab:Dephasing noise}. Therefore, for the bi-local dephasing noise, there is no non-trivial LOCC (even PPT) $2\rightarrow 1$ state purification protocol for the set with four Bell states $\cS_B$.
\begin{table}[!htbp]
    \centering
    \begin{tabular}{|c|c|c|c|c|c|c|}
    \hline  
    Noise level $\gamma$ & 0 & 0.1 & 0.2 & 0.3 & 0.4& 0.5\\
    \hline  
    No purification & 
    1.000 & 0.905  &  0.820  &  0.745 &   0.680  & 0.625 \\
    \hline  
    PPT bound & 1.000 & 0.905  &  0.820  &  0.745 &   0.680  & 0.625 \\
    \hline 
    \end{tabular}
\caption{The PPT bound of purification for Bell set $\cS_B$ over the bi-local dephasing noise.}
\label{tab:Dephasing noise}
\end{table}
}

\subsection{Comparison between optimized protocol and DEJMPS protocol}

\update{To benchmark the performance of the optimized protocol, it is natural to make a comparison with conventional methods, such as DEJMPS~\cite{deutsch1996quantum} and BBPSSW~\cite{bennett1996purification}. Note that DEJMPS works on a broader set of states than BBPSSW. Therefore, we conduct numerical experiments with different settings to compare our protocol with DEJMPS.}

\begin{itemize}
    \item \update{First, we consider the purification of the canonical maximally entangled state $\Phi^+$, degraded by bi-depolarizing noise $\cN_{DE}^\gamma\ox\cN_{DE}^\gamma$ into a Bell-diagonal state. Here, we also conduct a numerical experiment to compare the protocol designed by the optimization-based method and the well-known DEJMPS protocol~\cite{deutsch1996quantum}. The results are shown in TABLE~\ref{tab:DEJMPS_vs_VQA}. The optimized protocol achieves the same fidelity as the DEJMPS protocol, implying the power of the proposed optimization-based method.}
    \begin{table}[!htbp]
        \centering
        \begin{tabular}{|c|c|c|c|c|}
        \hline  
        Noise level & 0.1 & 0.2 & 0.3 & 0.4\\
        \hline  
        No purification & 0.8575 & 0.7300 & 0.6175 & 0.5200 \\
        \hline  
        DEJMPS protocol & 0.8907 & 0.7676 & 0.6412 & 0.5241 \\
        \hline 
        Optimized protocol & 0.8907 & 0.7676 & 0.6412 & 0.5241 \\
        \hline
        \end{tabular}
    \caption{Comparison between the $2\rightarrow 1$ optimized purification protocol and the DEJMPS protocol over bi-depolarizing noise.}
    \label{tab:DEJMPS_vs_VQA}
    \end{table}
    
    \item  \update{Second, we consider the purification of the maximally entangled state $\Phi^+$ degraded by amplitude damping noise applied on both sides, i.e., $\cN_{AD}^\gamma\ox\cN_{AD}^\gamma$. In this case, from the numerical results shown in Fig.~\ref{fig:compare DEJMPS} (a), our optimized protocol outperforms DEJMPS. This result could be expected, because DEJMPS is tailored to Bell diagonal states, whereas the output of the amplitude damping channels is not Bell diagonal anymore.}
    \begin{figure}[h]
    \centering
    \includegraphics[width=\linewidth]{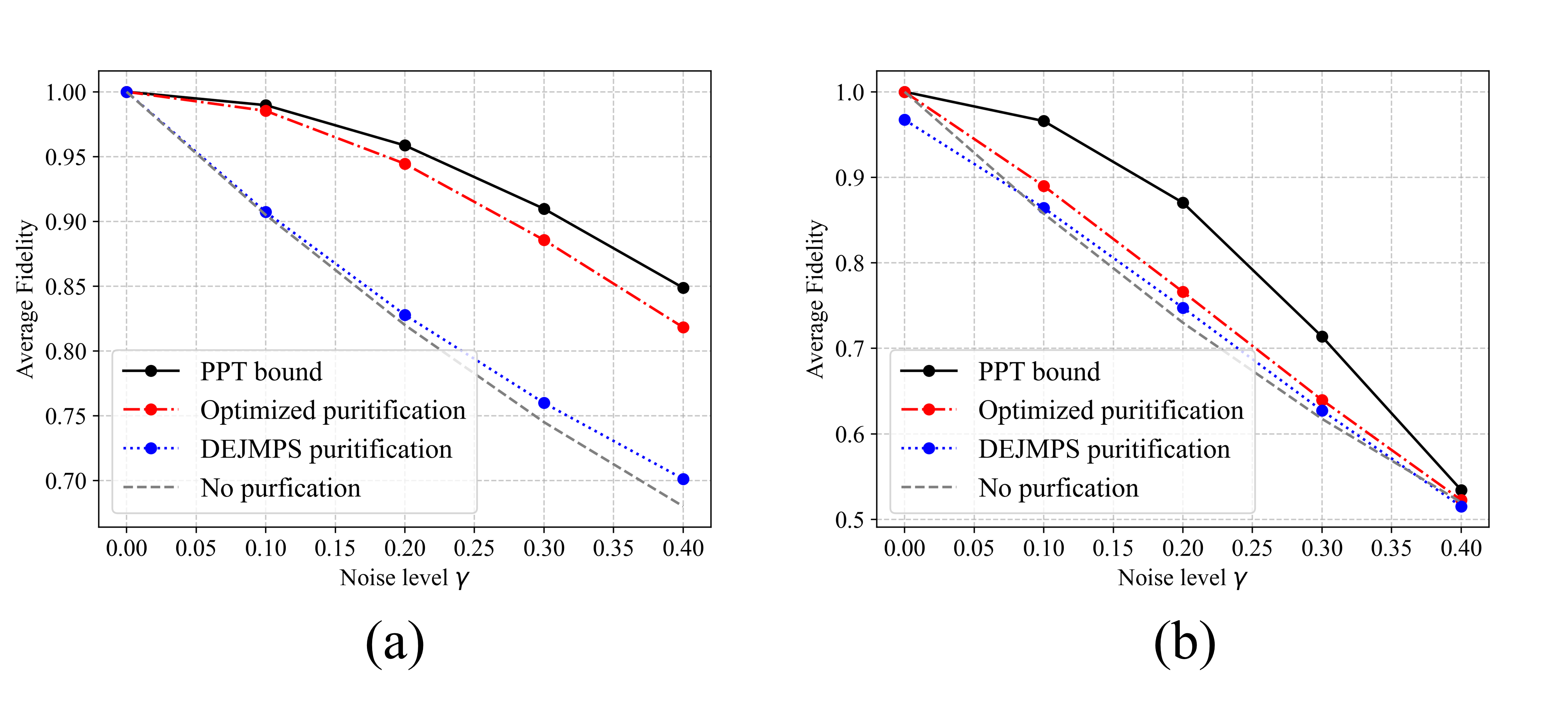}
        \caption{The comparison between the $2\rightarrow 1$ optimized purification protocol and the DEJMPS protocol. (a) The input state is Bell states corrupted by amplitude damping noise $\cN^{\gamma}_{AD}\ox \cN^{\gamma}_{AD}(\Phi^+)$. (b) The input states are sampled from the depolarized maximally entangled state set $\Phi^+, U_1\ox I (\Phi^+)U_1^\dagger\ox I, U_2\ox I (\Phi^+)U_2^\dagger\ox I\}$, where $U_1, U_2$ are rotation-X operations with rotation angle 0.2 and -0.2, respectively.}
        \label{fig:compare DEJMPS}
    \end{figure}

    \item \update{Finally, we consider an example involving the purification of more than a single maximally entangled state. Specifically, we choose the three maximally entangled states $\{\Phi^+, U_1\ox I (\Phi^+)U_1^\dagger\ox I, U_2\ox I (\Phi^+)U_2^\dagger\ox I\}$, where $U_1$ and $U_2$ are $X$-rotation gates with rotation angles 0.2 and -0.2, respectively, i.e., $U_1 = RX(0.2)$ and $U_1 = RX(-0.2)$. These three states are degraded by depolarizing noise applied to both systems. The results are shown in Fig.~\ref{fig:compare DEJMPS} (b). In this case, our protocol also outperforms DEJMPS. Again, this finding is no surprise, because the task is to purify a set of maximally entangled states simultaneously, rather than just purify the canonical maximally entangled state, and therefore is outside the scope for which DEJMPS was designed. }
\end{itemize}

\subsection{$3\rightarrow 1$ LOCC purification protocol for the set $\cS_d$ over depolarizing noise}

\update{The purification protocol via optimization proposed in Algorithm~\ref{algo:VQA} is actually valid for the $n\rightarrow 1$ case. In the main text, we only show the performance of $2\rightarrow 1$ protocols. Here, conduct another numerical experiment to show the performance of $3\rightarrow 1$ LOCC purification protocol. The state set we choose is $\cS_d=\{\psi_{\frac{1}{\sqrt{2}}}, \psi_{\frac{1}{\sqrt{3}}}, \psi_{\frac{1}{\sqrt{4}}}, \psi_{\frac{1}{\sqrt{5}}}\}$, and noise is bi-local depolarizing noise $\cN_{DE}^\gamma\ox\cN_{DE}^\gamma$.}

\begin{figure}[h]
\centering
\includegraphics[width=0.5\linewidth]{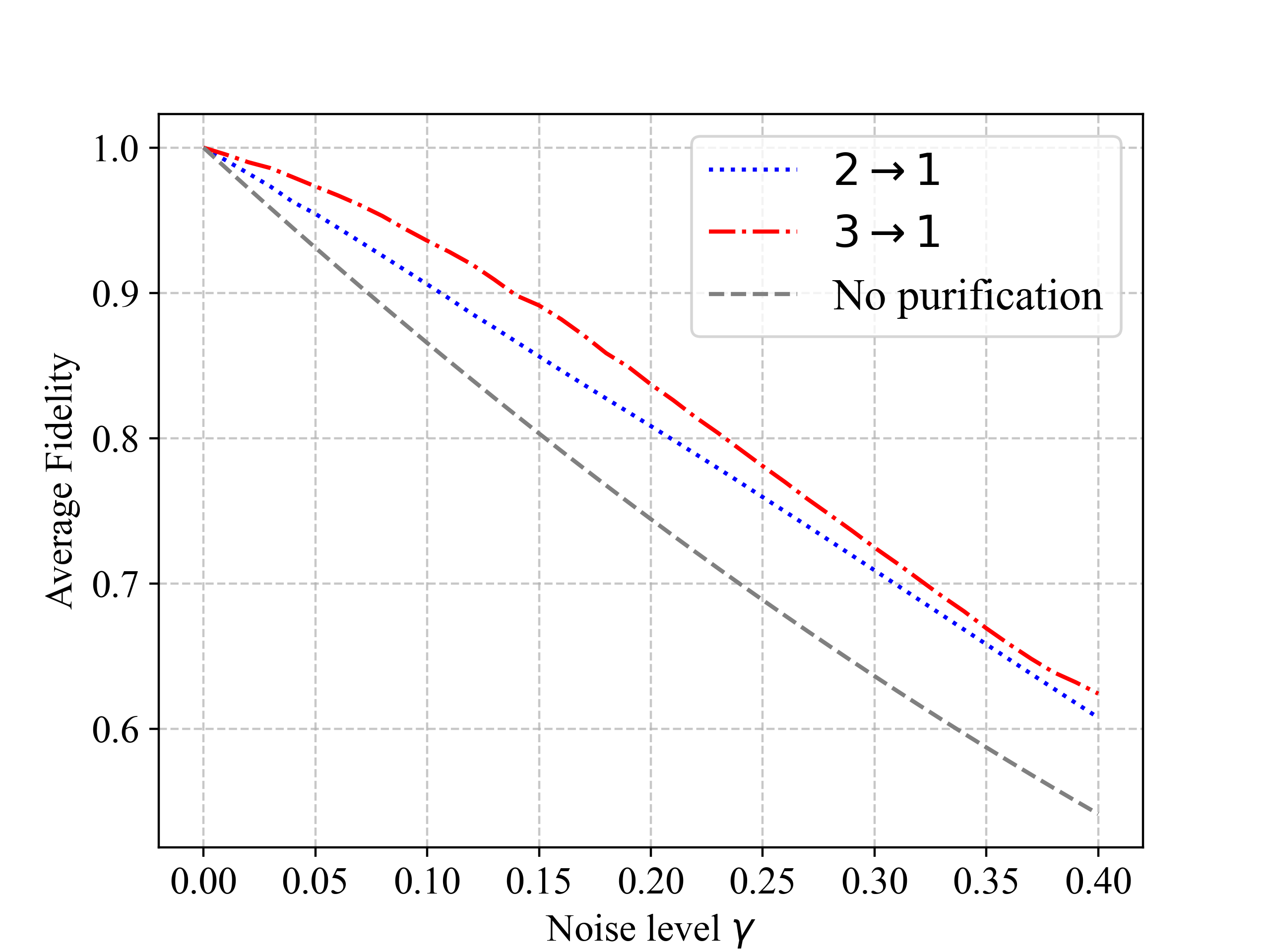}
    \caption{he performance comparison between $2\rightarrow1$ and $3\rightarrow 1$ purification protocols.}
    \label{fig:3-1}
\end{figure}

\update{From the numerical results shown in FIG.~\ref{fig:3-1}, the $3\rightarrow1$ protocol outperforms the $2\rightarrow1$ protocol, which is an intuitive result as $3\rightarrow1$ protocol consumes more resources. Theoretically, the more copies that are consumed in the purification protocol, the higher the fidelity that will be achieved. Of course, meanwhile, it is becoming harder to train the protocol via the optimization-based method.}

\end{document}